\documentclass{article}
\usepackage{authblk}

\usepackage{cuted}

\usepackage{graphicx} 
\usepackage{amsmath}
\usepackage{algorithm}
\usepackage{algpseudocode}
\usepackage{caption}
\usepackage{subcaption}
\usepackage{comment}
\usepackage{mathtools}
\usepackage[dvipsnames]{xcolor}
\usepackage{thmtools}
\usepackage{thm-restate}
\usepackage[utf8]{inputenc}
\usepackage{amsmath}
\usepackage{amssymb}
\usepackage{ulem}
\usepackage{bm}      
\usepackage[colorlinks,linkcolor=blue,citecolor=red]{hyperref}

\usepackage{array}
\usepackage{multirow}
\usepackage{hhline}

\usepackage{color}
\usepackage{graphicx}
\usepackage{amsfonts}
\usepackage{soul}

\usepackage[paperwidth=199.8mm,paperheight=287mm,centering,hmargin=25mm,vmargin=2cm]{geometry}
\usepackage{theorem}
\usepackage{minitoc}
\usepackage[capitalize]{cleveref}

\usepackage{appendix}

\AtBeginEnvironment{appendices}{\crefalias{section}{appendix}}

\makeatletter
\def\algbackskip{\hskip-\ALG@thistlm}
\makeatother

\newcommand{\bp}[0]{\theta}

\newcommand{\SC}[0]{\mathbb{W}_{\rho_{\beta}}\left(\cE\right)}

\newcommand{\jiaqing}[1]{{\color{red}{[JJQ: #1]}}}

\newcommand{\bB}[0]{\mathrm{Bohr}(H)}

\newcommand{\cP}[0]{\mathcal{P}}
\newcommand{\cF}[0]{\mathcal{F}}

\newcommand{\bV}[0]{\mathbb{V}\mathrm{ar}}
\newcommand{\bCov}[0]{\mathbb{C}\mathrm{ov}}
\newcommand{\bCor}[0]{\mathbb{C}\mathrm{orr}}
\newcommand{\bPr}[0]{\mathbb{P}\mathrm{r}}

\newcommand{\ket}[1]{\left\vert #1 \right\rangle}

\newcommand{\bra}[1]{\left\langle #1 \right\vert}
\newcommand{\ketbra}[2]{\vert#1\rangle\!\langle#2\vert}

\newcommand{\bM}[0]{\mathbb{M}}

\newcommand{\bC}[0]{\mathbb{C}}
\newcommand{\bR}[0]{\mathbb{R}}
\newcommand{\bE}[0]{\mathbb{E}}
\newcommand{\cT}[0]{\mathcal{T}}
\newcommand{\cO}[0]{\mathcal{O}}
\newcommand{\cN}[0]{\mathcal{N}}
\newcommand{\cM}[0]{\mathcal{M}}
\newcommand{\cE}[0]{\mathcal{E}}

\newcommand{\cL}[0]{\mathcal{L}}

\newcommand{\LL}[0]{\mbox{\bf L}}

\newtheorem{definition}{Definition}
\newtheorem{assumption}{Assumption}
\newtheorem{observation}{Observation}

\newtheorem{remark}{Remark}
\newtheorem{proposition}[definition]{Proposition}
\newtheorem{lemma}[definition]{Lemma}
\newtheorem{fact}{Fact}
\newtheorem{theorem}{Theorem}
\newtheorem{corollary}[definition]{Corollary}

\newenvironment{proof}{{\bf Proof:}}{\hfill\rule{2mm}{2mm}}

\renewcommand{\d}{\mathrm{d}}
\newcommand{\Tr}{\operatorname{tr}}
\newcommand{\Var}{\operatorname{var}}
\newcommand{\polylog}{\operatorname{poly}\log}
\newcommand{\poly}{\operatorname{poly}}

\newcommand{\bO}[0]{O}

\begin{document}

\title{Predicting properties of quantum thermal states from a single trajectory}

\author[1,\dag] {Jiaqing Jiang}
\author[1,2,\dag] {Jiaqi Leng}
\author[2,3,*] {Lin Lin}
\affil[1]{Simons Institute for the Theory of Computing, University of California, Berkeley, USA}
\affil[2]{Department of Mathematics, University of California, Berkeley, USA}
\affil[3]{Applied Mathematics and Computational Research Division, Lawrence Berkeley National Laboratory, Berkeley, USA}

\renewcommand{\LL}[1]{\textcolor{blue}{[LL:#1 ]}}
\newcommand{\JL}[1]{\textcolor{orange}{[JL:~#1]}}

\maketitle

\renewcommand{\thefootnote}{\dag}
\footnotetext{These authors contributed equally to this work.}
\renewcommand{\thefootnote}{*}
\footnotetext{\href{mailto:linlin@math.berkeley.edu}{linlin@math.berkeley.edu}}
\renewcommand{\thefootnote}{\arabic{footnote}} 

\begin{abstract}
Estimating thermal expectation values of observables is a fundamental task in quantum physics, quantum chemistry, and materials science. While recent quantum algorithms have enabled efficient quantum preparation of thermal states, observable estimation via sampling remains costly: a straightforward implementation separates successive measurements by a full mixing time in order to ensure samples are approximately independent. In this work, we show that the sampling cost can be substantially reduced by using a single Gibbs-sampling trajectory. After a single burn-in period, we interleave coherent measurements that satisfy detailed balance with respect to the target Gibbs state. The efficiency of this approach rests on the fact that, in many settings, the autocorrelation time can be significantly shorter than the mixing time. For energy estimation (and more generally for observables commuting with the Hamiltonian), we implement the required measurements using Gaussian-filtered quantum phase estimation with only logarithmic overhead. We also introduce a weighted operator Fourier transform technique to mitigate measurement-induced disturbance for general observables.
\end{abstract}

\tableofcontents

\section{Introduction}

Predicting the properties of quantum thermal states is a fundamental task in quantum many-body physics and is central to understanding the thermodynamic behavior of molecules and materials, including phase diagrams, binding energies, and response functions. 
Beyond physical simulations, thermal expectation estimation is also closely linked to quantum machine learning, for instance in training quantum Boltzmann machines that require efficient evaluation of gradients and expectation values~\cite{amin2018quantum,wiebe2019generative}.

Recently, substantial progress has been made in dissipative quantum algorithms for preparing thermal states, including Lindblad dynamics~\cite{chen2025efficient,ding2025efficient,rall2023thermal}, quantum Metropolis algorithm~\cite{jiang2024quantum,temme2011quantum,gilyen2024quantum}, and system--bath interaction methods~\cite{ding2025end,hagan2025thermodynamic,Hahn2025Efficient}. 
These quantum algorithms can be viewed as quantum generalizations of Markov chain Monte Carlo (MCMC) methods, and offer a promising path toward predicting thermal properties of quantum systems, i.e., estimating expectation values of observables with respect to the Gibbs state. A straightforward sampling-based approach is to run multiple independent trajectories of a Gibbs-sampling algorithm and measure at the end of each trajectory; because measurements disturb the system, two consecutive measurements along a single trajectory may need to be separated by a time comparable to the mixing time (the worst-case time required to drive an arbitrary initial state close to stationarity) to ensure that the resulting samples are approximately independent. In this work, we show that this Gibbs-sampling cost can be significantly reduced: using a single trajectory, one can obtain effectively independent samples on a timescale that can be much shorter than the mixing time (for the observable of interest), enabling a more efficient method for observable estimation.

At the heart of this reduction is the notion of \textit{autocorrelation time}~\cite{sokal1997monte}, which is a standard concept in classical MCMC. In classical simulation, a common practice~\cite{gilks1995markov} is to run a single chain, first performing a warm-up (or \textit{burn-in}) until the chain is close to stationarity, and then to collect samples spaced so that successive samples are effectively independent for the observable of interest. The autocorrelation time quantifies this spacing, and can be much smaller than the mixing time.

In the following, we focus first on estimating the average energy of quantum thermal states to highlight the challenges, intuitions, and strategies for using a single trajectory to estimate observable values in the quantum setting.  We then discuss how the approach extends to observables whose estimation can be implemented in a thermodynamically reversible manner (i.e., via an operation that satisfies detailed balance). We also investigate a weighted operator Fourier transform technique to mitigate disturbances when estimating more general observables.

\subsection{Energy estimation from a single trajectory}

 To begin with, we introduce some notation. For a local Hamiltonian $H$ and inverse temperature $\beta$, the thermal equilibrium state is defined as the Gibbs state 
$\rho_{\beta} := \exp(-\beta H)/Z_{\beta},$
where $Z_{\beta}$ is the normalization factor known as the partition function. For concreteness, in this section, we focus on the task of estimating the average energy of the Gibbs state, i.e., $\Tr(H \rho_{\beta})$. We denote the variance of $H$ as 
$\Var_H := \Tr(H^2 \rho_{\beta}) - \Tr(H \rho_{\beta})^2.$

\begin{figure}[h]
    \centering\includegraphics[width=0.95\textwidth,trim={0 5cm 0 5cm},clip]{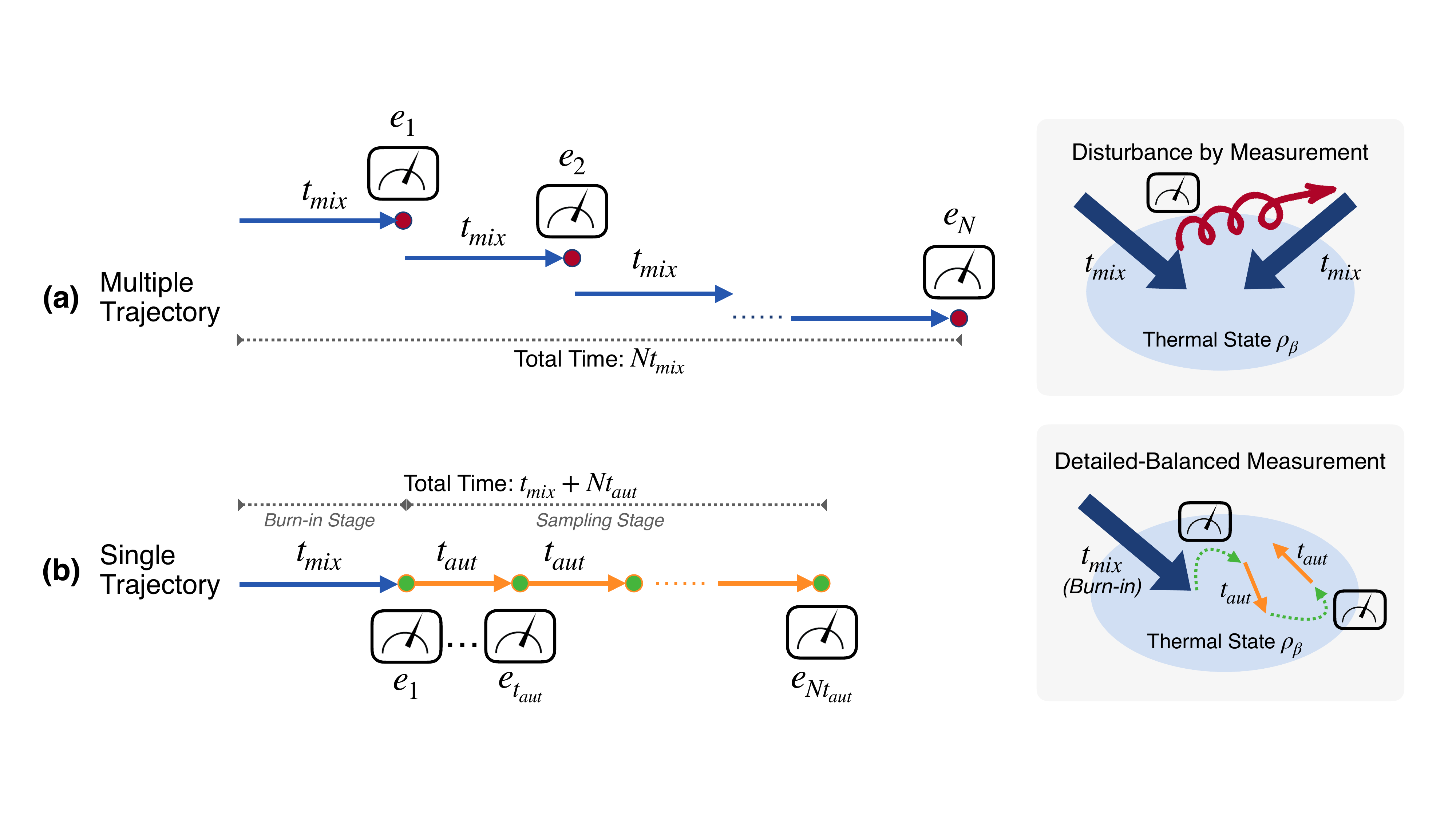}
\captionsetup{width=0.9\textwidth, font=small}
    \caption{Two approaches for estimating $\Tr(H \rho_{\beta})$ using a quantum Gibbs sampling algorithm (Gibbs sampling channel).
    (a) Multiple-trajectory approach: Run $N$ independent trajectories, each applying the Gibbs sampling channel for the mixing time $t_{mix}$ followed by a measurement of $H$. Conventionally, $t_{mix}$ is used since quantum measurement collapses the state and may effectively reinitialize the Gibbs sampling dynamics. Estimating $\Tr(H\rho_\beta)$ to precision $\epsilon$ requires $N = \mathcal{O}(\Var_H / \epsilon^2)$.  
  (b) Single-trajectory approach (this work):  First, run a burn-in stage by applying the Gibbs sampling channel for the mixing time $t_{mix}$. Then, enter the sampling stage, where measurements are performed after each fixed evolution time 
  $\Delta t$ (a tunable parameter, set to $\Delta t = 1$ for simplicity).  
    Measurements are chosen to satisfy detailed balance, implemented via Gaussian-filtered QPE, thus preserving the Gibbs state ensemble and avoiding extra burn-in periods. 
    In this approach, effectively independent samples are obtained approximately every autocorrelation time $t_{aut}$. The autocorrelation time leverages a warm start from the previous measurement and depends on the observable of interest, and can therefore be much shorter than the mixing time.
  As a result, 
  the total Gibbs sampling evolution time needed is  
  $t_{mix} + N t_{aut}$, which is significantly smaller than the 
$N t_{mix}$ required in the multiple-trajectory approach. 
}
\label{fig:single_trajectory}
\end{figure}

\paragraph{Multiple-trajectory approach} 
As illustrated in \cref{fig:single_trajectory}(a), a straightforward way to estimate 
$\Tr(H\rho_{\beta})$ 
using a Gibbs sampling algorithm is to run multiple 
independent trajectories. Each trajectory evolves for the mixing time $t_{mix}$.
These trajectories then yield independent samples from which the average energy can be estimated. To achieve $\epsilon$-precise estimation of $\Tr(H\rho_{\beta})$, it suffices to run 
$N=\mathcal{O}(\Var_H/\epsilon^2)$ independent trajectories. 
Equivalently, this 
multi-trajectory procedure can be viewed as running a single long trajectory in which consecutive samples are  separated by  $t_{mix}$, where $t_{mix}$ is conventionally used  since a quantum measurement  collapses the state and may effectively reinitialize the Gibbs sampling dynamics.

\paragraph{Single-trajectory approach}

Instead of separating samples by the full mixing time $t_{mix}$, as in 
\cref{fig:single_trajectory}(b), we use a two-stage algorithm to estimate 
$\Tr(H\rho_{\beta})$. Starting from an arbitrary initial state, 
our algorithm proceeds as follows:
\begin{itemize}
    \item[(1)]  We first perform a \textit{burn-in stage}, during which the Gibbs sampling channel is applied for one mixing time, allowing the chain to become approximately stationary, i.e., making the current state close to the Gibbs state.
    \item[(2)] 
   We then proceed to the \textit{sampling stage}, 
  where measurements are performed after each fixed evolution time $\Delta t$ along the trajectory. Note that the measurements are chosen to satisfy detailed balance, implemented via Gaussian-filtered quantum phase estimation (GQPE, explained later), \textit{ensuring that during the sampling stage the chain remains in the Gibbs state ensemble} so that no additional burn-in is required. The evolution time $\Delta t$ between successive measurements is treated as a tunable parameter. For simplicity we set $\Delta t=1$.
\end{itemize}

Denoting the outcome of the $j$-th measurement as $e_j$, we estimate $\Tr(H \rho_{\beta})$ as the empirical average of these outcomes along the trajectory, that is, 
\begin{equation}
    X_K:=\frac{1}{K}(e_1 + \cdots + e_K).
\end{equation}
To build intuition, note that when $H$ is a classical Hamiltonian like the Ising model, the Gibbs-sampling algorithm reduces to a classical Markov chain over the computational basis, and $e_j$ is simply the 
 energy of the current configuration.

This two-stage approach is widely used in classical MCMC and can substantially reduce the Gibbs sampling costs in practice. By the ergodic theorem for classical Markov chains, the empirical average $X_K$ converges to $\Tr(H \rho_\beta)$, with the convergence rate determined by how quickly the samples $\{e_j\}_j$ become effectively independent --- a timescale known as the \textit{autocorrelation time}.
The key observation that leads to the Gibbs sampling reduction is that the autocorrelation time can be much smaller than the mixing time, for two intuitive reasons. First, the autocorrelation time is an equilibrium notion, which quantifies decorrelation once the chain is already in stationarity. In contrast, the mixing time measures the \textit{worst-case time} required to drive an arbitrary initial state to the Gibbs distribution. Second, the autocorrelation time is \textit{observable-dependent}, and can therefore be shorter than the mixing time for observables that are local or exhibit certain symmetries. To build intuition, in \cref{sec:example} we provide examples where the autocorrelation time can be substantially smaller than the mixing time.
More generally, for reversible classical Markov chains, the autocorrelation time can be bounded by the inverse spectral gap, while mixing-time bounds typically include an additional factor of order $\log(\pi_{\min}^{-1})$~\cite{levin2017markov}. In many models this factor scales linearly in the system size.
In specific systems, the separation can be even more dramatic: for example, for low-temperature Glauber dynamics of the 2D Ising model, the mixing time can be sub-exponential in the system size~\cite{cesi1996two}, with scaling of the form $\exp(\Theta(\sqrt{n}))$ for $n$ spins~\cite{cesi1996two}; while existing literature~\cite{sokal1997monte,martinelli2004lectures,gheissari2022low} suggests that the autocorrelation time for the energy, which is symmetric under global spin flips, might grow at most polynomially with system size.

Despite the  
success in classical MCMC, however, it is not obvious that the same two-stage strategy  also works well when applied to quantum 
Gibbs sampling without incurring significant overhead. In the following section, we outline the 
two main challenges and summarize our observations and strategies for addressing them.
In particular, we formalize the notion of autocorrelation time for quantum Gibbs sampling and establish an upper bound in terms of the spectral gap of the underlying quantum Gibbs sampler, as summarized in the following:

\begin{theorem}[Informal]\label{thm:intro_main} 
 The autocorrelation time $t_{aut}$ in the single-trajectory algorithm in \cref{fig:single_trajectory} is upper bounded by the reciprocal of the spectral gap of the Gibbs sampler.
 To estimate $\Tr(H \rho_\beta)$ to precision $\epsilon$ with high probability, it suffices to take 
 \begin{equation}
K = N  t_{aut},\,\, \text{ for } N=\mathcal{O}(\Var_H / \epsilon^2).
 \end{equation}
 Consequently, the total Gibbs sampling evolution time, $t_{mix} + N t_{aut}$, can be much smaller than $N t_{mix}$, substantially reducing the Gibbs sampling cost. 
Moreover, each measurement in our single-trajectory algorithm requires only a logarithmic number of ancilla qubits and logarithmic-time controlled Hamiltonian evolution of $H$, where the logarithmic scaling is with respect to both the precision $\epsilon$ and the spectral norm $\|H\|$.  
\end{theorem}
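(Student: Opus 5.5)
The plan is to model the sampling stage as a classical stationary process built from two quantum channels. Let $\mathcal{P}=e^{\mathcal{L}}$ be the Gibbs sampling channel run for time $\Delta t=1$. Let $\mathcal{M}$ be the measurement instrument obtained from Gaussian-filtered QPE, with Kraus operators $\{M_e\}_e$, each a Gaussian filter function of $H$ centred at outcome $e$. Because each $M_e$ commutes with $H$, the unselective channel $\mathcal{M}=\sum_e M_e(\cdot)M_e^\dagger$ fixes $\rho_\beta$ and satisfies KMS detailed balance with respect to $\rho_\beta$. If the burn-in leaves the state $\delta$-close to $\rho_\beta$, contractivity of channels keeps it $\delta$-close at every later step, so the joint law of $(e_1,\dots,e_K)$ is $K\delta$-close in total variation to the law started from $\rho_\beta$ exactly. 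I will therefore analyse the stationary case and absorb $\delta$ by running the burn-in slightly longer, at $\log K$ cost.

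In the stationary case, write $f(e)=e$ and set $A=\sum_e e\, M_e^\dagger M_e$, which is the filtered energy observable. The mean satisfies $\mathbb{E}[e_j]=\Tr(A\rho_\beta)$. Because the Gaussian window is narrow, $\|A-H\|$ is at most about $\epsilon/2$; I will choose the width and precision accordingly, which gives ancilla count and evolution time $O(\log(\|H\|/\epsilon))$. For $i<j$, the two-point function is
\[
\mathbb{E}[e_ie_j]=\sum_{e,e'} e\,e'\,\Tr\!\big(M_{e'}^\dagger M_{e'}\,(\mathcal{P}\mathcal{M})^{j-i-1}\mathcal{P}(M_e\rho_\beta M_e^\dagger)\big).
\]
Now define the operator $B=\sum_e e\,M_e\rho_\beta M_e^\dagger$ and pass to the KMS (GNS) inner product $\langle X,Y\rangle_{\rho_\beta}=\Tr(X^\dagger\rho_\beta^{1/2}Y\rho_\beta^{1/2})$. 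In this inner product $\mathcal{P}$ and $\mathcal{M}$ are self-adjoint contractions, and $\mathcal{P}$ has spectral gap $\lambda$ on the complement of the identity. Consequently $\|\mathcal{P}\mathcal{M}\|\le e^{-\lambda}$ on mean-zero operators. This gives the covariance bound
\[
|\mathrm{Cov}(e_i,e_j)|\le e^{-\lambda(j-i)}\,C\,\Var_H .
\]
The constant $C$ comes from the weighted norms of $B$ and $A$, both of which are controlled by the variance of $H$ up to the filter error.

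Summing the covariances gives
\[
\Var(X_K)\le \frac{\Var_H}{K}\Big(1+2\sum_{k\ge1}e^{-\lambda k}\Big)=O\!\left(\frac{\Var_H\, t_{aut}}{K}\right),\qquad t_{aut}:=O(1/\lambda).
\]
With $K=N t_{aut}$ and $N=O(\Var_H/\epsilon^2)$, Chebyshev's inequality gives precision $\epsilon/2$ with constant probability. Combining this with the bias bound $\|A-H\|\le\epsilon/2$ yields precision $\epsilon$. To get high probability, I will take the median of $O(\log 1/\eta)$ blocks, each separated by a gap of length $t_{aut}\log$, so that the blocks are nearly independent. The total evolution time is then $t_{mix}+Nt_{aut}$ up to logarithmic factors.

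The main obstacle is the covariance step. The measurement disturbs the state, and the operators appearing in the two-point function are $M_e\rho_\beta M_e^\dagger$ rather than a clean product $X\rho_\beta$. The argument needs the detailed balance of $\mathcal{M}$ so that the composite $\mathcal{P}\mathcal{M}$ still contracts in the KMS norm, and it needs the weighted norm of the centred $B$ to be controlled by $\Var_H$. Commutation of $M_e$ with $H$, and hence with $\rho_\beta$, makes this tractable: $B=\tfrac12\{A',\rho_\beta\}$-type with $A'$ a function of $H$. I would first verify this identity and then bound the norm by the spectral variance.
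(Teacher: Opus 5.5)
Your covariance analysis is sound, but the resource claim in the statement is not supported by the mechanism you give. You assert $\|A-H\|\le\epsilon/2$ \emph{because the Gaussian window is narrow}, and in the same step claim evolution time $O(\log(\|H\|/\epsilon))$. Narrowness cannot deliver both. Realizing a filter of energy width $\sigma$ requires controlled evolution up to times $\gtrsim 1/\sigma$. If the bias is controlled only by the width (it is the first moment of the squared filter), you need $\sigma\lesssim\epsilon$, i.e.\ time $\gtrsim 1/\epsilon$. That is exactly the high-precision QPE cost the theorem is meant to avoid.

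The missing idea is that the Gaussian filter is unbiased at \emph{every} width. Since $g_\gamma^2$ is an even probability density, $\int w\,g_\gamma^2(w-E)\,\mathrm{d}w=E$, so $\sum_e e\,M_e^\dagger M_e=H$ up to discretization error, independently of $\gamma$. One therefore takes an essentially constant width in absolute energy units; the paper uses $\gamma^{-1}=\mathcal{O}(\kappa\log^{1/2}(\kappa/\epsilon))$ for the normalized observable $H/\kappa$. This adds $\mathcal{O}(1)$ to the outcome variance, $\bV_{\rho_\beta}(\mathcal{M})=\Var_H+\mathcal{O}(1)$; it is not a negligible ``filter error.'' The precision $\epsilon$ then enters only through truncation and Riemann-sum errors of order $e^{-\Omega(1/\gamma^2)}$, $e^{-\Omega(1/(\gamma h)^2)}$ and $e^{-\Omega(\gamma^2/h^2)}$ in normalized units (via Poisson summation). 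Hence $h^{-1}$ and the maximal evolution time grow only logarithmically. Without this idea, your plan proves the sample-complexity bound but not the logarithmic-cost part of the statement.

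The covariance step, by contrast, is correct and takes a different, somewhat lighter route than the paper. The paper measures twice per step so that $\mathcal{M}\mathcal{N}\mathcal{M}$ is self-adjoint. It then proves $gap(\mathcal{M}\mathcal{N}\mathcal{M})\ge gap(\mathcal{N})$ variationally, expands the autocorrelation in the eigenbasis, and bounds the spectral covariance weight by $\bCov_{\rho_\beta}(\mathcal{M})$ via Cauchy--Schwarz. You only need that the Heisenberg duals $\mathcal{P}^\dagger,\mathcal{M}^\dagger$ (these are what is self-adjoint in the KMS product you wrote) are contractions preserving $I^{\perp}$. Then $\|(\mathcal{P}^\dagger\mathcal{M}^\dagger)|_{I^\perp}\|\le e^{-\lambda}$ by submultiplicativity. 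Take $A_c=\sum_e(e-v)M_e^2$ and $B_c=\rho_\beta^{1/2}A_c\rho_\beta^{1/2}$; this is valid because the $M_e$ are Hermitian functions of $H$, the analogue of the paper's requirement that $\widehat{\mathcal{M}}$ be detailed balanced. You get $|\mathrm{Cov}(e_i,e_j)|\le e^{-\lambda(j-i)}\bCov_{\rho_\beta}(\mathcal{M})\le e^{-\lambda(j-i)}\bV_{\rho_\beta}(\mathcal{M})$ for the natural one-measurement-per-step chain. This needs no uniqueness lemma and no gap-comparison lemma.

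Two minor points. First, treating the whole sampling stage as a single instrument gives total variation $\delta$, not $K\delta$. Second, separating median-of-means blocks by only $t_{aut}\log(\cdot)$ does not obviously make them nearly independent, because the conditional post-measurement state can be far from $\rho_\beta$. Either separate blocks by $t_{mix}$, or keep Chebyshev's $1/\eta$ as the paper does.
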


\cref{thm:intro_main} can be generalized to any observable that commutes with $H$, such as $H^k$. Formal statements are provided in   \cref{lem:aut_gap}, \cref{thm:time_average} and \cref{thm:M_com}. 
The total cost of implementing measurements can be reduced by skipping unnecessary measurements, as discussed in \cref{sec:time_average} and \cref{remark:skip_measure}.

\vspace{0.3em}

\underline{\textit{Practical implementation}}\,\,
We remark that our algorithm can also be made \textit{agnostic}, i.e., without prior knowledge of the mixing time or autocorrelation time. In applications where reliable estimates of these times are unavailable, one can run the burn-in period for an arbitrary fixed time $T_{\rm burn}$, then enter the sampling period and take measurements after every fixed evolution time $\Delta t$. 
The empirical average along the trajectory, $(e_1 + \dots + e_K)/K$, still converges to the observable value $\Tr(H \rho_\beta)$, albeit at a slower rate due to the biased samples collected before the chain reaches stationarity. 

In practice, one can monitor the empirical average along the Gibbs sampling trajectory and stop once it converges. In this sense, the agnostic version of our algorithm can be used as an empirical method for certifying the convergence of quantum Gibbs sampling, analogous to the Gelman--Rubin diagnostic used to assess convergence in classical MCMC~\cite{gelman1992inference,vats2021revisiting}.
To our knowledge, no similar diagnostic protocol exists for certifying the convergence of quantum Gibbs sampling in the current literature; the only related work we are aware of certifies convergence for certain quantum Markov chains using a coupling-from-the-past technique~\cite{francca2017perfect}.

\subsection{Two challenges and our strategy}\label{sec:2challenge}

To help better understand \cref{thm:intro_main}, in this section we discuss two main challenges encountered when estimating $\Tr(H \rho_{\beta})$ from a single trajectory, and explain how we address them.

\vspace{0.3em}

\underline{\textit{First challenge: Disturbance from measurement}}\,\,
The first challenge is that, unlike in classical Gibbs sampling, where evaluating $e_j= \langle x_j | H | x_j \rangle$ simply queries the current configuration $\ket{x_j}$ and leaves the Markov chain untouched, any intermediate measurement performed during quantum Gibbs sampling inevitably interferes with the sampling process. This interference occurs even if we assume that $H$ can be measured perfectly in its eigenbasis.
The reason is that in many quantum Gibbs sampling algorithms~\cite{chen2025efficient,ding2025efficient}, the transitions (or ``jumps'') between intermediate states are highly entangling. Even if the system begins in an eigenstate of $H$, the application of a single update step typically produces a state that is a superposition of many energy eigenstates, often spanning a non-negligible energy range.
Measuring the system in the energy eigenbasis at such an intermediate step collapses the superposition and therefore disrupts the intended evolution of the quantum Gibbs sampler. Consequently, inserting measurements can potentially increase the autocorrelation time, introducing an undesirable overhead.
It is not even immediately clear whether inserting measurements necessarily worsens the mixing time; that is, the mixing time of the Gibbs sampler composed with a measurement channel could in principle be larger than that of the original Gibbs sampler.

We observe that although intermediate measurements inevitably disturb the Gibbs-sampling dynamics, their effect on the mixing time can be controlled when the measurement is compatible with the Gibbs state. In this case, one can show that the spectral gap of the Gibbs sampler composed with the measurement is never smaller than that of the original Gibbs sampler. Intuitively, the required compatibility condition is that the measurement leaves the Gibbs state unchanged, so no bias is introduced and no additional burn-in period is needed. Technically, the analysis requires a slightly stronger condition, namely, \textit{detailed balance}, meaning that the measurement channel is thermodynamically reversible. In the specific task of estimating the energy $\Tr(H \rho_{\beta})$, this compatibility condition is naturally satisfied when performing a coherent measurement of $H$, for instance through quantum phase estimation. To avoid confusion, we emphasize that even if the measurement fixes the Gibbs state, one must still evolve under the Gibbs-sampling channel for one autocorrelation time to obtain effectively \textit{independent} samples. In summary, we prove that

\begin{lemma}[Informal]
    Let $\mathcal{N}$ denote a quantum Gibbs-sampling channel that satisfies detailed balance, such as those constructed in~\cite{chen2025efficient,ding2025efficient}. Let $\mathcal{M}$ be a measurement channel that also satisfies detailed balance with respect to $\rho_{\beta}$. Then the spectral gap of the composed channel $\mathcal{M}\mathcal{N}\mathcal{M}$ is at least as large as the spectral gap of $\mathcal{N}$. Moreover, the autocorrelation time is  upper bounded by  (up to a small factor)  the relaxation time of the Gibbs sampling channel, i.e., the reciprocal of the spectral gap of $\mathcal{N}$.
\end{lemma}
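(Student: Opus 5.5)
We work in the Hilbert space of operators with the KMS inner product $\langle X,Y\rangle_{\rho_\beta}:=\Tr\bigl(X^\dagger \rho_\beta^{1/2} Y \rho_\beta^{1/2}\bigr)$. We take the Heisenberg-picture adjoints $\mathcal{N}^\dagger,\mathcal{M}^\dagger$ as the objects whose spectra we study. Detailed balance (KMS) for a channel is equivalent to its adjoint being self-adjoint with respect to $\langle\cdot,\cdot\rangle_{\rho_\beta}$. Unitality and complete positivity then give that $\mathcal{N}^\dagger$ and $\mathcal{M}^\dagger$ are self-adjoint contractions on this space. Both fix the identity, and both leave invariant the orthogonal complement $\{X:\langle I,X\rangle_{\rho_\beta}=\Tr(\rho_\beta X)=0\}$.

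The spectral gap of $\mathcal{N}$ is defined through
\[
\lambda_2(\mathcal{N}):=\sup_{X\perp I,\,X\neq 0}\frac{|\langle X,\mathcal{N}^\dagger X\rangle_{\rho_\beta}|}{\langle X,X\rangle_{\rho_\beta}},\qquad \mathrm{gap}=1-\lambda_2 .
\]
Because $\mathcal{N}^\dagger$ is self-adjoint, $\lambda_2$ is the operator norm of $\mathcal{N}^\dagger$ restricted to $I^\perp$. For the discrete-time channel we take $\mathcal{N}=e^{t\mathcal{L}}$, in which case the gap is $1-e^{-t\,\mathrm{gap}(\mathcal{L})}$; the argument is the same either way.

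\textbf{Gap comparison.} The composed channel $\mathcal{M}\mathcal{N}\mathcal{M}$ has adjoint $\mathcal{M}^\dagger\mathcal{N}^\dagger\mathcal{M}^\dagger$, which is again KMS-self-adjoint, so it satisfies detailed balance and its gap is defined by the same variational formula. For $X\perp I$, set $Y=\mathcal{M}^\dagger X$. Then $Y\perp I$, since $\langle I,\mathcal{M}^\dagger X\rangle=\langle \mathcal{M}^\dagger I,X\rangle=\langle I,X\rangle=0$, and $\|Y\|_{\rho_\beta}\le\|X\|_{\rho_\beta}$ because $\mathcal{M}^\dagger$ is a contraction. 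Hence
\[
|\langle X,\mathcal{M}^\dagger\mathcal{N}^\dagger\mathcal{M}^\dagger X\rangle|=|\langle Y,\mathcal{N}^\dagger Y\rangle|\le \lambda_2(\mathcal{N})\|Y\|^2\le\lambda_2(\mathcal{N})\|X\|^2 .
\]
Taking the supremum gives $\lambda_2(\mathcal{MNM})\le\lambda_2(\mathcal{N})$, which is the gap inequality. The one nontrivial ingredient is the KMS-norm contractivity of a unital completely positive KMS-symmetric map. I would prove it with the standard argument: $\Phi^\dagger$ is also trace-preserving-dual with respect to $\rho_\beta$, and Kadison–Schwarz combined with interpolation between the $\rho_\beta$-weighted $L^1$ and $L^\infty$ norms bounds its norm on $L^2(\rho_\beta^{1/2}\cdot\rho_\beta^{1/2})$ by $1$. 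Alternatively, self-adjointness together with the bound of $1$ on the spectrum of a unital positive map on the finite-dimensional space yields the same conclusion.

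\textbf{Autocorrelation bound.} For the sampling stage, one period consists of the channel $\mathcal{N}$ followed by a measurement $\mathcal{M}$. By stationarity, $\rho_\beta$ is fixed by both maps, and I would write the measurement as an instrument $\{\mathcal{M}_e\}$ with $\mathcal{M}=\sum_e\mathcal{M}_e$. Let $O$ be the operator-valued observable whose value is $\sum_e e\,\mathcal{M}_e^\dagger(I)$; for QPE-type measurements this is approximately $H$. The lag-$k$ covariance of the outcomes then takes the form
\[
\bCov(e_j,e_{j+k})=\langle \tilde O, (\mathcal{M}^\dagger\mathcal{N}^\dagger)^{k-1}\mathcal{M}^\dagger\tilde O\rangle_{\rho_\beta}-\text{mean}^2,
\]
where $\tilde O$ is a centered KMS-symmetric rewriting of the outcome operator. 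Expressing the product in terms of the symmetric composite $\mathcal{M}^\dagger\mathcal{N}^\dagger\mathcal{M}^\dagger$ (using $\mathcal{M}^\dagger$ to split the factors) and applying Cauchy–Schwarz, I would obtain
\[
|\bCov(e_j,e_{j+k})|\le\lambda_2(\mathcal{N})^{k-1}\,\Var,
\]
where $\Var$ is the stationary variance. Summing the geometric series gives the integrated autocorrelation time $t_{aut}\le 1+2\sum_k\lambda_2^{k}\lesssim 2/(1-\lambda_2)$, which is $O(1/\mathrm{gap}(\mathcal{N}))$.

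\textbf{Main obstacle.} The hardest step is rewriting the outcome correlation in KMS-symmetric form. The measured quantity enters through Schrödinger-picture instruments rather than through the plain operator $O$, so the symmetry of $\mathcal{M}$ has to be used carefully to move the $\mathcal{M}_e$ onto both sides. My first attempt would assume each $\mathcal{M}_e$ individually satisfies detailed balance, as holds for the Gaussian-filtered QPE. Under that assumption, $\sum_e e\,\mathcal{M}_e^\dagger$ is a KMS-self-adjoint map $\mathcal{A}$ with $\|\mathcal{A}\|\le\|H\|$. The covariance then equals $\langle \mathcal{A}I,(\mathcal{N}^\dagger\mathcal{M}^\dagger)^{k-1}\mathcal{N}^\dagger\mathcal{A}I\rangle$, and the contraction argument above bounds it by $\lambda_2^{k}\|\mathcal{A}I-\mu\|^2$.
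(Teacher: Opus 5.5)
Your proposal is correct. The gap comparison is essentially the paper's proof of Lemma~\ref{lem:MNM_N}. It uses test operators $Y=\cM^\dagger X$, keeps orthogonality to $I$ via $\cM^\dagger(I)=I$ and detailed balance, and relies on weighted-norm contractivity of $\cM^\dagger$ (which the paper proves by your second route, Lemma~\ref{lem:contract}). Working with $|\langle X,\cN^\dagger X\rangle_s|$, i.e.\ the norm of $\cN^\dagger$ on $I^\perp$, is a small improvement. You avoid the nonnegativity $\langle X,\cM^\dagger\cN^\dagger\cM^\dagger X\rangle_s\ge 0$ that the paper needs to swap denominators. Also, $\lambda_2(\cM\cN\cM)\le\lambda_2(\cN)<1$ gives uniqueness of the fixed point of $\cM\cN\cM$ for free, which the paper proves separately (Lemma~\ref{lem:unique}).

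For the autocorrelation bound your route genuinely differs. The operative argument is your last paragraph. The formula in your middle paragraph has the roles swapped: the lag-$k$ term should contain $k$ copies of $\cN^\dagger$ and $k-1$ of $\cM^\dagger$, as in your final expression. There you write the lag-$k$ covariance as $\langle u,T_k u\rangle_s$ with $u=\widehat{\cM}^\dagger(I)\perp I$ and $T_k=\cN^\dagger(\cM^\dagger\cN^\dagger)^{k-1}$. You bound it by $\|T_k|_{I^\perp}\|\,\|u\|_s^2\le(1-\mathrm{gap}(\cN))^k\|u\|_s^2$, since each $\cN^\dagger$ has norm $1-\mathrm{gap}(\cN)$ on $I^\perp$ and each $\cM^\dagger$ preserves $I^\perp$ with norm at most $1$.

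Beyond detailed balance of $\cM$ and $\cN$, all you use is KMS-self-adjointness of $\sum_e e\,\cM_e^\dagger$, i.e.\ that $\widehat{\cM}$ is detailed balanced. This is precisely the extra hypothesis of Lemma~\ref{lem:aut_gap}; detailed balance of each $\cM_e$ individually is more than needed. To finish, note $\|u\|_s^2=\langle I,\widehat{\cM}^\dagger\widehat{\cM}^\dagger(I)\rangle_s=\bCov_{\rho_\beta}(\cM)$. So your ``small factor'' is the paper's $\theta=\bCov_{\rho_\beta}(\cM)/\bV_{\rho_\beta}(\cM)$, and the integrated autocorrelation time is at most $\tfrac12+\theta\,(\mathrm{gap}(\cN)^{-1}-1)$. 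Writing $\bV_{\rho_\beta}(\cM)$ in place of $\bCov_{\rho_\beta}(\cM)$, as in your middle paragraph, needs $\theta\le 1$, which the paper verifies for GQPE (Proposition~\ref{prop:gqpe-main-2}).

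The paper instead measures twice per step so that the step channel $\cE=\cM\cN\cM$ is itself detailed balanced. It expands $\widehat{\cE}^\dagger$ in the eigenbasis of $\cE^\dagger$ to get $\sum_{j\ge 2}\alpha_{1j}\alpha_{j1}\lambda_j^{p-1}$, bounds $\lambda_j$ by the gap comparison, and controls the spectral covariance weight $\sum_j|\alpha_{1j}\alpha_{j1}|$ by Cauchy--Schwarz and contractivity (Lemma~\ref{lem:j1_to_1j}).

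Your argument is shorter and needs neither the measure-twice device, nor the gap comparison, nor an eigendecomposition of the composite. It also applies verbatim to Algorithm~\ref{alg:intro_main} with $T_k$ replaced by $\cN^\dagger\cM^\dagger(\cE^\dagger)^{k-1}$, and gives a marginally sharper constant. The paper's route buys an exact spectral representation of the whole correlation function through a single self-adjoint step map, a natural setting for observable-dependent refinements. It is also where the first half of the lemma, $\mathrm{gap}(\cM\cN\cM)\ge\mathrm{gap}(\cN)$, actually enters.
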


A fully formal version of the above lemma appears in \cref{lem:MNM_N} and \cref{lem:aut_gap}.

\vspace{0.3em}

\vspace{0.3em}

\underline{\textit{Second challenge: High cost of QPE}}\,\, The second challenge is that high-precision coherent quantum measurements can be costly, potentially erasing any advantage gained by replacing the mixing time with the autocorrelation time. In particular, implementing an $\epsilon$-precision quantum phase estimation (QPE) algorithm for $H$ requires a cost scaling of at least $\|H\|/\epsilon$~\cite{dalzell2310quantum,lin2022heisenberg,nielsen2010quantum}. For local Hamiltonians written as a sum of $\kappa$ bounded-norm terms, one typically has $\|H\|=\mathcal{O}(\kappa)$. In contrast, in classical Gibbs sampling, evaluating $\langle x_j|H|x_j\rangle$ exactly incurs a cost that is only linear in $\kappa$ without the additional $1/\epsilon$ overhead.

To address this issue, we observe that high-precision coherent measurements are unnecessary: 
it suffices to use a coherent measurement that is  \textit{unbiased}. 
The key insight is that an unbiased measurement is sufficient because many samples are collected along the trajectory anyway. For measuring the energy $\Tr(H \rho_{\beta})$ (or more generally any observable that commutes with $H$), we show that such a measurement can be implemented with only $\log(1/\epsilon)$ overhead relative to classical Gibbs sampling, in contrast to the $1/\epsilon$ overhead required for high-precision QPE. 

Specifically, we construct the measurement from Gaussian-filtered QPE with adjustable variance (GQPE), as introduced in~\cite{moussa2019low}. In particular, we set the variance of this GQPE to be of order one, which only slightly increases the variance of measuring $\rho_{\beta}$, i.e. $\Var_H$, by a constant, while significantly reducing the measurement overhead from $1/\epsilon$ to $\log(1/\epsilon)$.  
For concreteness, \cref{tab:intro_compare} provides a detailed comparison of the costs incurred by different methods of measuring $H$. Further details on this comparison can be found in \cref{sec:energy}. We present a rigorous resource analysis of GQPE in \cref{sec:GQPE}, which, to our knowledge, is not available in prior literature and may be of independent interest.

\begin{center}
\begin{table}[H]
    \centering
    \begin{tabular}{|c|c|c|c|c|c|c|c|}
    \hline
   Hamiltonian  & Measurement    & Gate count        & Depth      & Variance  & Total gate  & Total depth \\
   $H$  &  $\cM$         &     $c_{\cM}$    &    $d_{\cM}$ &   $\Var_{\cM}$        &        $c_{\cM}\times \Var_{\cM}$                  &    $d_{\cM}\times \Var_{\cM}$    \\
    \hline
  Classical &  HP  & $n$      & $1$  & $n$        & $n^2$ & $n$\\
     \hhline{--------}
    \multirow{5}{*}{Quantum} & HP QPE & $n\epsilon^{-1}$ & $\epsilon^{-1}$ & $n$ & $n^2 \epsilon^{-1}$ & $n\epsilon^{-1}$\\
    \hhline{~-------}
     \multirow{5}{*}{} &  *local & $1$ &  $1$ &  $n^2$ &  $n^2$ &  $n^2$  \\
     \hhline{~-------}
     \multirow{5}{*}{} & \textbf{GQPE} & $n$ & $1$ & $\mathbf{n}$ & $\mathbf{n^2}$  & $\mathbf{n}$\\
     \hline 
    \end{tabular}
    \captionsetup{width=0.9\textwidth, font=small}
    \caption{Comparison of costs for different methods of measuring $H$. To make meaningful comparison of circuit depth here we assume $H$ is an $n$-qubit Hamiltonian on a 2D lattice. Gate and depth costs of a single measurement are denoted by $c_{\mathcal{M}}$ and $d_{\mathcal{M}}$, and $\Var_{\mathcal{M}}$ is the measurement variance w.r.t. $\rho_{\beta}$. The variance scalings shown in the table are schematic and model-dependent (they can change with temperature and correlations). Entries suppress a $\polylog(n\epsilon^{-1})$ factor, e.g., $n$ represents $\mathcal{O}(n\,\polylog(n\epsilon^{-1}))$. Measurement channels include: (HP) exact classical evaluation of $\langle x_j | H | x_j \rangle$; (HP QPE) high-precision QPE; (*local) measuring a randomly selected local term, here * represents that the measurement is non-coherent and significantly disturbs the Gibbs state;
     (GQPE) Gaussian-filtered QPE adapted in this work. The key quantities of interest are (1) the variance $\Var_{\mathcal{M}}$, which determines the number of samples need to be taken thus determine the total evolution time of the Gibbs-sampling channel; (2) the total gate and depth costs, $c_{\mathcal{M}}\times \Var_{\mathcal{M}}$ and $d_{\mathcal{M}}\times \Var_{\mathcal{M}}$, which determines the cost of implementing the measurement.
GQPE achieves the lowest costs up to a polylog factor for all the three quantities of interest. 
}
    \label{tab:intro_compare}
\end{table}
\end{center}  

\vspace{0.3em}

\underline{\textit{Mitigating measurement-induced disturbance for general observables}}\,\,
From the previous sections, the \textit{intuition} behind the two-stage algorithm in \cref{fig:single_trajectory} is as follows: if a measurement leaves the Gibbs state ensemble unchanged, no additional burn-in is required during the sampling stage, potentially reducing the overall Gibbs sampling cost. This property generally does not hold for observables that do not commute with the Hamiltonian $H$: estimating $\Tr(O \rho_\beta)$ via a projective measurement of $O$ can significantly disturb the state away from $\rho_\beta$. 

In \cref{sec:WOFT_noncom}, we take a first step toward mitigating this disturbance by exploring a weighted operator Fourier transform (WOFT) technique~\cite{chen2023efficient}. Roughly speaking, for a small parameter $\tau>0$, WOFT constructs an observable $\widehat{O}(\tau)$ that nearly commutes with $\rho_{\beta}$ (equivalently, with $H$); that is, the commutator
$\left[\widehat{O}(\tau),\rho_{\beta}\right]$ approaches zero as $\tau\rightarrow 0$.  
 Moreover, it is guaranteed that $\Tr(\rho_{\beta} \widehat{O}(\tau))= \Tr(\rho_{\beta} O)$. Therefore, to estimate $\Tr(\rho_{\beta} O)$, it suffices to measure the observable $\widehat{O}(\tau)$, which nearly commutes with $\rho_{\beta}$. As a result of this near-commutation property, 
we show that measuring  $\widehat{O}(\tau)$
  via GQPE perturbs $\rho_{\beta}$ only slightly.

We note that although GQPE measurement for $\widehat{O}(\tau)$ perturbs the Gibbs state only slightly when $\tau$ is sufficiently small,  implementing this measurement could incur a higher cost  than the logarithmic overhead achieved previously for observables that commute with $H$, unless additional assumptions are introduced. Besides, our analysis for bounding the autocorrelation time does not directly apply since this measurement does not satisfy \textit{exact} detailed balance, though approximately fixes the Gibbs state.

\subsection{Examples of mixing time and autocorrelation time comparison}\label{sec:example}

We give examples in which the autocorrelation time of a fixed observable is much smaller than the mixing time. The reason is that the mixing time is a worst-case convergence time to the Gibbs state, whereas the autocorrelation time characterizes the decay of temporal correlations of an observable when the chain is started in stationarity (i.e., a warm start). In addition, the autocorrelation time is observable-dependent, and can be much smaller for observables that are insensitive to certain slow modes of the dynamics.

\begin{figure}[ht]
	\centering
\includegraphics[width=0.8\textwidth,trim={2cm 5cm 2cm 5cm},clip]{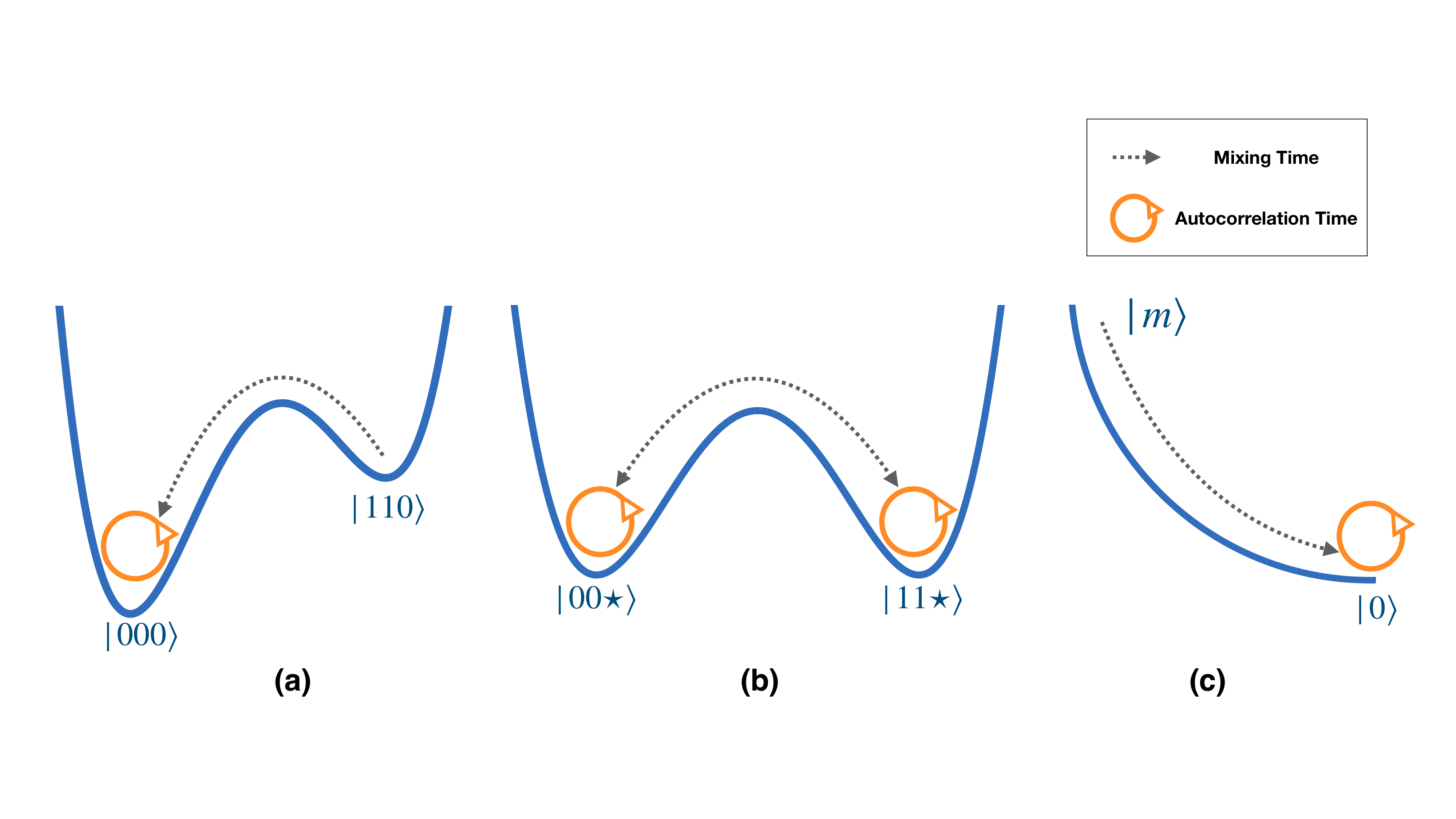}\captionsetup{width=0.9\textwidth, font=small}
    \caption{
    Three examples illustrating that the autocorrelation time of the energy observable can be much smaller than the mixing time. Let $\beta$ be a constant inverse temperature. (a)(b) correspond to the three-qubit Ising model $H = - \alpha Z_1 Z_2 - h (Z_1 + Z_2) - \gamma Z_3$ under Glauber dynamics: (a) Asymmetric double-well ($\alpha \gg h \gg \gamma > 0$) with negligible Gibbs weight in the smaller well; the mixing time is $\Omega(e^{\beta \alpha})$ since transitions between the two wells require overcoming an energy barrier of order $\alpha$, while the autocorrelation time is $\mathcal{O}(1)$ because typical stationary samples lie in the dominant well. (b) Symmetric double-well ($h=0$ and $\gamma>0$); the mixing time is $\Omega(e^{\beta \alpha})$, while the autocorrelation time is $\mathcal{O}(1)$ because the energy does not distinguish the two wells and its dominant fluctuations come from the decoupled spin $Z_3$. (c) A classical birth--death chain on $\ket{0},\dots,\ket{m}$; the mixing time scales as $\Omega(m)$, since starting from $\ket{m}$ the chain requires $\Omega(m)$ time to reach the low-energy states, whereas the autocorrelation time is $\mathcal{O}(1)$ since the Gibbs distribution is heavily concentrated near the low-energy states.}
    \label{fig:example}
\end{figure}

In \cref{fig:example}(a)(b), we consider a three-qubit Ising model with external magnetic fields,
\begin{align}
    H = - \alpha Z_1 Z_2 - h (Z_1 + Z_2) - \gamma Z_3, \label{eq:Ising}
\end{align}
Here $Z_i$ denotes the Pauli-$Z$ operator acting on qubit $i$. The Gibbs sampling algorithm considered here is the (single-spin-flip) Glauber dynamics: at each step, a spin is chosen uniformly at random and flipped with probability $\min\{1, e^{-\beta \Delta}\}$, where $\Delta$ denotes the energy difference between the configurations before and after the spin flip.
Here $\alpha, h, \gamma \ge 0$. Suppose our goal is to estimate $\Tr(H \rho_{\beta})$ to finite precision, e.g., $\epsilon = 10^{-6}$.

In the regime $\alpha \gg h \gg \gamma > 0$, as in \cref{fig:example}(a), the Hamiltonian in \cref{eq:Ising} forms an asymmetric double-well landscape: the global minimum $\ket{000}$ lies in the dominant well, and a local minimum $\ket{110}$ lies in the smaller well. The Gibbs probabilities of these minima differ exponentially,
\[
\frac{\Pr[\ket{000}]}{\Pr[\ket{110}]} = e^{4 \beta h}.
\]  
For example, with $\beta = 5$ and $h = 1$, one has $\exp(4 \beta h) \approx 10^{8}$. 

Under Glauber dynamics, escaping the smaller well requires overcoming an energy barrier of order $\alpha$, so the mixing time scales as $\Omega(e^{\beta \alpha})$. In contrast, the autocorrelation time is $\mathcal{O}(1)$ (treating $\gamma$ as a constant), because the stationary distribution is concentrated in the dominant well. Quantitatively, relative to $\ket{000}$, the smaller-well minimum $\ket{110}$ is suppressed by a factor $e^{-4\beta h}$ (about $2\cdot 10^{-9}$ for $\beta=5$ and $h=1$), while configurations with $Z_1Z_2=-1$ are further suppressed by at least $e^{-\beta(2\alpha+2h)}$. Hence, at accuracy $\epsilon=10^{-6}$, stationary samples effectively remain in the dominant well, and the energy decorrelates on the constant timescale of local updates (in particular, rapid flips of the decoupled spin $Z_3$).
  
Let us also consider a different parameter regime $h=0$ and $\gamma>0$. As in \cref{fig:example}(b), the Hamiltonian in \cref{eq:Ising} forms a symmetric double-well landscape in the $Z_1Z_2$ degree of freedom, with two basins corresponding to $\ket{00*}$ and $\ket{11*}$; the global minima are $\ket{000}$ and $\ket{110}$.

Under Glauber dynamics, escaping one well to reach the other requires overcoming an energy barrier of order $\alpha$, so the mixing time scales as $\Omega(e^{\beta \alpha})$. However, since the energy observable $H$ does not distinguish the two wells (it depends on $Z_1Z_2$ and $Z_3$), its autocorrelation time does not reflect the rare inter-well transitions. Instead, the autocorrelation time remains $\mathcal{O}(1)$ because the dominant energy fluctuations come from the local relaxation of the decoupled spin $Z_3$ (the $-\gamma Z_3$ term), which flips at a rate independent of the barrier between the $Z_1Z_2$ configurations.

We remark that the same mechanisms can lead to a sub-exponential separation between the mixing time and the autocorrelation time. For instance, global mixing in the low-temperature 2D Ising model is sub-exponentially slow in the system size, with scaling of the form $\exp(\Theta(\sqrt{n}))$ for $n$ spins~\cite{cesi1996two}. Existing studies~\cite{sokal1997monte, martinelli2004lectures, gheissari2022low} suggest that the dynamics restricted to a single well (phase) can mix in polynomial time. This suggests that the autocorrelation time of the energy may grow at most polynomially, since the energy is a symmetric observable whose relaxation is dominated by rapid fluctuations within a single well (phase) rather than by exponentially slow tunneling between two wells.

Both \cref{fig:example}(a) and (b) involve energy barriers. Finally, even without energy barriers, there can be a polynomial separation between mixing time and autocorrelation time (\cref{fig:example}(c)).
Consider a classical birth--death chain on states $\ket{0},\ket{1},\dots,\ket{m}$. 
The transition probabilities are defined as follows: from state $\ket{k}$, the chain transitions to $\ket{k-1}$ with probability $p$, and to $\ket{k+1}$ with probability $q$, where $p+q=1$, $\log(p/q)=\beta$ and $p>q$. At the boundaries, the chain transitions from $\ket{0}$ to $\ket{1}$ with probability $q$ and stays at $\ket{0}$ with probability $p$, and transitions from $\ket{m}$ to $\ket{m-1}$ with probability $p$ and stays at $\ket{m}$ with probability $q$. One can verify that the invariant distribution $\pi$ is the Gibbs distribution, with $\pi_k\propto\left(q/p\right)^k=\exp(-\beta k)$.
The mixing time is $\Omega(m)$, since starting from $\ket{m}$, reaching the ground state $\ket{0}$ requires at least $m$ steps. In contrast, the autocorrelation time for the energy observable is $\mathcal{O}(1)$, since the spectral gap of this biased birth-death chain is $\mathcal{O}(1)$ (independent of $m$). Intuitively, the autocorrelation time is small because typical configurations sampled from the  Gibbs distribution 
are concentrated near the ground state, providing an effective warm start for the autocorrelation dynamics.

\subsection{Outlook}

In this work, we introduce a framework for estimating quantum thermal expectation values using a single trajectory. For observables that can be measured in a thermodynamically reversible way (i.e., satisfying detailed balance), our approach achieves substantial reductions in sampling cost compared with multi-trajectory methods. The core of our approach is an extension of the notion of autocorrelation from classical Markov chain Monte Carlo to quantum Gibbs sampling, combined with a careful treatment of measurement to control the effects of measurement-induced disturbance.

An exciting direction is to generalize our framework to estimate thermal expectation values for general observables. Here, we investigate the idea of using weighted operator Fourier transform and show that, in certain cases, it can mitigate measurement-induced disturbances. It would be interesting to explore its effectiveness in more general settings through experiments or numerical simulations. Beyond the weighted operator Fourier transform idea, one could also explore methods for measuring general observables in a thermodynamically reversible manner, potentially linking to techniques that map general observables to detailed-balanced Lindbladians~\cite{chen2023efficient,ding2025efficient}. Intuitions from discrete-time Metropolis algorithms may also provide useful guidance~\cite{jiang2024quantum,temme2011quantum,gilyen2024quantum}.

Given the wide success of the corresponding classical MCMC analogy, we expect our single-trajectory framework to have a broad practical impact. An important question for future work is how to accurately estimate the autocorrelation time and the resulting effective sample size, which could be explored through theoretical analysis or numerical simulations, similar to methods used in classical MCMC~\cite{sokal1997monte,goodman2010ensemble,thompson2010comparison}.

\section{Preliminaries and Notation}\label{sec:prelim}

\paragraph{Local Hamiltonian and Gibbs state}
We use $H=\sum_{i=1}^{\kappa} H_i$ to denote an $n$-qubit local Hamiltonian where $\kappa=\poly(n)$ is the number of terms, each $H_i$ acts nontrivially on at most $O(1)$ qubits, and $\|H_i\|\leq 1$.
Given a local Hamiltonian $H$ and inverse temperature $\beta$, the Gibbs state is defined as
\begin{align}
    \rho_{\beta} &= \exp(-\beta H)/Z_\beta,
\end{align}
where $Z_\beta = \Tr(\exp(-\beta H))$ is the normalization factor.

\paragraph{Quantum channel and weighted inner product} We use quantum channels to denote completely positive and trace preserving maps (CPTP).  Let $\cT$ be a quantum channel with Kraus operators  $\{K_u\}_u$, i.e. $\cT(\cdot):= \sum_u K_u \left(\cdot\right) K_u^\dagger$. The map in the Heisenberg picture is defined as 
$$\cT^\dagger(\cdot):= \sum_u K_u^\dagger  \left(\cdot\right) K_u,$$ which is the
dual map w.r.t. the Hilbert-Schmidt inner product 
$\langle A,B\rangle:= \Tr(A^\dagger B).$

 For any $s\in [0,1]$, the weighted inner product is defined as
\begin{align*}
     \langle M,N\rangle_s :=\Tr(M^\dagger  \rho_\beta^{1-s} N \rho_\beta^{s}).
\end{align*}

\paragraph{Quantum detailed balance and spectral gap.} 

We say a quantum channel $\cT$ satisfies $s$-detailed balance (w.r.t. $\rho_{\beta}$) if, for any operators $A$ and $B$, the map $\cT^\dagger$ is self-adjoint w.r.t. the weighted inner product,
\begin{align*}
    \langle A,\cT^\dagger(B)\rangle_s = \langle \cT^\dagger(A), B\rangle_s.
\end{align*}
 When $s=\frac{1}{2}$ we say that $\cT$ satisfies KMS (Kubo-Martin-Schwinger) quantum detailed balance.

\begin{fact}[Properties of quantum detailed balance]\label{fact:QDB}
     Any quantum channel $\mathcal T$ that satisfies $s$-detailed balance fixes the Gibbs state, i.e. $\mathcal T(\rho_\beta)=\rho_\beta$. Moreover,
     \begin{itemize}
         \item $\mathcal T$ is diagonalizable with a real spectrum contained in $[-1,1]$. When $\mathcal T$ arises from the time evolution of a Lindbladian, its spectrum is further restricted to $[0,1]$. 
         \item The spectrum of $\cT^\dagger$ is the complex conjugate of the spectrum of $\cT$; in particular, under $s$-detailed balance the spectrum is real and hence the spectra coincide. 
         \item 
         If  $\mathcal T$ has a unique fixed state and its spectrum is contained in $[0,1]$ (e.g., when $\mathcal T$ arises from the time evolution of a Lindbladian), then its spectral gap can be written as  \begin{align}
        gap(\cT) = gap(\cT^\dagger) = 1- \max_{X\ne 0, \langle X,I\rangle_s=0} \frac{\langle X,\cT^\dagger(X)\rangle_s}{\langle X, X\rangle_s}.
    \end{align}
     \end{itemize}
 \end{fact}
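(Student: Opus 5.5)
The plan is to pass to the Heisenberg picture and work in the Hilbert space $L^2(\rho_\beta^{\,s})$ of operators with inner product $\langle\cdot,\cdot\rangle_s$. The hypothesis says that $\cT^\dagger$ is self-adjoint there. Since $\rho_\beta$ is full rank, $\langle X,Y\rangle_s = \Tr\bigl((\rho_\beta^{(1-s)/2}X\rho_\beta^{s/2})^\dagger\,\rho_\beta^{(1-s)/2}Y\rho_\beta^{s/2}\bigr)$ is a genuine positive-definite inner product. The steps, in order, are as follows.

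\textbf{Fixed point.} Unitality gives $\cT^\dagger(I)=I$. For any $A$, self-adjointness then yields
\begin{align*}
\Tr(A^\dagger\rho_\beta)=\langle A,I\rangle_s=\langle A,\cT^\dagger(I)\rangle_s=\langle \cT^\dagger(A),I\rangle_s=\Tr(\cT^\dagger(A)^\dagger\rho_\beta)=\Tr(A^\dagger\cT(\rho_\beta)).
\end{align*}
The last equality uses $\cT^\dagger(A)^\dagger=\cT^\dagger(A^\dagger)$ and Hilbert--Schmidt duality. Since this holds for all $A$, we conclude $\cT(\rho_\beta)=\rho_\beta$.

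\textbf{Spectral properties.}
\begin{itemize}
\item A self-adjoint operator on a finite-dimensional inner product space is diagonalizable with real spectrum, so $\cT^\dagger$ is.
\item $\cT$ is the Hilbert--Schmidt adjoint of $\cT^\dagger$, so its spectrum is the complex conjugate of that of $\cT^\dagger$. It is therefore the same real set, and $\cT$ is also diagonalizable.
\item For the bound $[-1,1]$, I would use that a unital CPTP dual map is a contraction in operator norm (Russo--Dye / Kadison--Schwarz). Any eigenvalue $\lambda$ with eigenvector $X$ then satisfies $|\lambda|\,\|X\|\le\|X\|$.
\item For Lindbladian evolution $\cT=e^{t\cL}$, detailed balance of each $e^{t\cL}$ makes $\cL^\dagger$ self-adjoint with real eigenvalues $\mu$. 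Boundedness of $e^{t\mu}$ by $1$ for all $t$ forces $\mu\le 0$, so the eigenvalues $e^{t\mu}$ lie in $(0,1]$.
\end{itemize}

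\textbf{Variational gap formula.} The eigenvalue $1$ has eigenvector $I$, and by uniqueness of the fixed state it is simple. This requires checking that the fixed-point spaces of $\cT$ and $\cT^\dagger$ have equal dimension, which holds by the conjugate-spectrum statement together with diagonalizability. The orthogonal complement $\{X:\langle X,I\rangle_s=0\}$ is invariant under the self-adjoint $\cT^\dagger$. On it, the largest eigenvalue is $1-gap$, since the spectrum lies in $[0,1]$ and the gap is $1$ minus the second-largest eigenvalue. Courant--Fischer applied to the Rayleigh quotient then gives exactly the displayed formula, and $gap(\cT)=gap(\cT^\dagger)$ follows from equality of the spectra.

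The main obstacle is the contraction bound $|\lambda|\le1$. The quickest route is to note that $\cT^\dagger$ is positive and unital, so $-\|X\|I\le X\le\|X\|I$ is preserved for Hermitian $X$. Because $\cT^\dagger$ commutes with taking adjoints, eigenvectors can be chosen Hermitian (real eigenvalues), and then $\|\cT^\dagger(X)\|\le\|X\|$.
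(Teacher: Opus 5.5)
Your proposal is correct and follows essentially the same route as the paper's proof (Lemma~\ref{lem:QDB} in Appendix~\ref{app:DB}). Both get the fixed point from unitality plus self-adjointness of $\cT^\dagger$, derive diagonalizability and a real spectrum from self-adjointness in $\langle\cdot,\cdot\rangle_s$ (the paper phrases this as the similarity $\cF_s^{-1/2}\circ\cT\circ\cF_s^{1/2}$ being Hilbert--Schmidt Hermitian), use the semigroup argument for Lindbladian evolution, and finish with a Rayleigh quotient on the $\langle\cdot,\cdot\rangle_s$-orthogonal complement of $I$. The only minor difference is that you bound $|\lambda|\le 1$ via operator-norm contraction of the positive unital map $\cT^\dagger$ on Hermitian eigenvectors, whereas the paper uses trace-norm contraction of $\cT$ directly; these are dual statements.
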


For completeness,  a proof of the above fact is provided in Appendix \ref{app:DB} Lemma \ref{lem:QDB}.

\paragraph{Matrix norm.} 
We use $\bM$ to denote the linear space formed by all $\bC^{2^n\times 2^n}$ matrices. 
For any matrix $A\in \bM$, we use $\|A\|_1$ and $\|A\|$ 
to denote its trace norm and operator norm respectively. For any two indices $x$ and $y$, we use $\delta_{xy}$ to denote the function 
that equals $1$ if $x = y$ and $0$ otherwise.

\paragraph{Expectation, variance and covariance.} Let $\cP$ be a quantum channel with Kraus operators $\{Q_{u}\}_{u}$, where the index set $\{u\}_u$ takes real values and denotes the possible measurement outcomes.  
In other words, $\cP$ is a quantum instrument with classical outcome $u$, so that $\sum_u Q_u^\dagger Q_u=I$.

Consider applying the quantum channel $\cP$ to a state $\rho$. We use 
 $\bE_{\rho}(\cP)$ and $\bV_{\rho}(\cP)$  to denote the expectation and variance of the measurement outcomes, 
 that is,
\begin{align}
	&\bE_{\rho}(\cP):= \sum_u u\,\, \Tr(Q_u \rho Q_u^\dagger),\\
	&\bV_{\rho}(\cP):= \sum_u \left| u -\bE_{\rho}(\cP) \right|^2\,\, \Tr(Q_u \rho Q_u^\dagger).
\end{align}

The covariance of $\cP$ is defined as the correlation between the measurement outcomes of two successive measurements, corresponding to two successive applications of the same instrument with no intermediate evolution,
\begin{align}
    \bCov_{\rho}\left(\cP \right) := \sum_{x,y} \left( x -\bE_{\rho}(\cP) \right)\left( y -\bE_{\rho}(\cP) \right)\,\, \Tr(Q_yQ_x \rho Q_x^\dagger Q_y^\dagger).
\end{align}

Note that when $\cP$ corresponds to a projective measurement, i.e., $Q_x=Q_x^\dagger$ and $Q_x Q_y = \delta_{xy} Q_x$, we have $\bV_{\rho}(\cP)=\bCov_{\rho}\left(\cP \right)$.

\section{Goal and the proposed algorithm}\label{sec:alg}

In this paper, we study the problem of estimating thermal expectation values using an existing quantum Gibbs sampling algorithm as a subroutine.

More precisely,
    we assume that we have access to a quantum Gibbs sampling algorithm, modeled as a quantum channel $\mathcal{N}$ that satisfies $s$-detailed balance and has $\rho_{\beta}$ as its unique fixed state. We assume that the spectrum of $\cN$ lies in $[0,1]$.
The quantum Gibbs-sampling channel $\mathcal{N}$ can be instantiated using known quantum Gibbs samplers~\cite{chen2025efficient,ding2025efficient} that satisfy KMS detailed balance ($s=1/2$),
which are based on simulating Davies-generator-inspired Lindbladian $\cL$. 
More specifically, one can set $\cN:=e^{t_0 \cL}$ for any chosen $t_0>0$. In particular, the spectrum of those
 Lindbladian-based channels $\cN$ is contained in $[0,1]$ by \cref{fact:QDB}.

\paragraph{Our goal and proposed algorithm.} Our goal is to use $\mathcal{N}$ as a subroutine to estimate $\Tr(O \rho_{\beta})$ for a given observable $O$. We use $\mathcal{M}$ to denote a quantum channel that implements the measurement of $O$, in the sense that, when applied to $\rho_{\beta}$, its measurement outcome has expectation value $\Tr(O \rho_{\beta})$.
The implementation of the measurement channel $\mathcal{M}$ will be specified later in \cref{sec:measure}. For illustration, one may temporarily assume $O=H$ and interpret $\mathcal{M}$ as a particular implementation of the quantum phase estimation algorithm.

As described in the introduction and shown in Figure~\ref{fig:single_trajectory} (b), we estimate observable values in quantum Gibbs sampling using a single trajectory, as outlined in Algorithm~\ref{alg:intro_main}.
For proof purposes, in line \ref{line:twice} of Algorithm~\ref{alg:intro_main}, we apply the measurement $\mathcal{M}$ twice rather than once. Additionally, Algorithm~\ref{alg:intro_main} is currently presented in a practical i.e., agnostic mode, where  measurements are performed at fixed intervals; in Section~\ref{sec:aut_gap}, Remark~\ref{remark:skip_measure} further discusses how the cost of measurements can be reduced by skipping unnecessary measurements.

 \begin{algorithm}
  \caption{Observable Estimation from a single trajectory}\label{alg:intro_main}
  \hspace*{\algorithmicindent} \textbf{Inputs:}   Quantum channel $\cN$ and $\cM$; Initial state $\rho$; Integers $T_{burn}$ and $K$
\begin{algorithmic}[1]
    \State  Apply $\cN$ for time $T_{burn}$ to the initial state $\rho$
    \For{$t=1$ to $K$}
    \State Apply $\cE:=\cM \circ \cN \circ \cM$   \label{line:twice}
        \State 
        Denote the measurement outcome of the first $\cM$ (i.e., the rightmost $\cM$ in the composition defining $\cE$) as $e_t$
    \EndFor
    \State Output 
    $X_K:=\frac{1}{K}\sum_{t=1}^K e_t$
\end{algorithmic}
\end{algorithm}

\paragraph{Structure of the manuscript.} 

We will first assume that $\cM$ satisfies $s$-detailed balance and analyze the performance of \cref{alg:intro_main}. In particular, in \cref{sec:DB_gap} we prove that detailed-balance measurements never decrease the spectral gap. In \cref{sec:time_average}, we introduce the concept of autocorrelation time in quantum Gibbs sampling and bound it by the relaxation time multiplied by an additional factor, which will later be shown to be constant for our choice of $\mathcal{M}$.

Then we will explain the implementation of the measurement $\cM$. In \cref{sec:M_com}, we focus on operators $O$ that commute with $H$ and construct a measurement $\cM$ that is unbiased, satisfies $s$-detailed balance, and incurs only a logarithmic overhead. As an explicit example, in \cref{sec:energy}, we focus on measuring $H$ and give a detailed comparison of the gate and depth complexity between different approaches of measuring $H$.
Finally, we employ the operator Fourier transform technique to mitigate the measurement-induced disturbance for more general observables.

\section{Detailed-balanced measurements do not decrease the spectral gap}\label{sec:DB_gap}

In this section, we prove that  detailed-balanced measurements do not decrease the spectral gap. First notice that the following lemma holds. Its proof is provided in Appendix~\ref{app:DB}.

\begin{lemma}\label{lem:unique}
    Let $\cM$ and $\cN$ be quantum channels satisfying $s$-detailed 
balance with respect to $\rho_\beta$; then the composite channel 
$\cM \cN \cM$ also satisfies $s$-detailed balance. 
Moreover, if $\cN$ has $\rho_\beta$ as its unique fixed point and its 
spectrum is contained in $[0,1]$, then $\cM \cN \cM$ 
also has $\rho_\beta$ as its unique fixed point and its spectrum lies in $[0,1]$.
\end{lemma}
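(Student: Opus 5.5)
The plan is to work throughout in the Heisenberg picture with the weighted inner product $\langle\cdot,\cdot\rangle_s$, under which $s$-detailed balance of a channel $\cT$ means exactly that $\cT^\dagger$ is self-adjoint.

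\textbf{Detailed balance of the composition.} Since $(\cM\cN\cM)^\dagger=\cM^\dagger\cN^\dagger\cM^\dagger$, I will move each factor across the inner product in turn:
\[
\langle A,\cM^\dagger\cN^\dagger\cM^\dagger(B)\rangle_s=\langle \cM^\dagger(A),\cN^\dagger\cM^\dagger(B)\rangle_s=\langle \cN^\dagger\cM^\dagger(A),\cM^\dagger(B)\rangle_s=\langle \cM^\dagger\cN^\dagger\cM^\dagger(A),B\rangle_s .
\]
This gives $s$-detailed balance of $\cM\cN\cM$. The symmetric ``sandwich'' shape is what makes this work; $\cM\cN$ alone would not be self-adjoint in general.

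\textbf{Spectrum in $[0,1]$.} By \cref{fact:QDB}, $\cN^\dagger$ is self-adjoint with spectrum in $[0,1]$, so it is positive semidefinite with respect to $\langle\cdot,\cdot\rangle_s$. Therefore
\[
\langle X,\cM^\dagger\cN^\dagger\cM^\dagger X\rangle_s=\langle \cM^\dagger X,\cN^\dagger(\cM^\dagger X)\rangle_s\ge 0 .
\]
So $(\cM\cN\cM)^\dagger$ is positive semidefinite and its eigenvalues are nonnegative. Because $\cM\cN\cM$ is a detailed-balanced channel, \cref{fact:QDB} already places its spectrum in $[-1,1]$, hence in $[0,1]$. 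By the same fact, the spectrum of $\cM\cN\cM$ coincides with that of its dual.

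\textbf{Uniqueness of the fixed point.} The Gibbs state is fixed because $\cM(\rho_\beta)=\rho_\beta=\cN(\rho_\beta)$. For uniqueness, it suffices to show that the eigenvalue $1$ of $(\cM\cN\cM)^\dagger$ has a one-dimensional eigenspace; that eigenspace contains $I$. Suppose $X\perp_s I$ and $(\cM\cN\cM)^\dagger X=X$. Set $Y=\cM^\dagger X$, and note that $Y\perp_s I$ because $\cM^\dagger(I)=I$ and $\cM^\dagger$ is self-adjoint. We have two facts:
\begin{itemize}
\item $\|\cM^\dagger\|_s\le 1$, since its spectrum lies in $[-1,1]$ and it is self-adjoint.
\item On $I^{\perp_s}$, the self-adjoint PSD operator $\cN^\dagger$ has top eigenvalue $1-gap(\cN)<1$, using uniqueness of the fixed point together with \cref{fact:QDB}.
\end{itemize}
Then
\[
\langle X,X\rangle_s=\langle Y,\cN^\dagger Y\rangle_s\le (1-gap)\langle Y,Y\rangle_s\le(1-gap)\langle X,X\rangle_s ,
\]
which forces $X=0$.

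The main obstacle is this last step. It needs $\cM^\dagger$ to be a contraction in the weighted norm, which I will derive from the spectral bound in \cref{fact:QDB} combined with self-adjointness. It also needs the translation from uniqueness of the fixed point of $\cN$ to a strict eigenvalue gap on $I^{\perp_s}$. Both rely only on diagonalizability and self-adjointness, which \cref{fact:QDB} supplies.
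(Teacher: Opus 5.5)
Your proposal is correct. The detailed-balance and $[0,1]$-spectrum parts match the paper's proof; the paper also uses $\langle X,\cM^\dagger\cN^\dagger\cM^\dagger X\rangle_s=\langle \cM^\dagger X,\cN^\dagger\cM^\dagger X\rangle_s\ge 0$. Your uniqueness argument, however, takes a different route.

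The paper works in the Schr\"odinger picture. It introduces a second inner product $\langle A,B\rangle_{*s}=\Tr(A^\dagger\rho_\beta^{s-1}B\rho_\beta^{-s})$, under which $\cM$ and $\cN$ themselves are self-adjoint. It proves a separate strict-contraction lemma (\cref{lem:strict_contract}: $\|\cN(A)\|_{*s}<\|A\|_{*s}$ for nonzero traceless $A$). It then derives a contradiction from two distinct fixed states $\rho_1,\rho_2$, with a case split on whether $\cM(\rho_1-\rho_2)=0$.

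You instead stay in the Heisenberg picture with the single inner product $\langle\cdot,\cdot\rangle_s$. You use \cref{lem:contract} for $\cM^\dagger$ and replace strict contraction by the quantitative bound $\langle Y,\cN^\dagger Y\rangle_s\le(1-gap(\cN))\langle Y,Y\rangle_s$ on $I^{\perp_s}$, so no case split is needed. The payoff is that your chain is quantitative. Applied to any eigenvector $X\perp_s I$ of $(\cM\cN\cM)^\dagger$ with eigenvalue $\lambda$, it gives $\lambda\le 1-gap(\cN)$, which is exactly \cref{lem:MNM_N}. So you obtain both lemmas at once.

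The price is that you move between $\cT$ and $\cT^\dagger$ twice. First, you need $gap(\cN)>0$ from uniqueness of the fixed state of $\cN$. Second, you conclude uniqueness for $\cM\cN\cM$ from simplicity of the eigenvalue $1$ of its dual. Equality of spectra as sets, as stated in \cref{fact:QDB}, does not by itself transfer multiplicities, so make this explicit. The intertwining identity $\cT(\rho_\beta^{1-s}X\rho_\beta^{s})=\rho_\beta^{1-s}\cT^\dagger(X)\rho_\beta^{s}$ is equivalent to $s$-detailed balance. It maps the fixed space of $\cT^\dagger$ bijectively onto that of $\cT$, sending $I$ to $\rho_\beta$. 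The paper makes the same move in \cref{lem:QDB}(6), so this is a one-line addition rather than a gap.
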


Now we are ready to state the lemma that connects the spectral gap of $\cM\cN\cM$ and $\cN$.

\begin{lemma}\label{lem:MNM_N}
   Consider two quantum channels $\cM$ and $\cN$ that satisfy $s$-detailed balance with respect to $\rho_\beta$. Suppose $\cN$ admits $\rho_\beta$ as its unique fixed point and has spectrum contained in $[0,1]$. Then we have that
    \begin{align}
        gap(\cM\cN\cM) \geq gap(\cN).
    \end{align}
\end{lemma}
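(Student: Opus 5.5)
We work in the Hilbert space $\bM$ equipped with the weighted inner product $\langle\cdot,\cdot\rangle_s$ and reason about the Heisenberg-picture maps. Under $s$-detailed balance, $\cM^\dagger$ and $\cN^\dagger$ are self-adjoint with respect to $\langle\cdot,\cdot\rangle_s$. Hence
\[
(\cM\cN\cM)^\dagger=\cM^\dagger\cN^\dagger\cM^\dagger .
\]
By \cref{lem:unique}, $\cM\cN\cM$ satisfies $s$-detailed balance, has $\rho_\beta$ as its unique fixed point, and has spectrum in $[0,1]$. This puts us in the setting of \cref{fact:QDB}, so
\[
gap(\cM\cN\cM)=1-\max_{X\neq 0,\ \langle X,I\rangle_s=0}\frac{\langle X,\cM^\dagger\cN^\dagger\cM^\dagger(X)\rangle_s}{\langle X,X\rangle_s},
\]
and an analogous formula holds for $\cN$. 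The goal is therefore to show that for every $X$ with $\langle X,I\rangle_s=0$ we have
\[
\langle X,\cM^\dagger\cN^\dagger\cM^\dagger X\rangle_s\le (1-gap(\cN))\langle X,X\rangle_s .
\]

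\textbf{Step 1.} Set $Y=\cM^\dagger(X)$ and use self-adjointness to rewrite the numerator as $\langle Y,\cN^\dagger Y\rangle_s$.

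\textbf{Step 2.} Check that $Y$ stays in the orthogonal complement of $I$. Since $\cM^\dagger(I)=I$ by trace preservation and $\cM^\dagger$ is self-adjoint,
\[
\langle Y,I\rangle_s=\langle X,\cM^\dagger(I)\rangle_s=\langle X,I\rangle_s=0 .
\]
The variational characterization for $\cN$ then gives
\[
\langle Y,\cN^\dagger Y\rangle_s\le (1-gap(\cN))\langle Y,Y\rangle_s .
\]
This holds even when $Y=0$. The right-hand side is nonnegative because $gap(\cN)\le 1$, as the spectrum of $\cN$ lies in $[0,1]$.

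\textbf{Step 3.} Show that $\langle Y,Y\rangle_s\le\langle X,X\rangle_s$, i.e.\ that $\cM^\dagger$ is a contraction in the weighted norm. Because $\cM^\dagger$ is self-adjoint with respect to $\langle\cdot,\cdot\rangle_s$, its $s$-operator norm equals its spectral radius. By \cref{fact:QDB}, the spectrum of $\cM^\dagger$ is real, coincides with that of $\cM$, and lies in $[-1,1]$. Hence
\[
\|\cM^\dagger\|_{s}\le 1, \qquad \langle Y,Y\rangle_s\le\langle X,X\rangle_s .
\]
Combining this with Step 2 and the nonnegativity of $1-gap(\cN)$ gives the desired inequality. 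Maximizing over $X$ then yields $gap(\cM\cN\cM)\ge gap(\cN)$.

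The step needing the most care is Step 3. Contractivity of $\cM^\dagger$ in the weighted norm is not automatic for a general channel. Here it follows from self-adjointness together with the spectrum bound $[-1,1]$ from \cref{fact:QDB}, which is only available because $\cM$ satisfies detailed balance. Notably, $\cM$ itself may have negative eigenvalues or a non-unique fixed point, and the argument does not require otherwise. The sandwich form $\cM\cN\cM$ is exactly what turns the numerator into a quadratic form of $\cN^\dagger$, which is why the algorithm applies $\cM$ twice.
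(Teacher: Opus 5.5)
Your proof is correct and follows essentially the same route as the paper. Both arguments rest on the variational characterization in \cref{fact:QDB}, on the fact that $\cM^\dagger(I)=I$ preserves orthogonality to $I$, and on weighted-norm contractivity of $\cM^\dagger$ (the paper's \cref{lem:contract}). The difference is cosmetic. The paper restricts the maximization for $\cN$ to operators of the form $\cM^\dagger(X)$ and uses nonnegativity of the numerator. You instead bound the Rayleigh quotient of $\cM\cN\cM$ pointwise using $1-gap(\cN)\ge 0$, which also handles the case $\cM^\dagger(X)=0$ explicitly; the paper's restriction step passes over that case silently.
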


To prove \cref{lem:MNM_N}, first notice that a detailed-balance channel is contractive under the weighted inner product.

\begin{lemma}[Contractivity of a detailed-balance channel under the weighted norm] \label{lem:contract}Let $\cT$ be a quantum channel that satisfies $s$-detailed balance. Then, for any operator $A$, we have
	\begin{align}
	   \langle A,A \rangle_s  \geq 	\langle \cT^\dagger(A),\cT^\dagger(A) \rangle_s. 
	\end{align}
\end{lemma}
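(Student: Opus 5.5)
The plan is to use detailed balance to turn the quadratic form of $\cT^\dagger$ into a spectral statement about a self-adjoint operator, and then read the inequality off from the spectral bound in \cref{fact:QDB}. No Schwarz-type operator inequality is needed.

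\textbf{Step 1 (the weighted inner product is a genuine inner product).} Since $\rho_\beta=e^{-\beta H}/Z_\beta$ is full rank and positive, we have $\langle X,X\rangle_s=\Tr(X^\dagger\rho_\beta^{1-s}X\rho_\beta^{s})=\|\rho_\beta^{(1-s)/2}X\rho_\beta^{s/2}\|_2^2$. This is strictly positive for every $X\neq 0$. Sesquilinearity and Hermitian symmetry, $\langle M,N\rangle_s=\overline{\langle N,M\rangle_s}$, follow from cyclicity of the trace. Hence $(\bM,\langle\cdot,\cdot\rangle_s)$ is a finite-dimensional Hilbert space.

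\textbf{Step 2 (spectral decomposition of $\cT^\dagger$).} By $s$-detailed balance, $\cT^\dagger$ is self-adjoint on this Hilbert space. It therefore admits an orthonormal eigenbasis $\{X_k\}$ with respect to $\langle\cdot,\cdot\rangle_s$, with real eigenvalues $\lambda_k$. By \cref{fact:QDB}, the spectrum of $\cT^\dagger$ coincides with that of $\cT$ and lies in $[-1,1]$, so $|\lambda_k|\le 1$ for all $k$.

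\textbf{Step 3 (the inequality).} Expand $A=\sum_k a_k X_k$. Using self-adjointness, or directly from orthonormality,
\begin{align*}
\langle \cT^\dagger(A),\cT^\dagger(A)\rangle_s=\sum_k \lambda_k^2|a_k|^2\le \sum_k |a_k|^2=\langle A,A\rangle_s .
\end{align*}
This is exactly the claimed contractivity.

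\textbf{Main obstacle.} The only nontrivial input is the bound $|\lambda_k|\le 1$, which I take from \cref{fact:QDB}. If one wanted that bound from scratch, I would argue as follows. Self-adjointness makes the operator norm of $\cT^\dagger$ with respect to $\langle\cdot,\cdot\rangle_s$ equal to its spectral radius. That spectral radius is at most $1$ because $\cT$ is a CPTP map on a finite-dimensional space, so its iterates are uniformly bounded in trace norm. Bounded iterates force every eigenvalue to have modulus at most $1$.
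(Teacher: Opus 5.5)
Your proposal is correct and follows essentially the same route as the paper: use self-adjointness of $\cT^\dagger$ under $\langle\cdot,\cdot\rangle_s$ to get an orthonormal eigenbasis, take $|\lambda_k|\le 1$ from \cref{fact:QDB}, and conclude $\sum_k\lambda_k^2|a_k|^2\le\sum_k|a_k|^2=\langle A,A\rangle_s$. Your Step 1 and the bounded-iterates justification of $|\lambda_k|\le 1$ spell out details the paper leaves implicit or defers to the appendix (where the bound comes from trace-norm contractivity), but the argument is unchanged.
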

\begin{proof} 
    Since $\cT$ satisfies $s$-detailed balance, we have that $\cT^\dagger$ is Hermitian w.r.t. the weighted inner product $\langle,\rangle_s$. Thus, there exists an orthonormal basis of $\bM$, that is, $\{Y_i\}_i\subseteq \bM$ where $\langle Y_i,Y_j\rangle_s =\delta_{ij}$ such that  
     $\cT^\dagger (Y_i)=\lambda_i Y_i$. Since $\cT^\dagger$ is Hermitian w.r.t. $\langle,\rangle_s$, we have $\lambda_i\in\bR$. Moreover, $|\lambda_i|\leq 1$ since $\cT$ is a quantum channel by \cref{fact:QDB}.  
     
    Since $\{Y_i\}_i$ forms a basis of $\bM$, we can  express the operator $A$ w.r.t the eigenstates $\{Y_i\}_i$,
     \begin{align}
     	A = \sum_i a_i Y_i, \text{ where } a_i=\langle Y_i,A\rangle_s.
     \end{align}
     Then it suffices to notice that $|\lambda_i|\leq 1$ implies 
     \begin{align}
         	  \langle \cT^\dagger(A),\cT^\dagger(A) \rangle_s = \sum_{i} \lambda_i^2 |a_i|^2 \leq \sum_i |a_i|^2  =\langle A,A\rangle_s.
     \end{align}
\end{proof}

Now we are prepared to prove \cref{lem:MNM_N}.

\begin{proof}[of Lemma \ref{lem:MNM_N}] 
From Lemma \ref{lem:unique} we have that $\cM \cN \cM$  has $\rho_\beta$ as its unique fixed point and spectrum in $[0,1]$. 
Since both $\cM$ and $\cN$ satisfy $s$-detailed balance, then so does $\cM\cN\cM$, 
by the definition of spectral gap [Fact \ref{fact:QDB}] we have that
\begin{align}
    gap(\cN) &= 1- \max_{X\neq 0, \langle X,I\rangle_s=0}\frac{\langle X, \cN^\dagger\left(X\right)\rangle_s}{\langle X,X\rangle_s} \\
    &\leq 1- \max_{X\neq 0, \langle \cM^\dagger(X),I\rangle_s=0}\frac{\langle \cM^\dagger(X), \cN^\dagger \cM^\dagger(X)\rangle_s}{\langle \cM^\dagger(X),\cM^\dagger(X)\rangle_s}. \label{eq:10}
\end{align}
where the last inequality comes from restricting the maximization to the subset of test operators of the form $\cM^\dagger(X)$.
Note that $\cM$ is a quantum channel thus $\cM^\dagger(I)=I$. Combine with the fact that  $\cM$ satisfies $s$-detailed  balance,  we have 
\begin{align}
    &\langle \cM^\dagger(X),I\rangle_s = \langle X,\cM^\dagger(I)\rangle_s = \langle X,I\rangle_s,\\
    & \langle \cM^\dagger(X), \cN^\dagger \cM^\dagger(X)\rangle_s = \langle X, \cM^\dagger \cN^\dagger \cM^\dagger(X)\rangle_s \geq 0. 
\end{align}
where the $\geq 0$ comes from  the fact that the spectrum of $\cN^\dagger$ lies in $[0,1]$ since $\cN$ has spectrum in $[0,1]$ [Fact \ref{fact:QDB}].
Thus by Lemma \ref{lem:contract} and Eq.~(\ref{eq:10}), we have that
\begin{align}
    gap(\cN) &\leq 1- \max_{X\neq 0, \langle X,I\rangle_s=0}\frac{\langle X, \cM^\dagger \cN^\dagger \cM^\dagger(X)\rangle_s}{\langle \cM^\dagger(X),\cM^\dagger(X)\rangle_s}\label{eq:13}\\
    &\leq 1- \max_{X\neq 0, \langle X,I\rangle_s=0}\frac{\langle X, \cM^\dagger \cN^\dagger \cM^\dagger(X)\rangle_s}{\langle X,X\rangle_s} \label{eq:variance}\\
    &= gap(\cM\cN\cM).
\end{align}
\end{proof}

\section{Time average behavior of Algorithm \ref{alg:intro_main}} \label{sec:time_average}

In this section, we analyze the performance of our single-trajectory observable estimation algorithm, Algorithm \ref{alg:intro_main}, under the assumption that the measurement $\cM$ satisfies $s$-detailed balance. 
Section \ref{sec:pesu} defines the spectral covariance weight and explains its relationship to the variance. In Section \ref{sec:time}, we present the definitions of mixing time, relaxation time, and introduce the  autocorrelation time in the context of quantum Gibbs sampling. Finally, Section \ref{sec:aut_gap} bounds the autocorrelation time and analyzes the behavior of time-averaged observable values along the trajectory.

We first introduce some notation that will be used throughout this section.

\paragraph{Channels and their eigenstates.}
In this section  we consider applying Algorithm \ref{alg:intro_main} with  quantum channels $\cN$ and $\cM$ that satisfy the following properties.
\begin{assumption}\label{ass:NM}
    Consider two quantum channels $\cN$ and $\cM$ that satisfy 
$s$-detailed balance with respect to $\rho_\beta$. Suppose that 
$\cN$ has $\rho_\beta$ as its unique fixed state and that its 
spectrum lies in $[0,1]$. 
\end{assumption}
As we discussed in Section \ref{sec:alg}, $\cN$ refers to quantum Gibbs samplers like \cite{chen2025efficient,ding2025efficient}.
 By Lemma~\ref{lem:unique}, the composite channel 
$\mathcal E := \cM \cN \cM$
also  has $\rho_\beta$ 
as its unique fixed state and its spectrum is contained in $[0,1]$.
Besides, $\cE$ also satisfies $s$-detailed balance, thus $\cE^\dagger$ is Hermitian w.r.t. the weighted inner product $\langle,\rangle_s$. Thus
     there exists a set of matrices $\{Y_i\}_i$ such that
\begin{itemize}
	\item Orthonormal w.r.t weighted inner product: $\langle Y_i,Y_j\rangle_s =\delta_{ij}$.
	\item Basis: $\{Y_i\}_i$ forms a basis of $\bM$.
    \item  Eigenstate: $\cE^\dagger (Y_i)=\lambda_i Y_i$, $\lambda_i\in [0,1]$.  In particular $Y_1=I$ i.e. the identity matrix.  Without loss of generality, we organize $\lambda_i$ in non-increasing order
     $1=\lambda_1> \lambda_2 \geq \lambda_3...$. 
\end{itemize}

\paragraph{Expectation, Variance and Covariance} 
Let $\mathcal P$ be a quantum channel with Kraus operators $\{Q_{u}\}_{u}$, 
where the index set $\{u\}_u$ takes real values and denotes the possible measurement outcomes. 

Consider applying $\cP$ to the Gibbs state $\rho_{\beta}$.  
 As defined in the preliminary section (Section \ref{sec:prelim}), we use $\bE_{\rho_{\beta}}(\cP)$, $\bV_{\rho_{\beta}}(\cP)$ and $\bCov_{\rho_{\beta}}(\cP)$ to denote the expectation, variance, and covariance of the measurement $\cP$.

\paragraph{Autocorrelation function}
For technical reasons, it is more convenient to treat 
$\mathcal E = \cM \cN \cM$ as a single channel 
rather than analyze $\cN$ and $\cM$ separately. 
In particular, we regard $\mathcal E$ as a measurement channel whose 
outcome is the outcome of the first application of $\cM$, 
consistent with Algorithm~\ref{alg:intro_main}.

Denote the Kraus operators of $\cM$ by $\{O_u\}_u$. 
Define the map that sends an operator to its 
measurement--outcome--weighted operator as follows:
\begin{align}
	&\widehat{\cM}(X) := \sum_u \left(u-\bE_{\rho_{\beta}}(\cM)\right)  \, O_u X O_u^\dagger \label{eq:hatM},\\
     &\widehat{\cE}(X):= \cM \circ \cN\circ \widehat{\cM} (X).  \label{eq:hatE}
    \end{align}

The autocorrelation function of $\mathcal E$ at time lag $t$ is the correlation 
between the measurement outcomes at time steps $0$ and $t+1$:
\begin{align}
    \bCor_{\rho_{\beta}}\left(\cE^{t}\right) &:= 
    \Tr\left(  \widehat{\cE} \circ \cE^t  \circ     \widehat{\cE}\left(\rho_\beta\right)   \right).
\end{align}

In particular, $\bCor_{\rho_{\beta}}\left(\cE^{0}\right) = 
    \Tr\left(  \widehat{\cE}   \circ     \widehat{\cE}\left(\rho_\beta\right)   \right). $

\subsection{Eigenfunctions and spectral covariance weight}\label{sec:pesu}

Recall that $\{Y_i\}_i$ are eigenstates of $\mathcal E^\dagger$ that form a basis of the operator space $\bM$. To aid the proof, we express the map
$\widehat{\mathcal E}^\dagger$ in terms of these eigenstates $\{Y_i\}_i$. That is, for any 
$Y_i$, we write
\begin{align}
	& \widehat{\cE}^\dagger(Y_i) = \sum_{j} \alpha_{ij} Y_j, \label{eq:def_alpha} \text{ where }  \alpha_{ij}:= \langle Y_j, \widehat{\cE}^\dagger(Y_i)\,\rangle_s. 
\end{align}

Recall that $Y_1=I$. Note that $\cM^\dagger(I)=I$ since $\cM$ is a quantum channel and similarly for $\cN$. 
\begin{lemma}\label{lem:alpha} We have $\alpha_{11} = 0$ since
$$\alpha_{11} = \langle I, \widehat{\cM}^\dagger\cN^\dagger \cM^\dagger \left(I\right)\rangle_s = \Tr\left(\rho_{\beta}  \widehat{\cM}^\dagger(I)\right)=\Tr\left(\widehat{\cM}(\rho_{\beta})\right)  = 0.$$ 
\end{lemma}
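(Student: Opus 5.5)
The plan is to verify the displayed chain of equalities one link at a time, using only the definitions of the weighted inner product, of the Heisenberg-picture dual, and of $\widehat{\cM}$ in \eqref{eq:hatM}.

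\emph{First equality.} By \eqref{eq:def_alpha} with $Y_1=I$, we have $\alpha_{11}=\langle I,\widehat{\cE}^\dagger(I)\rangle_s$. Since $\widehat{\cE}=\cM\circ\cN\circ\widehat{\cM}$, taking Hilbert--Schmidt duals reverses the order, giving $\widehat{\cE}^\dagger=\widehat{\cM}^\dagger\cN^\dagger\cM^\dagger$. Here $\widehat{\cM}^\dagger(X)=\sum_u(u-\bE_{\rho_\beta}(\cM))\,O_u^\dagger X O_u$; this is immediate from cyclicity of the trace, because the weights are real.

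\emph{Second equality.} Unitality gives $\cM^\dagger(I)=I$ and $\cN^\dagger(I)=I$, so $\widehat{\cE}^\dagger(I)=\widehat{\cM}^\dagger(I)$. Then
\[
\langle I,Z\rangle_s=\Tr\bigl(\rho_\beta^{1-s}Z\rho_\beta^{s}\bigr)=\Tr(\rho_\beta Z)
\]
by cyclicity.

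\emph{Third equality.} Again by duality,
\[
\Tr\bigl(\rho_\beta\,\widehat{\cM}^\dagger(I)\bigr)=\langle \rho_\beta,\widehat{\cM}^\dagger(I)\rangle=\langle \widehat{\cM}(\rho_\beta),I\rangle=\Tr\bigl(\widehat{\cM}(\rho_\beta)\bigr),
\]
using that $\rho_\beta$ is Hermitian.

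\emph{Final equality.} Compute directly:
\[
\Tr\bigl(\widehat{\cM}(\rho_\beta)\bigr)=\sum_u u\,\Tr\bigl(O_u\rho_\beta O_u^\dagger\bigr)-\bE_{\rho_\beta}(\cM)\sum_u\Tr\bigl(O_u\rho_\beta O_u^\dagger\bigr).
\]
The first sum is $\bE_{\rho_\beta}(\cM)$ by definition. The second sum equals $\Tr(\rho_\beta)=1$ by trace preservation, $\sum_u O_u^\dagger O_u=I$. The difference is therefore $0$.

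There is no real obstacle here. The only point needing care is the duality bookkeeping for the non-CPTP weighted map $\widehat{\cM}$, which is legitimate because its coefficients are real scalars.
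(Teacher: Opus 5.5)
Your proposal is correct and is essentially the paper's own argument: the lemma's proof is the displayed chain itself, resting on unitality of $\cM^\dagger$ and $\cN^\dagger$, cyclicity of the trace, Hilbert--Schmidt duality, and trace preservation, which are exactly the facts you verify link by link. One small point in your third link: $\langle \widehat{\cM}(\rho_\beta),I\rangle$ is strictly the complex conjugate of $\Tr(\widehat{\cM}(\rho_\beta))$, which is harmless because the real weights make $\widehat{\cM}(\rho_\beta)$ Hermitian.
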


We define the spectral covariance weight as
\begin{definition}[Spectral covariance weight] The spectral covariance weight of the map $\cE=\cM\cN\cM$ is defined as 
    $$
\SC:= \sum_{j} |\alpha_{1j} \alpha_{j1}|.$$
\end{definition}

The spectral covariance weight will be used in Section \ref{sec:proof_aut_gap} to bound the autocorrelation time in terms of the spectral properties of the Gibbs sampling channel, namely its spectral gap.
Note that in the definition, $\alpha_{j1}$ might \textit{not} equal to $\alpha_{1j}^*$. 

In the following, we show that when $\widehat{\cM}$ also satisfies 
$s$-detailed balance, the spectral covariance weight of $\cE$ can be upper-bounded by the 
covariance of $\cM$, which equals the variance of $\cM$ if 
$\cM$ is a projective measurement. To avoid confusion, we remark 
that for operators commuting with $H$, later in Section \ref{sec:M_com} we will explicitly implement a measurement 
$\cM$ such that both $\cM$ and $\widehat{\cM}$ 
satisfy $s$-detailed balance.

\begin{lemma}\label{lem:j1_to_1j} Suppose the map $\widehat{\cM}$ satisfies $s$-detailed balance, then 
   $$
        \SC \leq   \bCov_{\rho_{\beta}}\left(\cM\right).
   $$
\end{lemma}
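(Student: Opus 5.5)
The plan is to write both factors $\alpha_{1j}$ and $\alpha_{j1}$ as weighted inner products of the basis elements $Y_j$ against two fixed operators. Cauchy--Schwarz (Parseval in the $\langle\cdot,\cdot\rangle_s$-orthonormal basis $\{Y_j\}_j$) then bounds the sum by a product of two weighted norms. Contractivity (\cref{lem:contract}) reduces that product to a single squared norm, which is exactly $\bCov_{\rho_\beta}(\cM)$.

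\textbf{Step 1: the row $\alpha_{1j}$.} Since $\widehat{\cE}^\dagger=\widehat{\cM}^\dagger\cN^\dagger\cM^\dagger$ and $\cN^\dagger\cM^\dagger(I)=I$, we get $\widehat{\cE}^\dagger(I)=\widehat{\cM}^\dagger(I)=:A$. Hence $\alpha_{1j}=\langle Y_j,A\rangle_s$.

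\textbf{Step 2: the column $\alpha_{j1}$.} By definition $\alpha_{j1}=\langle I,\widehat{\cM}^\dagger\cN^\dagger\cM^\dagger(Y_j)\rangle_s$. The hypothesis that $\widehat{\cM}$ satisfies $s$-detailed balance means $\widehat{\cM}^\dagger$ is self-adjoint for $\langle\cdot,\cdot\rangle_s$. We use it to move $\widehat{\cM}^\dagger$ onto $I$, giving $\langle A,\cN^\dagger\cM^\dagger(Y_j)\rangle_s$. Then the detailed balance of $\cN$ and $\cM$ moves these maps across, giving $\alpha_{j1}=\langle B,Y_j\rangle_s$ with $B:=\cM^\dagger\cN^\dagger(A)$.

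\textbf{Step 3: Cauchy--Schwarz and contractivity.} Cauchy--Schwarz over $j$ and Parseval in the orthonormal basis give
\begin{align*}
\SC=\sum_j |\langle Y_j,A\rangle_s|\,|\langle B,Y_j\rangle_s| \le \Big(\sum_j|\langle Y_j,A\rangle_s|^2\Big)^{1/2}\Big(\sum_j|\langle Y_j,B\rangle_s|^2\Big)^{1/2}=\|A\|_s\|B\|_s ,
\end{align*}
where $\|X\|_s^2:=\langle X,X\rangle_s$. Applying \cref{lem:contract} twice, first to $\cN$ and then to $\cM$, gives $\|B\|_s\le\|A\|_s$, so $\SC\le\langle A,A\rangle_s$.

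\textbf{Step 4: identifying the covariance.} Using self-adjointness of $\widehat{\cM}^\dagger$ once more,
\begin{align*}
\langle A,A\rangle_s=\langle I,\widehat{\cM}^\dagger\widehat{\cM}^\dagger(I)\rangle_s=\Tr\big(\rho_\beta\,\widehat{\cM}^\dagger\widehat{\cM}^\dagger(I)\big)=\Tr\big(\widehat{\cM}\widehat{\cM}(\rho_\beta)\big).
\end{align*}
Expanding the definition \eqref{eq:hatM} of $\widehat{\cM}$ with Kraus operators $\{O_u\}_u$ gives
\begin{align*}
\Tr\big(\widehat{\cM}\widehat{\cM}(\rho_\beta)\big)=\sum_{x,y}(x-\bE_{\rho_\beta}(\cM))(y-\bE_{\rho_\beta}(\cM))\Tr\big(O_yO_x\rho_\beta O_x^\dagger O_y^\dagger\big)=\bCov_{\rho_\beta}(\cM),
\end{align*}
which completes the bound.

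\textbf{Main obstacle.} The delicate step is Step 2: getting the correct adjoint manipulations for $\alpha_{j1}$, which is not simply $\alpha_{1j}^*$. Two points need care. First, the detailed-balance hypothesis on $\widehat{\cM}$ is what lets us move $\widehat{\cM}^\dagger$ off $Y_j$; this works because the outcome weights $u-\bE_{\rho_\beta}(\cM)$ are real, so the self-adjointness applies without conjugation issues. Second, the operator $B$ must be placed so that contractivity of $\cN^\dagger$ and $\cM^\dagger$ applies, i.e.\ $B$ must be an image of $A$ under these maps rather than an expression involving $Y_j$.
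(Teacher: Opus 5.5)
Your proposal is correct and is essentially the paper's proof. You identify $\alpha_{1j}$ and $\alpha_{j1}^*$ as the coordinates of $\widehat{\cM}^\dagger(I)$ and $\cM^\dagger\cN^\dagger\widehat{\cM}^\dagger(I)$ in the $\langle\cdot,\cdot\rangle_s$-orthonormal basis $\{Y_j\}_j$, compare their weighted norms by applying Lemma~\ref{lem:contract} twice, identify $\langle \widehat{\cM}^\dagger(I),\widehat{\cM}^\dagger(I)\rangle_s$ with $\bCov_{\rho_\beta}(\cM)$ through the self-adjointness of $\widehat{\cM}^\dagger$, and close with Cauchy--Schwarz, exactly as the paper does (the only difference being that you apply Cauchy--Schwarz before contractivity rather than after).
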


\begin{proof}
By Eq.~(\ref{eq:def_alpha}), we have that 
\begin{align}
    \sum_j |\alpha_{1j}|^2 = \langle \widehat{\cE}^\dagger(I),\widehat{\cE}^\dagger(I) \rangle_s  =  \langle \widehat{\cM}^\dagger(I),\widehat{\cM}^\dagger(I) \rangle_s 
\end{align}
where we implicitly use $\cM^\dagger(I)=I$, $\cN^\dagger(I)=I$ since $\cM$ and $\cN$ are quantum channels.

Besides, since $\widehat{\cM}, \cM$ and $\cN$ satisfy $s$-detailed balance, we have that
\begin{align}
    &\alpha_{j1}=\langle I, \widehat{\cE}^\dagger(Y_j)\rangle_s = \langle I, \widehat{\cM}^\dagger \cN^\dagger \cM^\dagger (Y_j)\rangle_s =  \langle   \cM^\dagger\cN^\dagger\widehat{\cM}^\dagger(I),    Y_j\rangle_s, \\
    \text{thus } & \cM^\dagger\cN^\dagger\widehat{\cM}^\dagger(I) = \sum_j \alpha_{j1}^* Y_j.
\end{align}
Thus we have
\begin{align}
    \sum_j |\alpha_{j1}|^2 &= \langle \cM^\dagger\cN^\dagger\widehat{\cM}^\dagger(I),\cM^\dagger\cN^\dagger\widehat{\cM}^\dagger(I)\rangle_s \\
    &\leq \langle \cN^\dagger\widehat{\cM}^\dagger(I),\cN^\dagger\widehat{\cM}^\dagger(I)\rangle_s\\
    &\leq \langle \widehat{\cM}^\dagger(I),\widehat{\cM}^\dagger(I)\rangle_s
\end{align}
where for the last two inequalities we use the fact that since $\cM$, $\cN$ satisfy $s$-detailed balance, then $\cM^\dagger$ and  $\cN^\dagger$ are contractive w.r.t. weighted norm [Lemma \ref{lem:contract}].
Besides, since $\widehat{\cM}$ satisfies $s$-detailed balance, we further have that 
$$\langle \widehat{\cM}^\dagger(I),\widehat{\cM}^\dagger(I)\rangle_s =  \langle I, \widehat{\cM}^\dagger\widehat{\cM}^\dagger(I)\rangle_s = \bCov_{\rho_{\beta}}\left(\cM \right).$$
We complete the proof of the Lemma by using Cauchy-Schwarz inequality.

\end{proof}

\subsection{Mixing Time, Relaxation Time, and Autocorrelation Time}\label{sec:time}

Recall that $\cN$  and $\cM$ are quantum channel satisfying $s$-detailed balance 
with respect to $\rho_\beta$, which $\cN$ represents the Gibbs sampling algorithm, and $\cM$ represents the measurement channel.

The \textbf{mixing time} of $\cN$ is 
defined as the number of channel applications required to drive any initial state close to the Gibbs state,
\begin{align}
    t_{mix}(\epsilon):= \min_{t \in\mathbb{N}} \left\{t\geq 0: \|\cN^t(\rho)-\rho_\beta\|_1 \leq \epsilon, \forall \text{ quantum state }\rho \right\}.
 \end{align}

The \textbf{relaxation time} of $\cN$ is defined as the inverse of the spectral gap
\begin{align}
    t_{rel} := \frac{1}{gap(\cN)}.
\end{align}

The relationship between mixing times and relaxation time is given by
\begin{align}
    \left(t_{rel}-1\right) \log\left(\frac{1}{\epsilon}\right) \leq t_{mix}(\epsilon) \leq  t_{rel} \log\left(\frac{1}{\epsilon\,\sigma_{min}}\right).\label{eq:mix_rel}
\end{align}
where $\sigma_{min}$ denotes the minimum eigenvalue of the Gibbs state, which is typically exponentially small. For completeness, we put a proof of Eq.~(\ref{eq:mix_rel}) in Appendix \ref{app:DB} Lemma \ref{lem:mix_rel}.

\paragraph{Autocorrelation time}
Recall that our trajectory-based observable estimation algorithm, Algorithm~\ref{alg:intro_main}, consists of two stages: a \textit{burn-in stage}, in which the chain is evolved for $T_{burn} \sim t_{mix}$ to remove dependence on the initial state and reduce bias; and a \textit{sampling stage}, where the quantum system is approximately stationary and its trajectory is used to estimate observables.
The autocorrelation time quantifies the time in the sampling stage required for two outcomes to become approximately independent, with respect to the measurement $\cM$.

Assuming the quantum system is already in the Gibbs state,
 the normalized autocorrelation function w.r.t. $\cM$ is defined as
\begin{align}
     C^{\cM}_{\rho_{\beta}}(t): = \frac{  \bCor_{\rho_{\beta}}\left(\cE^t\right)}{\bV_{\rho_{\beta}}(\cM)} \text{ where } \cE=\cM\cN\cM.
\end{align}
For a finite trajectory of length $K$, the (finite-size integrated) autocorrelation time is defined as
\begin{align}
    t_{aut,K}:= \frac{1}{2} +  \sum_{t=1}^{K} \left( 1- \frac{t}{K} \right)C^{\cM}_{\rho_{\beta}}(t-1).
\end{align}

Note that the autocorrelation time is naturally connected to the performance of empirical average, 
by noticing that 
    \begin{align}
        K \bV_{\rho_{\beta}}\left(\cM \right) \, 2 t_{aut,K} 
         &=  K \bV_{\rho_{\beta}}\left(\cM \right) +  2\sum_{t=1}^K \sum_{p=1}^{K-t} \bCor_{\rho_{\beta}}\left(\cE^{p-1}\right)\\
        & = \bE_{\rho_{\beta}}\left( \sum_{t=1}^K e_t - K \,\bE_{\rho_{\beta}}(\cM)\right)^2,
    \end{align}
where the last equality can be verified using Eqs.~(\ref{eq:34})–(\ref{eq:37}) in Section \ref{sec:smi}.

We will prove that the autocorrelation time $t_{aut,K}$ is upper bounded by the relaxation time $t_{rel}$ up to an additional factor (depending on the measurement) and an additive constant. Note that compared to $t_{rel}$, for $\epsilon=1/2$, the mixing time $t_{mix}(1/2)$ includes an additional factor $\log(\sigma_{min}^{-1})$, which is often proportional to the system size $n$ and represents the transport time across the state space.  We remark that $t_{rel}$ only gives an upper bound for $t_{aut,K}$.
In practice for observables of interest, the separation  between $t_{aut,K}$ and $t_{mix}$ could be even larger than this factor $\log(\sigma_{min}^{-1})$, as illustrated in the examples in Section \ref{sec:example}.

All the following lemmas are based on the finite-size autocorrelation time. For completeness, and in accordance with the classical MCMC literature, we note that the autocorrelation time for an infinitely long trajectory is defined as
\begin{align}
    t_{aut,\infty}:= \frac{1}{2} +  \sum_{t=1}^{\infty} C^{\cM}_{\rho_{\beta}}(t-1) \label{eq:aut}.
\end{align}

\subsection{Stationary moment identities}\label{sec:smi}

Before analyzing the performance of Algorithm~\ref{alg:intro_main}, we first record some basic identities for the random variable $e_t$, which represents the measurement outcome in the algorithm. We denote by $\bE_{\rho}[e_t]$ the expectation of $e_t$ when the algorithm is initialized in state $\rho$.
Similar notation is used for $\bE_{\rho}[(e_t)^2]$ 
 and the probabilities, e.g., $\bPr_{\rho}(\cdot)$.

\paragraph{Stationary case} To simplify the analysis, we temporarily assume that the initial state of the sampling stage is $\rho=\rho_{\beta}$.
To ease the notation, denote
 \begin{align}
 v:=  \bE_{\rho_{\beta}}(\cM). \label{eq:34}
 \end{align}
 Recall that the Kraus operators of $\cM$ is $\{O_u\}_u$. 
When the initial state $\rho$ is $\rho_{\beta}$, one can check that 
\begin{align}
	&\bE_{\rho_{\beta}}\left[e_t\right]   =  \sum_u\,u\,\, \Tr\left( O_u \rho_{\beta} O_u^\dagger    \right)= \bE_{\rho_{\beta}}(\cM), \label{eq:35}\\
    &\bE_{\rho_{\beta}}\left[(e_t-v)^2\right]   = \sum_u\,\left|u -\bE_{\rho_{\beta}}(\cM)\right|^2\, \Tr\left( O_u \rho_{\beta} O_u^\dagger    \right)= \bV_{\rho_{\beta}}(\cM).\label{eq:36}
\end{align}

For any $t, p>0$, we have
\begin{align}
     \bE_{\rho_{\beta}}\left[ \left( e_t -  v\right) \left( e_{t+p} -  v \right) \right]  = \bCor_{\rho_{\beta}}\left(\cE^{p-1}\right),\label{eq:37}
\end{align}
since 
    \begin{align}
         \bE_{\rho_{\beta}}\left[ \left( e_t -  v\right) \left( e_{t+p} -  v \right) \right] &=\sum_{xy}  \left(x-v\right)\left(y-v\right) \, \Tr\left(   O_y \left( \cE^{p-1}\circ\cM\circ\cN \left( O_x \rho_{\beta} O_x^\dagger\right) \right) O_y^\dagger \right) \\
         &=\sum_{xy}  \left(x-v\right)\left(y-v\right) \, \Tr\left(   \cM \circ\cN \left(O_y \left( \cE^{p-1}\circ\cM\circ\cN \left( O_x \rho_{\beta} O_x^\dagger\right) \right) O_y^\dagger \right) \right)\label{eq:222}\\
         & =\Tr\left(  \widehat{\cE} \circ \cE^{p-1}\circ \widehat{\cE}\left(\rho_{\beta} \right)\right),
    \end{align}
      where Eq.~(\ref{eq:222}) comes from the fact that $\cM\circ\cN$ is a quantum channel thus trace preserving.

\subsection{Reduction from arbitrary initialization}
Now we consider the case where the algorithm is initialized in an arbitrary state and a burn-in stage is used to reduce bias.

Denote $\rho_{burn} := \cN^{T_{burn}}(\rho)$ as the state after the burn-in period.

\begin{lemma} \label{lem:aut}
Consider  quantum channels $\cN$ and $\cM$ that satisfy the $s$-detailed balance.
Let $e_t$ be the measurement outcome  in Algorithm \ref{alg:intro_main} w.r.t  an arbitrary initial state $\rho$.
  For any $\epsilon, \eta>0$,  set $T_{burn}=t_{mix}\left( \eta\right)$ and
    \begin{align}
    	K\geq \frac{2 \bV_{\rho_\beta}(\cM)}{ \epsilon^2 \eta} \, t_{aut,K}.
    \end{align}
    We have that
    \begin{align}
        \bPr_{\rho} \left( \left|\frac{1}{K}\left(\sum_{t=1}^K e_t\,\right) - \bE_{\rho_{\beta}}(\cM)\right|\geq \epsilon \right) \leq 2\eta.
            \end{align}
\end{lemma}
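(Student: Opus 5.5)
We reduce to the stationary case and then apply Chebyshev's inequality. Let $E$ denote the bad event
\[
\left|\tfrac{1}{K}\sum_{t=1}^K e_t - \bE_{\rho_{\beta}}(\cM)\right|\geq \epsilon .
\]
Its indicator is a function of the classical outcome record $(e_1,\dots,e_K)$. This record is produced by applying a fixed quantum instrument to the input state $\rho_{burn}=\cN^{T_{burn}}(\rho)$: the sequence of $K$ applications of $\cE$, retaining the outcome of the first $\cM$ in each block. The instrument is a sequence of CPTP maps with classical registers. Hence the total outcome distribution, as a function of the input state, is the result of a single CPTP map (measurement channel) from quantum states to classical probability distributions. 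By monotonicity of trace distance under CPTP maps, the total variation distance between the outcome distributions for inputs $\rho_{burn}$ and $\rho_\beta$ is at most $\tfrac12\|\rho_{burn}-\rho_\beta\|_1\le \|\rho_{burn}-\rho_\beta\|_1$. Since $T_{burn}=t_{mix}(\eta)$, this is at most $\eta$. Therefore
\[
\bPr_{\rho}(E)\le \bPr_{\rho_\beta}(E)+\eta ,
\]
and it suffices to show $\bPr_{\rho_\beta}(E)\le \eta$.

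For the stationary case we use the moment identities of \cref{sec:smi}. When the sampling stage starts in $\rho_\beta$, every $\cE$ fixes $\rho_\beta$ (by \cref{lem:unique}, or directly by detailed balance and \cref{fact:QDB}). Hence every $e_t$ has mean $v=\bE_{\rho_\beta}(\cM)$ and variance $\bV_{\rho_\beta}(\cM)$, by Eqs.~(\ref{eq:35})--(\ref{eq:36}). The cross terms satisfy $\bE_{\rho_\beta}[(e_t-v)(e_{t+p}-v)]=\bCor_{\rho_\beta}(\cE^{p-1})$ by Eq.~(\ref{eq:37}). Expanding the square and collecting lags $p=1,\dots,K-1$, each lag occurring $K-p$ times, gives
\[
\bE_{\rho_\beta}\Big(\sum_{t=1}^K e_t-Kv\Big)^2 = K\bV_{\rho_\beta}(\cM)+2\sum_{p=1}^{K}(K-p)\bCor_{\rho_\beta}(\cE^{p-1}) = 2K\,\bV_{\rho_\beta}(\cM)\,t_{aut,K}.
\]
The $p=K$ term vanishes, so the sum may run up to $K$ as in the definition. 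This is exactly the identity recorded after the definition of $t_{aut,K}$.

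Chebyshev's inequality then gives
\[
\bPr_{\rho_\beta}(E)\le \frac{2\bV_{\rho_\beta}(\cM)\,t_{aut,K}}{K\epsilon^2}\le \eta ,
\]
using the hypothesis on $K$. Combining with the reduction yields $\bPr_\rho(E)\le 2\eta$.

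The main obstacle is the reduction step. One has to argue carefully that the entire adaptive sequence of intermediate measurements constitutes one fixed instrument applied to $\rho_{burn}$, so that data processing of the trace norm applies to the joint law of all $K$ outcomes rather than to each outcome separately. The variance computation itself is routine bookkeeping of lags. It also uses that $t_{aut,K}\ge 0$, which holds because it is proportional to a second moment.
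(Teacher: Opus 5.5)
Your proof is correct and follows essentially the same route as the paper. You view the whole sampling stage as one instrument applied to $\rho_{burn}$, which gives $|\bPr_\rho(E)-\bPr_{\rho_\beta}(E)|\le\|\rho_{burn}-\rho_\beta\|_1\le\eta$; you then apply Chebyshev in the stationary case using Eqs.~(\ref{eq:35})--(\ref{eq:37}) and the identity $\bE_{\rho_\beta}\bigl(\sum_t e_t-Kv\bigr)^2=2K\,\bV_{\rho_\beta}(\cM)\,t_{aut,K}$. Your phrasing via a CPTP map onto classical outcome distributions is slightly more careful than the paper's single-Kraus-operator notation $Q_{\vec e}$, and it yields the sharper factor $\tfrac12$.
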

\begin{proof}
First, we show that one can assume the initial state is $\rho_\beta$, at the cost of an additional error $\eta$ in the probability. 

Specifically, we treat all measurements performed after the burn-in period, together with the intermediate Gibbs sampling channels, as a single composite measurement. The composite measurement outcome is then $\vec{e} := (e_1, \dots, e_K) \in \mathbb{R}^K$, where we denote the associated Kraus operator as $Q_{\vec{e}}$. 
Denote $S$ as the set of $\vec{e}$ where the random variable $X_K=(e_1+...e_K)/K$ satisfies $\left|X_K- \bE_{\rho_{\beta}}(\cM)\right|\geq \epsilon$. 
Note that 
\begin{align}
   \left|  \bPr_{\rho} \left(\vec{e}\in S\right)-  \bPr_{\rho_{\beta}} \left(\vec{e}\in S\right) \right| =  \left|  \sum_{\vec{e} \in S}  \Tr\left(Q_{\vec{e}} \left(\rho_{burn}-\rho_{\beta}\right)Q_{\vec{e}}^\dagger \right) \right| \leq \left\|\rho_{burn}-\rho_{\beta}\right\|_1\leq \eta.
\end{align}
Thus it suffices to assume that the initial state is $\rho_{\beta}$ and change the error probability from $2\eta$ to $\eta$, that is, prove that
\begin{align}
        \bPr_{\rho_{\beta}} \left( \left|\frac{1}{K}\left(\sum_{t=1}^K e_t\,\right) - \bE_{\rho_{\beta}}(\cM)\right|\geq \epsilon \right) \leq \eta.
\end{align}
which follows directly from Chebyshev inequality and the formulas in the ideal case, i.e. Eqs.~(\ref{eq:35})(\ref{eq:36})(\ref{eq:37}).
\end{proof}

\subsection{Time average behavior of the empirical average}
\label{sec:aut_gap}

Our main result is to bound the autocorrelation time $t_{aut,K}$ by the inverse spectral gap of the original quantum Gibbs sampling channel $\cN$, up to an additional factor.

\begin{lemma}\label{lem:aut_gap} Let $\cN$ and $\cM$ be quantum channels satisfying Assumption \ref{ass:NM}.  Suppose the map  $\widehat{\cM}$ also satisfies $s$-detailed balance. 
   Then the autocorrelation time is bounded by
    \begin{align}
 t_{aut,K} \leq \frac{1}{gap(\cN)} \, \bp + \frac{1}{2}, \text{  where  }
 	\bp := \frac{\bCov_{\rho_{\beta}}\left(\cM\right)}{\bV_{\rho_{\beta}}(\cM)}. \label{eq:theta}
 \end{align}
\end{lemma}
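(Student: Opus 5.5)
The plan is to expand each autocorrelation $\bCor_{\rho_\beta}(\cE^{t})$ in the eigenbasis $\{Y_i\}_i$ of $\cE^\dagger$, bound it by a weighted sum of $\lambda_j^t$, and then sum the resulting geometric series. Throughout I rely on Lemma~\ref{lem:MNM_N} to get $\lambda_j \le 1-gap(\cN)$ for $j\ge 2$, and on Lemma~\ref{lem:j1_to_1j} to control the weights.

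\textbf{Step 1: spectral expansion.} Move to the Heisenberg picture:
\begin{align*}
\bCor_{\rho_\beta}(\cE^t) = \Tr\left(\rho_\beta\, \widehat{\cE}^\dagger (\cE^\dagger)^t \widehat{\cE}^\dagger(I)\right) = \langle I, \widehat{\cE}^\dagger (\cE^\dagger)^t \widehat{\cE}^\dagger(I)\rangle_s ,
\end{align*}
where I use $\langle I, X\rangle_s = \Tr(\rho_\beta X)$. By \eqref{eq:def_alpha} we have $\widehat{\cE}^\dagger(I)=\sum_j \alpha_{1j} Y_j$ and $(\cE^\dagger)^t Y_j = \lambda_j^t Y_j$. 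Applying $\widehat{\cE}^\dagger$ once more and pairing with $Y_1=I$ picks out the coefficient $\alpha_{j1}$, so
\begin{align*}
\bCor_{\rho_\beta}(\cE^t)=\sum_j \alpha_{1j}\alpha_{j1}\lambda_j^t .
\end{align*}
The $j=1$ term vanishes because $\alpha_{11}=0$ (Lemma~\ref{lem:alpha}). Hence
\begin{align*}
|\bCor_{\rho_\beta}(\cE^t)| \le \sum_{j\ge 2}|\alpha_{1j}\alpha_{j1}|\lambda_j^t .
\end{align*}

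\textbf{Step 2: gap bound.} By Lemma~\ref{lem:unique}, $\cE$ has spectrum in $[0,1]$ with unique fixed point $\rho_\beta$. By Lemma~\ref{lem:MNM_N}, $0\le \lambda_j \le 1-gap(\cE)\le 1-gap(\cN)$ for all $j\ge 2$.

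\textbf{Step 3: summation.} Since $0\le 1-t/K\le 1$, we have
\begin{align*}
t_{aut,K}\le \frac12 + \frac{1}{\bV_{\rho_\beta}(\cM)}\sum_{j\ge2}|\alpha_{1j}\alpha_{j1}|\sum_{t=0}^{K-1}\lambda_j^t .
\end{align*}
Because $\lambda_j\ge 0$, the inner sum is at most $\sum_{t\ge0}\lambda_j^t = 1/(1-\lambda_j)\le 1/gap(\cN)$. What remains is $\frac{1}{gap(\cN)}\cdot\SC/\bV_{\rho_\beta}(\cM)$. Lemma~\ref{lem:j1_to_1j}, which uses that $\widehat{\cM}$ satisfies $s$-detailed balance, bounds $\SC\le\bCov_{\rho_\beta}(\cM)$. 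This yields $\frac{\bp}{gap(\cN)}+\frac12$, as claimed.

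\textbf{Main obstacle.} The delicate point is Step 1. One has to justify that the weighted pairing $\langle I,\cdot\rangle_s$ reproduces the trace against $\rho_\beta$. One also has to handle the fact that $\widehat{\cE}^\dagger$ is not self-adjoint with respect to $\langle\cdot,\cdot\rangle_s$: this is why both $\alpha_{1j}$ and $\alpha_{j1}$ appear, and why Lemma~\ref{lem:j1_to_1j} is needed to bound the spectral covariance weight rather than obtaining a sum of squares directly. The remaining steps only use the non-negativity of the spectrum, so that the absolute value of the geometric series is harmless.
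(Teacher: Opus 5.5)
Your proposal is correct and follows essentially the same route as the paper. The shared ingredients are the spectral expansion $\bCor_{\rho_\beta}(\cE^{t})=\sum_{j\ge 2}\alpha_{1j}\alpha_{j1}\lambda_j^{t}$ (the paper's Lemma~\ref{lem:cov_eigenvalue}), the bound $0\le\lambda_j\le 1-gap(\cN)$ from Lemma~\ref{lem:unique} and Lemma~\ref{lem:MNM_N}, the geometric series, and Lemma~\ref{lem:j1_to_1j} to replace $\SC$ by $\bCov_{\rho_\beta}(\cM)$. The only difference is cosmetic: the paper bounds $\bE_{\rho_\beta}\bigl[(\sum_{t=1}^K e_t-Kv)^2\bigr]$ (Lemma~\ref{lem:time_average_var}) and converts via $2K\,\bV_{\rho_\beta}(\cM)\,t_{aut,K}=\bE_{\rho_\beta}\bigl[(\sum_{t=1}^K e_t-Kv)^2\bigr]$, whereas you bound $t_{aut,K}$ directly using $0\le 1-t/K\le 1$.
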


The proof of Lemma \ref{lem:aut_gap} is deferred to Section~\ref{sec:proof_aut_gap}. 

Together with Lemma~\ref{lem:aut} and Lemma~\ref{lem:aut_gap}, we get the theorem for the performance of the empirical averages.

\begin{theorem}[Time averages]\label{thm:time_average}   Let $\cN$ and $\cM$ be quantum channels satisfying Assumption \ref{ass:NM}. Suppose the map  $\widehat{\cM}$ also satisfies $s$-detailed balance. 
Let $e_j$ be the measurement outcome  in Algorithm \ref{alg:intro_main} w.r.t an arbitrary initial state $\rho$.  
For any $\epsilon,\eta>0$,
set $T_{burn}:=t_{mix}(\eta)$ and
\begin{align}
    	&K\geq \frac{2  \bV_{\rho_\beta}(\cM)}{\epsilon^2 \eta}\left(\frac{1}{gap(\cN)} \, \bp + \frac{1}{2}
 \right) \text{ for $\bp = \frac{\bCov_{\rho_{\beta}}\left(\cM\right)}{\bV_{\rho_{\beta}}(\cM)}$}.
 \end{align}
 We have that
 \begin{align}
     \bPr_{\rho} \left( \left|\frac{1}{K}\left(\sum_{t=1}^K e_t\,\right) - \bE_{\rho_{\beta}}(\cM)\right|\geq \epsilon \right) \leq 2\eta.
    \end{align}
 \end{theorem}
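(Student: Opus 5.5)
The plan is to obtain Theorem~\ref{thm:time_average} directly by combining Lemma~\ref{lem:aut} with Lemma~\ref{lem:aut_gap}. Both lemmas are stated under exactly the hypotheses of the theorem. Lemma~\ref{lem:aut} needs $\cN,\cM$ to satisfy $s$-detailed balance, which Assumption~\ref{ass:NM} provides. Lemma~\ref{lem:aut_gap} needs Assumption~\ref{ass:NM} together with $s$-detailed balance of $\widehat{\cM}$, which is assumed in the theorem. So no new structural facts have to be established.

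\textbf{Step 1.} Invoke Lemma~\ref{lem:aut_gap} to get
\begin{align*}
t_{aut,K}\le \frac{\bp}{gap(\cN)}+\frac12 .
\end{align*}
This bound holds uniformly in $K$, because the right-hand side does not depend on $K$.

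\textbf{Step 2.} Multiply by $2\bV_{\rho_\beta}(\cM)/(\epsilon^2\eta)\ge 0$ and use the hypothesis on $K$:
\begin{align*}
K\ \ge\ \frac{2\bV_{\rho_\beta}(\cM)}{\epsilon^2\eta}\left(\frac{\bp}{gap(\cN)}+\frac12\right)\ \ge\ \frac{2\bV_{\rho_\beta}(\cM)}{\epsilon^2\eta}\,t_{aut,K}.
\end{align*}
This is precisely the condition required in Lemma~\ref{lem:aut}.

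\textbf{Step 3.} With $T_{burn}=t_{mix}(\eta)$, Lemma~\ref{lem:aut} gives the stated tail bound $2\eta$ for an arbitrary initial state $\rho$.

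The only point needing care, and the main (mild) obstacle, is a possible circularity: the condition in Lemma~\ref{lem:aut} involves $t_{aut,K}$, which itself depends on $K$. This is resolved by the uniformity in $K$ noted in Step~1.

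Implicitly, $\bV_{\rho_\beta}(\cM)>0$ is assumed so that $\bp$ and $C^{\cM}_{\rho_\beta}$ are defined. If the variance is zero, the outcomes are almost surely constant under $\rho_\beta$, and the statement follows trivially from the burn-in reduction in the proof of Lemma~\ref{lem:aut}.
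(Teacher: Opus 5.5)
Your proposal is correct and matches the paper, which also obtains Theorem~\ref{thm:time_average} by plugging the $K$-independent bound $t_{aut,K}\le \bp/gap(\cN)+\tfrac12$ from Lemma~\ref{lem:aut_gap} into the sample-size condition of Lemma~\ref{lem:aut}. Your remarks that the bound is uniform in $K$ (so the condition is not circular) and that the case $\bV_{\rho_\beta}(\cM)=0$ is trivial are sound details the paper leaves implicit.
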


\begin{remark}\label{remark:skip_measure} For observables that commute with $H$, we  construct $\cM$ where $\bp\leq 1$. Details are presented later in Theorem~\ref{thm:M_com} in Section~\ref{sec:GQPE}. 
    We remark that the measurement cost can be reduced by skipping measurements. 
    Let $c_{\cN}$ and $c_{\cM}$ denote the costs of running quantum channels $\cN$ and $\cM$, respectively. 
    In the above theorem, the total cost of Algorithm~\ref{alg:intro_main} is the sum of the following components:
\begin{align}
  \text{cost of Gibbs sampling evolution }&\propto  t_{mix}(\eta) +    \frac{1}{gap(\cN)\, \epsilon^2 \eta}  \left(c_{\cN} \bV_{\rho_\beta}(\cM) \right), \label{eq:rem_N}\\
  \text{cost of measurement }& \propto \frac{1}{gap(\cN)\, \epsilon^2 \eta}  \left(c_{\cM}\bV_{\rho_\beta}(\cM) \right).\label{eq:rem_M}
\end{align}
To ease notation, denote
        $\Delta := gap(\cN).$
Let $r:=\lceil 1/\Delta\rceil$. Note that the spectral gap of $\cN^{r}$ is
$$gap\left(\cN^{r}\right)=  1- (1-\Delta)^{r}\geq 1- e^{-r\Delta}\geq 1- e^{-1}.$$ 
When $\Delta$ is small and the measurement cost $c_{\mathcal{M}}$ is high, the total measurement cost can be reduced by skipping measurements. That is, one can replace $\mathcal{N}$ with $\mathcal{N}^{r}$, or equivalently, apply the measurement $\mathcal{M}$ every $r$ time evolutions of $\mathcal{N}$ instead of after every evolution. Note that the cost of total Gibbs sampling evolution in Eq.~(\ref{eq:rem_N}) remains of the same order, since by $gap\left(\cN^{r}\right)\geq 1- e^{-1}$ we have that
\[
\frac{c_{\mathcal{N}}}{gap(\mathcal{N})} \sim  \frac{c_{\mathcal{N}^{r}}}{gap(\mathcal{N}^{r})}.
\]

\end{remark}

\subsection{Bounding autocorrelation time by the reciprocal spectral gap}\label{sec:proof_aut_gap}

In this section we prove Lemma \ref{lem:aut_gap}. Recall that to ease notations we write 
$$
    v= \bE_{\rho_{\beta}}(\cM).$$
Recall that $Y_i,\lambda_i$ are eigenvector and eigenvalues w.r.t. the map $\cE^\dagger$.
We begin by relating the correlations of the measurement outcomes to these eigenvalues.

\begin{lemma} \label{lem:cov_eigenvalue}  Let $\cN$ and $\cM$ be quantum channels satisfying Assumption \ref{ass:NM}. 
For any integer $t,p>0$, we have that
	\begin{align}
		\bE_{\rho_\beta}\left[ \left( e_t -  v\right) \left( e_{t+p} -  v \right) \right]  =  \sum_{j\geq 2} \alpha_{1j} \alpha_{j1} \lambda_j^{p-1}. 
	\end{align}
\end{lemma}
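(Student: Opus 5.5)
Starting from Eq.~(\ref{eq:37}), the left-hand side equals $\bCor_{\rho_{\beta}}(\cE^{p-1}) = \Tr\left(\widehat{\cE}\circ\cE^{p-1}\circ\widehat{\cE}(\rho_\beta)\right)$, so the whole task is to rewrite this trace in the eigenbasis $\{Y_i\}_i$ of $\cE^\dagger$. I will pass to the Heisenberg picture using the Hilbert--Schmidt duality. Since $\Tr(\rho_\beta X) = \Tr(X^\dagger\rho_\beta)^* $ and, because $\rho_\beta^{1-s}$ and $\rho_\beta^{s}$ combine inside a trace with $I$, we have $\langle I, X\rangle_s = \Tr(\rho_\beta X)$. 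The plan is therefore to write
\begin{align*}
\bCor_{\rho_{\beta}}(\cE^{p-1}) = \Tr\left(\rho_\beta\, \widehat{\cE}^\dagger (\cE^\dagger)^{p-1}\widehat{\cE}^\dagger(I)\right) = \langle I, \widehat{\cE}^\dagger (\cE^\dagger)^{p-1}\widehat{\cE}^\dagger(I)\rangle_s .
\end{align*}
This uses that the dual of a composition is the reversed composition of duals, and that $\widehat{\cM}^\dagger(X)=\sum_u (u-v) O_u^\dagger X O_u$ is the honest Hilbert--Schmidt dual of $\widehat{\cM}$ because $u-v$ is real.

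Next I will expand step by step using Eq.~(\ref{eq:def_alpha}). First, $\widehat{\cE}^\dagger(I)=\widehat{\cE}^\dagger(Y_1)=\sum_j \alpha_{1j}Y_j$. Applying $(\cE^\dagger)^{p-1}$ gives $\sum_j \alpha_{1j}\lambda_j^{p-1}Y_j$. Applying $\widehat{\cE}^\dagger$ once more gives $\sum_{j,k}\alpha_{1j}\lambda_j^{p-1}\alpha_{jk}Y_k$. Finally, take the inner product with $I=Y_1$. Since $\langle Y_1,Y_k\rangle_s=\delta_{1k}$ and the inner product is linear in the second slot, only $k=1$ survives, yielding $\sum_j \alpha_{1j}\alpha_{j1}\lambda_j^{p-1}$.

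The last step is to drop the $j=1$ term, using Lemma~\ref{lem:alpha}: $\alpha_{11}=0$, so $\alpha_{11}\alpha_{11}\lambda_1^{p-1}=0$. This gives the sum over $j\geq 2$. The convention $\lambda_j^0=1$ covers $p=1$, including any $\lambda_j=0$.

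The only delicate point, which I expect to be the main obstacle, is the first identity. I must check that $\Tr(\cT(A)) = \Tr(\cT^\dagger(I)A)$ holds for the non-CPTP map $\widehat{\cM}$, and that $\Tr(\rho_\beta X)=\langle I,X\rangle_s$ holds without any conjugation issue. Both follow from cyclicity of the trace and the definitions, and no detailed balance property of $\widehat{\cM}$ is needed here. Only Assumption~\ref{ass:NM} is required, to guarantee that the eigenbasis $\{Y_i\}_i$ with $Y_1=I$ exists.
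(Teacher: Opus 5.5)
Your proposal is correct and follows the paper's proof: start from Eq.~(\ref{eq:37}), move to the Heisenberg picture, expand $\widehat{\cE}^\dagger(I)$, then $(\cE^\dagger)^{p-1}$, then $\widehat{\cE}^\dagger$ in the eigenbasis $\{Y_i\}_i$, keep only the $Y_1$ component by orthonormality, and drop $j=1$ using $\alpha_{11}=0$. Your pairing $\Tr(\rho_\beta X)=\langle I,X\rangle_s$ is slightly more careful than the paper's $\langle Y_l,Y_1\rangle_s$, though both give $\delta_{l1}$.
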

 
\begin{proof} Recall that $Y_1=I$. From Eq.~(\ref{eq:37}) and Eq.~(\ref{eq:def_alpha}) we have that 
\begin{align}
    \bE_{\rho_{\beta}}\left[ \left( e_t -  v\right) \left( e_{t+p} -  v \right) \right]   &=\Tr\left(   \widehat{\cE} \circ \cE^{p-1}\circ \widehat{\cE}\left(\rho_{\beta} \right)\right)\\
    &= \Tr\left( \left( \widehat{\cE}^\dagger \left(\cE^{p-1}\right)^\dagger \widehat{\cE}^\dagger\right)[I]\,\, \rho_{\beta} \right)\\
    &= \sum_j \alpha_{1j} \lambda_j^{p-1} \sum_l \alpha_{jl} \, \langle Y_l, Y_1\rangle_s \\
      &= \sum_j \alpha_{1j} \alpha_{j1}  \lambda_j^{p-1},
\end{align}
where we use the property that $\langle Y_l, Y_1\rangle_s=\delta_{l1}$.
Thus we complete the proof by recalling that $\alpha_{11}=0$ [Lemma \ref{lem:alpha}].

\end{proof}

\begin{lemma} \label{lem:time_average_var} Let $\cN$ and $\cM$ be quantum channels satisfying Assumption \ref{ass:NM}. We have that
	\begin{align}
		\bE_{\rho_{\beta}}\left[ \left(\sum_{t=1}^K e_t -K v\,\right)^2 \right] \leq 		
		 K \left( \bV_{\rho_{\beta}}(\cM)   +   \frac{2}{gap(\cN)} \,\SC \right).
	\end{align}
\end{lemma}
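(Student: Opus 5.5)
Expand the square into diagonal and off-diagonal terms, then bound each family with the stationary identities already recorded:
\begin{align*}
\bE_{\rho_{\beta}}\left[ \left(\sum_{t=1}^K e_t -K v\right)^2 \right] = \sum_{t=1}^K \bE_{\rho_{\beta}}\left[(e_t-v)^2\right] + 2\sum_{t=1}^{K}\sum_{p=1}^{K-t} \bE_{\rho_{\beta}}\left[(e_t-v)(e_{t+p}-v)\right].
\end{align*}

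\textbf{Diagonal terms.} By Eq.~(\ref{eq:36}), each diagonal term equals $\bV_{\rho_{\beta}}(\cM)$, since the chain stays in $\rho_\beta$ under Assumption~\ref{ass:NM}. Their sum is therefore $K\bV_{\rho_{\beta}}(\cM)$.

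\textbf{Off-diagonal terms.} By Lemma~\ref{lem:cov_eigenvalue}, each cross term equals $\sum_{j\geq 2}\alpha_{1j}\alpha_{j1}\lambda_j^{p-1}$. I bound its absolute value termwise by $\sum_{j\ge2}|\alpha_{1j}\alpha_{j1}|\lambda_j^{p-1}$. This uses $\lambda_j\in[0,1]$, so $\lambda_j^{p-1}\ge 0$.

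\textbf{Summing over lags.} Summing over $p\ge1$ (extending the finite sum to infinity only increases it) gives the geometric series $\sum_{p\ge1}\lambda_j^{p-1} = 1/(1-\lambda_j)$ for $j\ge2$. Since the $\lambda_j$ are ordered non-increasingly, $1-\lambda_j\geq 1-\lambda_2 = gap(\cE)$. Hence for each fixed $t$ the inner sum over $p$ is at most $\SC/gap(\cE)$, using $\alpha_{11}=0$ so the $j=1$ term contributes nothing to $\SC$. Summing over $t$ contributes a factor of $K$, and with the factor of $2$ in front this gives $\frac{2K}{gap(\cE)}\SC$.

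\textbf{Main step: replacing $gap(\cE)$ by $gap(\cN)$.} The spectral gap here is that of $\cE=\cM\cN\cM$, and I need $1-\lambda_2 = gap(\cE)$. This holds because $\cE$ has $\rho_\beta$ as its unique fixed point with spectrum in $[0,1]$ (Lemma~\ref{lem:unique}), so $\lambda_1=1>\lambda_2$ and \cref{fact:QDB} applies. Lemma~\ref{lem:MNM_N} then gives $gap(\cE)\geq gap(\cN)$, so $1/(1-\lambda_j)\le 1/gap(\cN)$. Combining the diagonal and off-diagonal contributions yields the claimed bound.

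The delicate point is the sign control. The products $\alpha_{1j}\alpha_{j1}$ need not be nonnegative, because $\widehat{\cE}$ is not assumed self-adjoint here. The absolute values in the definition of $\SC$ are what make the termwise bound valid without any extra assumption on $\widehat{\cM}$. Everything else is a routine geometric-series estimate.
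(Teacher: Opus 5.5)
Your proposal is correct and follows essentially the same argument as the paper. You expand the square, get $K\bV_{\rho_{\beta}}(\cM)$ from the diagonal, and apply Lemma~\ref{lem:cov_eigenvalue} with termwise absolute values to the cross terms. You then bound the lag sum by the geometric series $1/(1-\lambda_j)$ and use Lemma~\ref{lem:unique} with Lemma~\ref{lem:MNM_N} to get $\lambda_j\le 1-gap(\cN)$ for $j\ge 2$, which is exactly the paper's proof.
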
 
\begin{proof} By Lemma \ref{lem:cov_eigenvalue},
we have that
\begin{align}
		\bE_{\rho_{\beta}}\left[ \left(\sum_{t=1}^K e_t - K v\,\right)^2 \right] &= \sum_{t=1}^K \bE_{\rho_{\beta}}\left[ (e_t-v)^2\right] +  \sum_{t=1}^K \sum_{p=1}^{K-t} 2 \,\, \bE_{\rho_{\beta}}\left[ (e_t-v)\,(e_{t+p}-v)\,\right]\\
	&= K \,\bV_{\rho_{\beta}}(\cM) +    \sum_{t=1}^K \sum_{p=1}^{K-t} \sum_{j\geq 2} 2\, \alpha_{1j}\alpha_{j1}  \,\lambda_j^{p-1} 	\\
	&\leq K \,\bV_{\rho_{\beta}}(\cM) +    \sum_{t=1}^K \sum_{p=1}^{K-t} \sum_{j\geq 2} 2\, |\alpha_{1j}\alpha_{j1}|  \,\lambda_j^{p-1}. \label{eq:58}	\end{align}
	
Note that by Assumption \ref{ass:NM} and 
Lemma \ref{lem:unique}, we have  that $\cE$ also has $\rho_{\beta}$ as unique fixed state and its the spectrum is contained in  $[0,1]$. 
By Fact \ref{fact:QDB}, we know the spectrum of $\cE^\dagger$ equals to the spectrum of $\cE$ and $gap(\cE)=gap(\cE^\dagger)$. Besides, from Lemma \ref{lem:MNM_N} we know $gap(\cE)\geq gap(\cN)$.
Thus for
any $j\geq 2$,  we have $\lambda_j \in [0, 1-gap(\cN)]$.  

Thus for any $j\geq 2$, one can check that

\begin{align}
 \sum_{t=1}^K \sum_{p=1}^{K-t}  2\,\lambda_j^{p-1}
 &= 2\sum_{p=0}^{K-2} (K-p-1)\,\lambda_j^{p} \\
 &\leq 2K\sum_{p=0}^{\infty} \lambda_j^{p} = \frac{2K}{1-\lambda_j} \\
 &\leq \frac{2K}{gap(\cN)}.
\end{align}

Since $\alpha_{11}=0$ by Lemma \ref{lem:alpha}, we have that the spectral covariance weight
 \begin{align}
 	\SC:= \sum_{j} |\alpha_{1j}\alpha_{j1}|  = \sum_{j\geq 2} |\alpha_{1j}\alpha_{j1}|.
 \end{align}

Then Eq.~(\ref{eq:58}) implies

\begin{align}
	\bE_{\rho_{\beta}}\left[ \left(\sum_{t=1}^K e_t-K v\,\right)^2 \right] &\leq K \,\bV_{\rho_{\beta}}(\cM) + \sum_{j\geq 2} |\alpha_{1j}\alpha_{j1}| \,\, \frac{2K}{gap(\cN)}\\
	&=   K \,\bV_{\rho_{\beta}}(\cM)+  \frac{2K}{gap(\cN)} \,\SC.
	\end{align}
\end{proof}

Now finally we are prepared to prove Lemma \ref{lem:aut_gap}, which is to bound the autocorrelation time.

\begin{proof}[of Lemma \ref{lem:aut_gap}]
It suffices to notice that
    \begin{align}
        K \bV_{\rho_{\beta}}\left(\cM \right) \, 2 t_{aut,K} &=  K \bV_{\rho_{\beta}}\left(\cM \right) +  2\sum_{t=1}^K \sum_{p=1}^{K-t} \bCor_{\rho_{\beta}}\left(\cE^{p-1}\right)\\
        & = \bE_{\rho_{\beta}}\left( \sum_{t=1}^K e_t - Kv\right)^2.
    \end{align}
Then  \cref{lem:aut_gap} follows from \cref{lem:time_average_var} 
    and Lemma \ref{lem:j1_to_1j}.
\end{proof}

\section{Implementation of the measurement}\label{sec:measure}

In this section, we describe the implementation of the measurement $\cM$ in Algorithm~\ref{alg:intro_main}. 

\paragraph{Structure of the section.}
In Section~\ref{sec:GQPE}, we review Gaussian-filtered quantum phase estimation (GQPE)~\cite{moussa2019low} and adapt it to our setting. For observables that commute with $H$, Section~\ref{sec:M_com} uses GQPE to construct a measurement that is unbiased and satisfies detailed balance, where Theorem~\ref{thm:time_average} applies directly. For concreteness, in Section~\ref{sec:energy}, we will focus on estimating the average energy of a Gibbs state and provide a detailed comparison of the costs associated with different ways of measuring the energy. Finally, in Section~\ref{sec:WOFT_noncom}, for observables that do not commute with $H$, we combine GQPE with an operator Fourier transform technique to construct a measurement that approximately fixes $\rho_\beta$ and may mitigate disturbances from measuring general observables.

\subsection{Gaussian-filtered quantum phase estimation (GQPE) and resource estimation}\label{sec:GQPE}

In this section, we review Gaussian-filtered quantum phase estimation (GQPE)~\cite{moussa2019low} and adapt it to our setting. We  provide a rigorous resource analysis of GQPE in Appendix \ref{app:GQPE}, which to our knowledge, is not available in prior literature and may be of independent interest. 

\paragraph{Idealized Gaussian filtered quantum phase estimation}

To build intuition, here we describe the idealized setting, assuming that the ancillary quantum system takes continuous values. 
The discretization  will be discussed later.

 The idealized Gaussian Filtered quantum phase estimation w.r.t. observable $\bO$ and parameter $\gamma$ is defined w.r.t. Kraus operator 
\begin{align}
    \bO_{w*}:= \frac{1}{\sqrt{\gamma \sqrt{2\pi}}} \exp\left[-\frac{(w- \bO)^2}{4\gamma^2}\right].\label{eq:GQPE_kraus}
\end{align}
Here the subscript $*$ indicates the idealized case.
In other words, the quantum channel $\cM_*$ is defined 
as
\begin{align}
  \cM_*(\rho):= \int_{-\infty}^{+\infty} d w \, \frac{1}{\gamma \sqrt{2\pi}}  \exp\left[-\frac{(w-\bO)^2}{4\gamma^2}\right]  \, \rho \, \exp\left[-\frac{(w-\bO)^2}{4\gamma^2}\right]. \label{eq:cM_def} 
\end{align}

One can check that $\int_{-\infty}^{+\infty} \bO_{w*}^\dagger \bO_{w*} =I$ thus $\cM_*$ defines a quantum channel. 

The first property of $\mathcal{M}_*$ is that the measurement value with respect to the  observable $O$ can be recovered from the measurement outcome of $\mathcal{M}_*$: 
Note that $\int_{w} w \bO_{w*} \bO_{w*}^\dagger =  \bO$, thus
for any quantum state $\rho$, we have that 
	$$\bE_{\rho}\left(\cM_*\right)=\Tr(\bO\rho).$$

The second property is that $\cM_*$ inherits the commutativity property between $\bO$ and $\rho$. First note that 
	 if $\bO$ commutes with $\rho$,  we have that 
     $\cM_*(\rho)=\rho,$ since $\bO_{w*}$ commutes with $\rho$ and $\bO_{w*}=\bO_{w*}^\dagger$.
	 Additionally, if the reference state $\rho$ commutes with $\bO$, then both 
    $\cM_*$ and $\widehat{\cM}_*$ satisfy $s$-detailed balance with respect to $\rho$.

\paragraph{Resource estimation and properties of GQPE} 

As in \cite{moussa2019low}, the idealized Gaussian-filtered quantum phase estimation can be discretized and implemented on qubit-based quantum computers. 
In Appendix~\ref{app:GQPE}, we provide a detailed discretization scheme,  summarized in  Algorithm~\ref{alg:GQPE}, and a rigorous resource analysis based on the Poisson summation formula, which yields a rapidly convergent Riemann sum approximation.

With the parameter choices specified in Appendix~\ref{app:GQPE}, for any observable $O$, we use GQPE to construct a measurement $\mathcal{M}$ that has low implementation cost, enables recovery of the measurement value of $O$, and preserves quantitative commutativity between $O$ and the input state. More specifically, we show that

\begin{theorem}[Gaussian-filtered quantum phase estimation (GQPE)]\label{thm:GQPE}
    Let $O$ be an observable where $\|O\|\leq \kappa = \poly(n)$.     Let  $\epsilon > 0$ be the precision parameter.
     One can implement a quantum channel  $\cM$ where for any quantum state $\rho$: 
    \begin{itemize}
     \item \textbf{Implementation cost:} the measurement channel $\cM$ can be implemented using $m$ ancilla qubits, $t$-time controlled Hamiltonian evolution of the operator $O$, and an additional $\poly(m)$ elementary gates, such that 
     $$m=\cO\left(\log (\kappa) + \log \log (\epsilon^{-1})\right); \quad t =\cO\left(\log(\kappa) + \log (\epsilon^{-1})\right).$$
        \item \textbf{Expectation and Variance:}
        The measurement statistics  can be used to recover the measurement value of $O$. More specifically, 
    \begin{align*}
    	&|\mathbb{E}_{\rho}[\cM] - \Tr[\rho O]| \le \epsilon,\\
    	&|\bV_{\rho}[\cM] - \bV_\rho[O]| = \cO(1) ,\\
    	&| \bCov_{\rho}\left(\cM\right)|\leq \bV_{\rho}[\cM],
    \end{align*}
    where $\bV_\rho[O] := \Tr[O^2 \rho]-\left(\Tr[O\rho]\right)^2.$    
    \item\textbf{Commutativity property:} If further $\|O\|\leq 1$,
    then the quantum channel $\cM$ approximately fixes $\rho$ when $O$ and $\rho$ approximately commute, in the sense that
    \begin{align}
        \left\| \cM[\rho] - \rho\right\|_1 \leq \epsilon^{-1} \polylog (\epsilon^{-1})  \left\| \, \left[ O,\rho \right] \, \right\|_1 + 5\epsilon. \label{eq:com}
    \end{align}
    \end{itemize}
\end{theorem}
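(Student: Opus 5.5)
I will build $\cM$ as a discretization of the idealized channel $\cM_*$ from Eq.~(\ref{eq:cM_def}) and prove each bullet by comparing $\cM$ with $\cM_*$. Concretely, I fix $\gamma=\Theta(1)$ and prepare on an $m$-qubit ancilla register a discretized Gaussian wavepacket in the \emph{time} (conjugate) variable. The Fourier transform of $\exp[-(w-O)^2/4\gamma^2]$ is a Gaussian in time of width $1/\gamma$, so it suffices to truncate at $|s|\le T=\cO(\gamma^{-1}\sqrt{\log(\kappa/\epsilon)})$. The circuit then applies the controlled evolution $e^{-isO}$, applies an inverse QFT on the register, and measures to obtain a grid value $w_k$ with spacing $h$ covering $[-\kappa-L,\kappa+L]$, where $L=\cO(\gamma\sqrt{\log(\kappa/\epsilon)})$.

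The resulting Kraus operators $O_k$ are periodized and truncated versions of $\sqrt{h}\,O_{w_k*}$. Using the Poisson summation formula, I will show two things: the Riemann sums $\sum_k h\,f(w_k)$ for Gaussian integrands agree with the integrals up to $e^{-\Omega(1/(h\gamma)^2)}$, and the aliasing terms are of the same size. Taking $h\gamma$ constant and making the grid cover the range gives $\|O_k-\sqrt h O_{w_k*}\|$ summed errors $\le\epsilon$ once $t=T\cdot(\text{controls})=\cO(\log\kappa+\log\epsilon^{-1})$. The grid has $\cO(\kappa/h)$ points, so $m=\cO(\log\kappa+\log\log\epsilon^{-1})$, and the QFT plus state preparation costs $\poly(m)$ gates. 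Getting the truncation and aliasing constants right with only $\log\log$ ancilla overhead is the technical core, and I expect this to be the main obstacle: the time grid must be fine enough that the periodization in $w$ does not wrap eigenvalues of $O$, while staying log-sized.

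For the moments, I first compute them for $\cM_*$. The mean is exactly $\Tr(O\rho)$ because $\int w\,O_{w*}^2=O$. The second moment is $\int w^2 O_{w*}^2=O^2+\gamma^2$, so $\bV_\rho(\cM_*)=\bV_\rho[O]+\gamma^2=\bV_\rho[O]+\cO(1)$. For the covariance, $O_{w*}$ are functions of $O$ and hence commute, so two successive outcomes given eigenvalue $\lambda$ are independent Gaussians. Therefore $\bCov_\rho(\cM_*)=\bV_\rho[O]\le\bV_\rho(\cM_*)$. The discretized quantities differ from these by Riemann-sum errors weighted by $|w|\le\kappa+L$; these are polynomially bounded, so they are absorbed into $\epsilon$ by the $\log(\kappa/\epsilon)$ choices. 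To keep $|\bCov|\le\bV$ exact rather than approximate, I will note that the discretized $O_k$ are still functions of $O$. The outcome distribution then remains a mixture over eigenvalues of discrete kernels $p(k\mid\lambda)$, and the two draws are conditionally i.i.d. By Cauchy--Schwarz the covariance is then $\bV_\lambda(\text{conditional mean})\le\bV$.

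For the commutativity bound, I write $\cM[\rho]-\rho=\sum_k O_k\rho O_k-\rho\sum_kO_k^2$ (the $O_k$ are Hermitian functions of $O$) $=\sum_k O_k[\rho,O_k]$, up to the $\epsilon$-level discretization errors, which contribute the $5\epsilon$. I then bound $\|[\rho,g(O)]\|_1\le\|\widehat{g'}\|_{L^1}\|[\rho,O]\|_1$ using the Fourier representation $g(O)=\int\hat g(s)e^{isO}ds$ and $\|[\rho,e^{isO}]\|_1\le|s|\,\|[\rho,O]\|_1$. Since $\|O\|\le1$, I would rescale, taking $\gamma\sim\epsilon$ in this regime or equivalently working with the time truncation $T\sim\epsilon^{-1}\polylog\epsilon^{-1}$. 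Summing $\|O_k\|\cdot T$ over the $\cO(1/h)$ relevant grid points gives the factor $\epsilon^{-1}\polylog(\epsilon^{-1})$.
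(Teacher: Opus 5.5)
Your construction (Gaussian register in the conjugate variable, controlled evolution, QFT, Poisson summation for aliasing and truncation), the idealized moments, and the conditional-i.i.d.\ covariance argument match the paper (\cref{lem:Owg}, \cref{lem:gaussian-truncation-error}, \cref{prop:gqpe-main-2}). There is one slip: for a width-$\gamma$ Gaussian sampled at spacing $h$ the Riemann-sum errors are $e^{-\Omega(\gamma^2/h^2)}$. So you need $h\lesssim\gamma/\sqrt{\log(\kappa/\epsilon)}$, not $h\gamma$ constant; this still gives $m=\cO(\log\kappa+\log\log\epsilon^{-1})$.

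The genuine gap is the last step of the commutativity bullet. The theorem asserts all three bullets for one and the same $\cM$, yet there you switch to $\gamma\sim\epsilon$, i.e.\ time truncation $T\sim\epsilon^{-1}\polylog(\epsilon^{-1})$. That channel needs evolution time of order $\epsilon^{-1}$ up to logarithms. Because its readout grid must resolve width $\epsilon$ over a range of order one, it also needs $\Omega(\log\epsilon^{-1})$ ancilla qubits. Both contradict the cost bullet, which requires $t=\cO(\log\epsilon^{-1})$ and $m=\cO(\log\log\epsilon^{-1})$ when $\kappa=1$. The rescaled estimate also fails to produce the claimed factor. Here $\|O_k\|$ and $\|[\rho,O_k]\|_1/\|[\rho,O]\|_1$ are of order $\sqrt{h/\gamma}$ and $\sqrt{h}\,\gamma^{-3/2}$, so your sum over $\cO(1/h)$ grid points is of order $\gamma^{-2}$, i.e.\ $\epsilon^{-2}$ for $\gamma\sim\epsilon$. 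The prefactor $\epsilon^{-1}\polylog(\epsilon^{-1})$ is only an upper bound; nothing forces your estimate to reach it.

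The fix is to keep the parameters of the first two bullets and apply your Duhamel estimate $\|e^{-isO}\rho e^{isO}-\rho\|_1\le|s|\,\|[O,\rho]\|_1$ directly. Because the QFT and the readout act only on the ancilla, averaging over outcomes gives exactly $\cM[\rho]=\sum_\ell|a_\ell|^2e^{-is_\ell O}\rho\, e^{is_\ell O}$. Here $a_\ell$ are the amplitudes of the prepared Gaussian state and $|s_\ell|\le t$. Hence $\|\cM[\rho]-\rho\|_1\le t\,\|[O,\rho]\|_1=\cO(\log\epsilon^{-1})\,\|[O,\rho]\|_1$, and using the Gaussian weights in fact $\cO(\gamma^{-1})\|[O,\rho]\|_1$. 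There is no additive term, and this implies the stated bound.

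This is a different and stronger route than the paper's proof of \cref{prop:com}. The paper approximates $O_j$ by $\sqrt h\,g_\gamma(\omega_j-O)$ and pays $4\epsilon$ for leakage outside $[-2,2]$ (\cref{lem:leakage}). It then uses $\|[e^{A},\rho]\|_1\le e^{\|A\|}\|[A,\rho]\|_1$ with $A=-(\omega_j-O)^2/4\gamma^2$ (\cref{lem:exp_commutator}). The resulting factor $e^{\cO(\gamma^{-2})}$ with $\gamma^{-2}\asymp\log\epsilon^{-1}$ is the source of the $\epsilon^{-1}$-type prefactor in the statement. So that factor reflects a lossy estimate, not any need to shrink $\gamma$.
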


It is worth noting that in the implementation cost, the number of ancilla qubits and Hamiltonian evolution time all scale only polylogarithmically with the relevant parameters.

Here we remark on the $\epsilon^{-1}$ factor in Eq.~(\ref{eq:com}), which reflects a tradeoff between precision and the ability to inherit the commutativity property between $O$ and $\rho$. As $\epsilon \to 0$, the implemented quantum channel $\mathcal{M}$ approaches the projective measurement of $O$. It is worth noting that projective measurement may not preserve the commutativity property, since eigenvectors can be highly sensitive to small perturbations. 
 For example, a single-qubit state $\rho_{\beta} = \ketbra{0}{0}$ and operator $O = I + 2^{-10} X$ have a small commutator norm, yet a projective measurement of $O$  maps $\rho$ to the maximally mixed state, significantly disturbing the measured state. Here Eq.~(\ref{eq:com}) will be used in Section~\ref{sec:WOFT_noncom} to ensure that GQPE inherits the  commutativity property when the operator Fourier transform is applied to mitigate measurement-induced disturbance.

When $O$ and $\rho$ exactly commute, we especically have that
\begin{theorem}[Detailed Balance]\label{thm:gqpe-main-3}
    Let $\rho:=\exp(-\beta H)/Z_{\beta}$ be the Gibbs states w.r.t. Hamiltonian $H$ and inverse temperature $\beta$.
    If $[H,O] = 0$, then the GQPE measurement channel $\mathcal{M}$  and the corresponding $\widehat{\cM}$ satisfy the  $s$-detailed balance w.r.t. $\rho_{\beta}$ for any $0 \le s \le 1$. 
\end{theorem}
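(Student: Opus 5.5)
The plan is to reduce detailed balance to a single structural fact: every Kraus operator of the discretized GQPE channel $\cM$ commutes with $\rho_\beta$. From the construction in Algorithm~\ref{alg:GQPE} (Appendix~\ref{app:GQPE}), the channel has Kraus operators indexed by the ancilla outcome $w$ in a finite grid. They have the form $O_w = f_w(O)$, where $f_w$ is a function of $O$ obtained from controlled evolutions $e^{-iOt}$, an ancilla Gaussian state preparation, a Fourier transform, and a measurement in the computational basis. Because every gate acting on the system register is a function of $O$, the resulting system-side Kraus operators are functions of $O$ alone. They may include an overall ancilla-dependent phase and need not be Hermitian. Since $[H,O]=0$, each $f_w(O)$ commutes with $H$, hence with $\rho_\beta^{s}$ and $\rho_\beta^{1-s}$ for every $s\in[0,1]$. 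The adjoint $O_w^\dagger=\overline{f_w}(O)$ commutes with these powers as well.

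With this commutation in hand, I will verify $s$-detailed balance directly for $\cM^\dagger(X)=\sum_w O_w^\dagger X O_w$. Write $\langle A,\cM^\dagger(B)\rangle_s=\sum_w \Tr(A^\dagger\rho_\beta^{1-s}O_w^\dagger B O_w\rho_\beta^{s})$. Moving $O_w^\dagger$ to the left past $\rho_\beta^{1-s}$ and $O_w$ to the right past $\rho_\beta^{s}$, and then using cyclicity of the trace, gives
\begin{align*}
\sum_w \Tr\big((O_w A O_w^\dagger)^\dagger\rho_\beta^{1-s}B\rho_\beta^{s}\big)=\langle \cM'(A),B\rangle_s, \qquad \cM'(A):=\sum_w O_w A O_w^\dagger .
\end{align*}
So the adjoint of $\cM^\dagger$ with respect to $\langle\cdot,\cdot\rangle_s$ is $\cM'=\cM$. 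What must be shown is $\cM^\dagger=\cM$, i.e. $\sum_w O_w^\dagger X O_w=\sum_w O_w X O_w^\dagger$. In the idealized case this is immediate because $O_{w*}$ is Hermitian.

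For the discretized channel I will check from the appendix construction that $O_w$ is Hermitian up to a $w$-dependent scalar phase $e^{i\phi_w}$. The Kraus operator is a Gaussian-filtered discrete sum $\sum_k g(k)e^{ik(\cdot)}$ with $g$ real and even, so it is real-valued as a function of the eigenvalues of $O$, up to a phase factor from the grid offset. A phase $e^{i\phi_w}$ cancels in $O_w X O_w^\dagger$, so we may replace $O_w$ by the Hermitian operator $e^{-i\phi_w}O_w$ without changing the channel. Then $\cM^\dagger=\cM$ holds and detailed balance follows.

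For $\widehat{\cM}(X)=\sum_w (w-v)O_wXO_w^\dagger$ the same computation applies, because the weights $w-v$ are real. The adjoint of $\widehat{\cM}^\dagger$ with respect to $\langle\cdot,\cdot\rangle_s$ is $\widehat{\cM}$, which equals $\widehat{\cM}^\dagger$ by Hermiticity of the rephased Kraus operators. The main obstacle is confirming the exact Hermitian-up-to-phase structure of the discretized Kraus operators in Appendix~\ref{app:GQPE}. If that structure fails, I will instead argue in the joint eigenbasis $\{\ket{k}\}$ of $H$ and $O$, where all $O_w$ are diagonal. Then $\cM^\dagger(X)_{kl}=\sum_w\overline{f_w(o_k)}f_w(o_l)X_{kl}$ and $\cM(X)_{kl}=\sum_w f_w(o_k)\overline{f_w(o_l)}X_{kl}$, so the needed identity is that $c_{kl}:=\sum_w f_w(o_k)\overline{f_w(o_l)}$ is real. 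Here $c_{kl}$ is the overlap of two ancilla states, and its reality should follow from the real, symmetric Gaussian amplitudes.
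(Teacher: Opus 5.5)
Your argument is the paper's: the Kraus operators $O_j$ of Algorithm~\ref{alg:GQPE} are functions of $\tilde O=O/\kappa$ and hence commute with $\rho_\beta^{s}$ and $\rho_\beta^{1-s}$. Once they are Hermitian, one has $\cM^\dagger=\cM$ and $\widehat{\cM}^\dagger=\widehat{\cM}$, so your trace computation gives self-adjointness in $\langle\cdot,\cdot\rangle_s$ for every $s\in[0,1]$; this is exactly how the theorem is derived from Lemma~\ref{lem:Oj}.

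The step you flagged holds with no phase at all. The resource state is supported on $\ell=1,\dots,N-1$, so $k=\ell-N/2$ runs over the symmetric set $\{1-N/2,\dots,N/2-1\}$. Since $\widehat g_\gamma$ is real and even, the relabeling $k\mapsto -k$ gives $O_j^\dagger=O_j$ exactly. The check rests on this symmetry of the index set, not only on the evenness of $\widehat g_\gamma$: an asymmetric or off-center frequency grid would not be repaired by a scalar phase.
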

\begin{proof}
    Theorem \ref{thm:gqpe-main-3} directly follows from Appendix \ref{app:GQPE}  \cref{lem:Oj} and the definition of $s$-detailed balance.
\end{proof}

\subsection{Detailed balanced measurement for commuting observables with low cost}\label{sec:M_com}

Consider any observable $O$ that commutes with $H$,
such as $O=H^k$ for the integer $k$. Observables that commute with $H$ correspond to the conserved quantities of a system and play a fundamental role in understanding its thermodynamic properties. For example, $\Tr(H \rho_{\beta})$ gives the internal energy. Together with  $\Tr(H^2 \rho_{\beta})$, another observable commuting with $H$, allows one to determine the heat capacity. Moreover, using a telescoping technique~\cite{bravyi2021complexity}, one can estimate the partition function from the ability to estimate $\Tr(H \rho_{\beta})$ for various $H$. The partition function plays a central role in statistical physics, as it determines all other thermodynamic quantities, including free energy, entropy, and pressure.

As a corollary of the properties of  Theorem \ref{thm:GQPE} and Theorem \ref{thm:gqpe-main-3} we have that

\begin{theorem}[Measurement for commuting observables]\label{thm:M_com}
    Let $O$ be an observable that commutes with $H$ and satisfies $\|O\| \leq \kappa =poly(n)$.      Let  $\epsilon > 0$ be the precision parameter.  
    The quantum channel $\cM$ implemented by Algorithm \ref{alg:GQPE} w.r.t. observable $O$ satisfies that
    \begin{itemize}
        \item \textbf{Implementation cost:} 
     $m$ ancilla qubits, $t$-time controlled Hamiltonian evolution of the  operator $O$, an additional $\poly(m)$ elementary gates, for
    $$m=\cO\left(\log (\kappa) + \log \log (\epsilon^{-1})\right); \quad t =\cO\left(\log(\kappa) + \log (\epsilon^{-1})\right).$$
        \item \textbf{Expectation and Variance:}
 The measurement statistics of  $\cM$ satisfy
    \begin{align*}
    	&|\mathbb{E}_{\rho_{\beta}}[\cM] - \Tr[\rho_{\beta} O]| \le \epsilon,\\
    	&|\bV_{\rho_{\beta}}[\cM] - \bV_{\rho_{\beta}}[O]| = \cO(1) ,\\
    	&| \bCov_{\rho_{\beta}}\left(\cM\right)| \leq   \bV_{\rho_{\beta}}\left(\cM\right),
    \end{align*}
    where $\bV_{\rho_{\beta}}[O] := \Tr[O^2 \rho_{\beta}]-\left(\Tr[O\rho_{\beta}]\right)^2.$    
    \item\textbf{Fixed Point and Detailed balance} The measurement $\cM$ fixes the Gibbs states, i.e. 
    $\cM(\rho_{\beta})=\rho_{\beta}.$
    Both $\mathcal{M}$  and the corresponding $\widehat{\cM}$ satisfy the  $s$-detailed balance w.r.t. $\rho_{\beta}$ for any $0 \le s \le 1$. 
    \end{itemize}
\end{theorem}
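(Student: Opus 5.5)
The proof is a corollary: specialize Theorem~\ref{thm:GQPE} to $\rho=\rho_\beta$, and invoke Theorem~\ref{thm:gqpe-main-3} for the detailed-balance part. I would take $\cM$ to be the channel of Algorithm~\ref{alg:GQPE} for the observable $O$, with the same parameter choices as in Theorem~\ref{thm:GQPE} (precision $\epsilon$, norm bound $\kappa$).

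\textbf{Implementation cost.} The construction of $\cM$ does not depend on the input state, so the number of ancilla qubits $m$, the controlled evolution time $t$, and the $\poly(m)$ elementary gates are exactly those in the first item of Theorem~\ref{thm:GQPE}.

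\textbf{Expectation and variance.} Theorem~\ref{thm:GQPE} holds for every quantum state $\rho$. Setting $\rho=\rho_\beta$ immediately gives:
\begin{itemize}
\item the bias bound $|\bE_{\rho_\beta}[\cM]-\Tr[\rho_\beta O]|\le\epsilon$;
\item the variance bound $|\bV_{\rho_\beta}[\cM]-\bV_{\rho_\beta}[O]|=\cO(1)$;
\item the covariance bound $|\bCov_{\rho_\beta}(\cM)|\le\bV_{\rho_\beta}(\cM)$.
\end{itemize}
No commutativity is needed for this part.

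\textbf{Fixed point and detailed balance.} Since $[H,O]=0$, Theorem~\ref{thm:gqpe-main-3} states that both $\cM$ and $\widehat{\cM}$ satisfy $s$-detailed balance with respect to $\rho_\beta$ for every $s\in[0,1]$. The first part of Fact~\ref{fact:QDB} then gives $\cM(\rho_\beta)=\rho_\beta$.

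As a sanity check, the fixed-point property can also be seen directly. Because $O$ commutes with $H$, it commutes with $\rho_\beta$, and hence so does every Kraus operator of $\cM$, since each is a function of $O$ (and is Hermitian). Completeness $\sum_w O_w^\dagger O_w=I$ then gives $\cM(\rho_\beta)=\sum_w O_w\rho_\beta O_w^\dagger=\rho_\beta\sum_w O_w^\dagger O_w=\rho_\beta$. The same computation, with an extra outcome weight inserted, is what drives the detailed-balance property of $\widehat{\cM}$. Alternatively, the commutativity estimate Eq.~(\ref{eq:com}) with $[O,\rho_\beta]=0$ gives closeness up to $5\epsilon$, but this requires $\|O\|\le1$ and is not exact, so the route through detailed balance is preferable.

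\textbf{Main obstacle.} The substantive content lies in the discretized Kraus operators of Algorithm~\ref{alg:GQPE}, specifically Lemma~\ref{lem:Oj} in the appendix. The point to confirm is that, after discretization, each Kraus operator is still a Hermitian function of $O$ alone, so that the exact commutation survives discretization. Given that lemma, the theorem is just a combination of Theorem~\ref{thm:GQPE}, Theorem~\ref{thm:gqpe-main-3}, and Fact~\ref{fact:QDB}.
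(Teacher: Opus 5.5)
Your proposal is correct and matches the paper's argument. The paper states this theorem as a direct corollary of Theorem~\ref{thm:GQPE} (specialized to $\rho=\rho_\beta$) and Theorem~\ref{thm:gqpe-main-3}, whose proof in turn rests on Lemma~\ref{lem:Oj}. Your fixed-point derivation via Fact~\ref{fact:QDB} is sound, and so is your direct check: the discretized Kraus operators are exactly complete and are functions of $O$, hence commute with $\rho_\beta$.
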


To achieve an implementation cost that scales logarithmically with the precision, one may instantiate the $t$-time controlled Hamiltonian evolution using Hamiltonian simulation techniques with logarithmic precision dependence, such as linear-combination-of-unitaries methods~\cite{gilyen2019quantum,childs2012hamiltonian} or qubitization~\cite{low2019hamiltonian}. For geometrically local Hamiltonians, the underlying structure allows the Hamiltonian simulation to be parallelized, further reducing the circuit depth~\cite{haah2021quantum}.

Combining Theorem~\ref{thm:M_com} with Theorem~\ref{thm:time_average}, we obtain Corollary~\ref{cor:exp_com}, which applies to any observable $O$ that commutes with $H$. As an example, in Section~\ref{sec:energy} we focus on the case $O = H$ and compare the gate and depth complexity of GQPE with those of other approaches for estimating the same observable.

\begin{corollary}[Thermal expectation estimation for commuting observables]\label{cor:exp_com}
Let $\mathcal{N}$ be a Gibbs sampling algorithm that satisfies $s$-detailed balance with respect to the Gibbs state $\rho_\beta = e^{-\beta H}/Z_\beta$,  has $\rho_\beta$ as its  unique fixed point, and has a spectrum  in $[0,1]$.
For any observable $O$ commuting with $H$ and obeying $\|O\| \le \kappa=\poly(n)$, one can estimate $\Tr(O\rho_{\beta})$ to precision $\epsilon$ with probability at least $1-2\eta$ using $m$ ancilla qubits, $t_{\cN}$-time of $\cN$, and $t_{\cM}$-time of controlled Hamiltonian evolution of $O$, where
\[
m=\cO\!\left(\log\kappa+\log\!\log \epsilon^{-1}\right),\quad
t_{\cN}= t_{mix} + \cO\!\left(\frac{\bV_{\rho_{\beta}}[O]}{\epsilon^{2}\eta\, gap(\cN)}\right),\quad
t_{\cM}=\widetilde{\cO}\!\left(\frac{\bV_{\rho_{\beta}}[O]}{\epsilon^{2}\eta\,gap(\cN)}\right)
\]
where $\widetilde{\cO}(\cdot)$ hides a factor of $\mathrm{polylog}(\epsilon^{-1}) + \mathrm{polylog}(\kappa)$.
The cost $t_{\cM}$ can be reduced by a factor of $gap(\cN)^{-1}$, that is, to $\tilde{\cO}\!\left(\bV_{\rho_{\beta}}[O]/(\epsilon^{2}\eta)\right)$ by skipping measurement, as noted in Remark~\ref{remark:skip_measure}.

\end{corollary}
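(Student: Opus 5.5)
The plan is to instantiate Theorem~\ref{thm:time_average} with the GQPE channel $\cM$ from Theorem~\ref{thm:M_com}, run at accuracy $\epsilon/2$ in place of $\epsilon$, and then read off the resources.

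\textbf{Step 1: verify the hypotheses.} Assumption~\ref{ass:NM} holds for the pair $(\cN,\cM)$. For $\cN$ this is the standing hypothesis of the corollary. For $\cM$, the fixed-point and detailed-balance part of Theorem~\ref{thm:M_com}, resting on Theorem~\ref{thm:gqpe-main-3} and $[O,H]=0$, gives $s$-detailed balance of $\cM$ and also of $\widehat{\cM}$. Theorem~\ref{thm:time_average} therefore applies.

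\textbf{Step 2: bound the ratio $\bp$.} The covariance bound in Theorem~\ref{thm:M_com} gives $|\bCov_{\rho_\beta}(\cM)|\le \bV_{\rho_\beta}(\cM)$, hence $\bp\le 1$. Theorem~\ref{thm:time_average}, applied with accuracy $\epsilon/2$, then says it suffices to take
\[
K=\cO\!\left(\frac{\bV_{\rho_\beta}(\cM)}{\epsilon^2\eta}\Big(\frac{1}{gap(\cN)}+\frac12\Big)\right).
\]
This guarantees $|X_K-\bE_{\rho_\beta}(\cM)|<\epsilon/2$ with probability at least $1-2\eta$.

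\textbf{Step 3: remove the bias.} The unbiasedness bound of Theorem~\ref{thm:M_com} at precision $\epsilon/2$ gives $|\bE_{\rho_\beta}(\cM)-\Tr(O\rho_\beta)|\le\epsilon/2$. By the triangle inequality, $X_K$ is then $\epsilon$-close to $\Tr(O\rho_\beta)$ with probability at least $1-2\eta$.

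\textbf{Step 4: replace $\bV_{\rho_\beta}(\cM)$ by $\bV_{\rho_\beta}[O]$.} Theorem~\ref{thm:M_com} gives $\bV_{\rho_\beta}(\cM)\le \bV_{\rho_\beta}[O]+\cO(1)$. Also $gap(\cN)\le 1$, so the additive $\tfrac12$ is dominated by $1/gap(\cN)$.

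\textbf{Step 5: count resources.} The run uses $T_{burn}=t_{mix}(\eta)$ applications of $\cN$ plus one application of $\cN$ per loop iteration, which gives the stated $t_{\cN}$. Each iteration also applies $\cM$ twice, each costing $\cO(\log\kappa+\log\epsilon^{-1})$ controlled evolution of $O$. Multiplying by $K$ gives $t_{\cM}$, with the logarithmic factor absorbed into $\widetilde{\cO}$. The ancilla count $m$ is taken directly from Theorem~\ref{thm:M_com}, since ancillas can be reset and reused between measurements.

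\textbf{Skipping variant.} Here $\cN$ is replaced by $\cN^r$ with $r=\lceil 1/gap(\cN)\rceil$, as in Remark~\ref{remark:skip_measure}. Then $\cN^r$ still satisfies detailed balance, has unique fixed point $\rho_\beta$, and has spectrum in $[0,1]$. Its gap is at least $1-e^{-1}$, so the same argument yields $K=\cO(\bV/(\epsilon^2\eta))$ measurement rounds while the total $\cN$-time is unchanged in order.

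\textbf{Main obstacle: the additive $\cO(1)$ variance term.} This constant is not automatically dominated by $\bV_{\rho_\beta}[O]$, for example when $O$ has tiny variance. I would handle it by stating the bound with $\bV_{\rho_\beta}[O]+\cO(1)$ and noting that the $\cO(\cdot)$ notation in the corollary absorbs it in the regime $\bV_{\rho_\beta}[O]=\Omega(1)$. Alternatively, one could rely on the GQPE width $\gamma$ being a tunable constant. This bookkeeping is the only delicate point; the rest is direct substitution.
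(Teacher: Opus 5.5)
Your proposal is correct and follows the paper's route: the paper obtains the corollary simply by combining Theorem~\ref{thm:M_com} (detailed balance of $\cM$ and $\widehat{\cM}$, $\bp\le 1$, an $\cO(1)$ variance overhead, and logarithmic implementation cost) with Theorem~\ref{thm:time_average}, with the skipping variant taken from Remark~\ref{remark:skip_measure}. Your $\epsilon/2$ split between bias and fluctuation, and your caveat that the additive $\cO(1)$ variance term is absorbed only when $\bV_{\rho_\beta}[O]=\Omega(1)$, are bookkeeping points the paper leaves implicit, and you handle both correctly.
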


\subsection{Energy function estimation and cost comparison}\label{sec:energy}

In this section, we illustrate the effectiveness of Theorem~\ref{thm:GQPE} by analyzing a concrete example.  In particular, here
we focus on the case where $O=H$ and set the goal as estimating $\Tr(H\rho_{\beta})$ to precision $\epsilon$. Here for simplicity
we also
assume that $H$ is defined on a 2D lattice to enable a meaningful comparison of circuit depth.

To begin with, we clarify the key quantities that will be compared. Suppose $\cM$ is a quantum channel that implements a measurement of $H$, in the sense that its output expectation value equals  $\Tr(H\rho_{\beta})$. Let $c_{\cM}$ and $d_{\cM}$ denote, respectively, the gate cost and depth cost of performing a single use of $\cM$. 

According to Theorem~\ref{thm:time_average} and Remark~\ref{remark:skip_measure}, Eqs.~(\ref{eq:rem_N})(\ref{eq:rem_M}), the dominant factors of the total cost are the measurement variance $\Var_{\cM}:=\bV_{\rho_{\beta}}(\cM)$, which determines the total cost of the Gibbs-sampling evolution, and the product
$c_{\cM}\times \Var_{\cM}$,
which determines the total gate cost of implementing the measurement.
Similarly, we also consider the product
$d_{\cM}\times \Var_{\cM},$
which determines the total circuit depth incurred by the measurement. 
In what follows, we estimate these quantities for different choices of $\cM$ and show that the GQPE procedure described in Theorem~\ref{thm:GQPE} achieves the smallest cost (up to a $\poly\log(\epsilon^{-1})$ factor) on all those quantities. The comparison is summarized in Table \ref{tab:intro_compare} in the introduction section.   Note that Theorem \ref{thm:GQPE} implies that for GQPE, the factor $\bp$ in Theorem \ref{thm:time_average} satisfies $\bp\leq 1$, thus we ignore this factor.

\paragraph{Notations and Scaling of $\bE_{\rho_{\beta}}(H)$ and  $\bV_{\rho_{\beta}}(H)$} 
Recall that $n$ is the number of qubits in the Hamiltonian $H$. We assume that $H$ is defined on a 2D lattice, so it consists of $\cO(n)$ local terms. The expectation value $\bE_{\rho_{\beta}}(H) := \Tr(H\rho_{\beta})$ represents the internal energy of the system, and the variance 
$\bV_{\rho_{\beta}}(H) := \Tr(H^2\rho_{\beta}) - \big(\Tr(H\rho_{\beta})\big)^2$ 
is related to the heat capacity. For finite $\beta$, both $\bE_{\rho_{\beta}}(H)$ and $\bV_{\rho_{\beta}}(H)$ typically scale linearly with the system size, i.e., $\cO(n)$.

\paragraph{Classical Hamiltonian} For comparison, we also list the cost in the classical setting. First note that when $H$  is a classical Hamiltonian, give a sample $x$ from the Gibbs distribution, where $x$ is a computational basis, one can compute the corresponding energy $\langle x|H|x\rangle$ to high precision (in fact exactly) in time $\cO(n)$.   Since we assume the Hamiltonian is on 2D, the computation can be parallel to depth $\cO(1).$ The variance of $\langle x|H|x\rangle$, when consider the randomness of $x$, is the same as $\bV_{\rho_{\beta}}(H)$ which scales as $\cO(n)$.

\paragraph{High precision (HP) QPE} When $H$ is a quantum Hamiltonian, to perform a high precision energy estimation, i.e. to precision $\epsilon$,  quantum phase estimation algorithm has a gate cost as $\Omega(n\epsilon^{-1})$, and a depth cost as $\Omega(\epsilon^{-1})$ [see survey \cite{dalzell2310quantum} and \cite{nielsen2010quantum}]. The variance of this measurement is of the same order as $\bV_{\rho_{\beta}}(H)$ which scales as $\cO(n)$.

\paragraph{Local Measurement} 

Another simple way to measure $H=\sum_{i=1}^{\kappa} H_i$ is to randomly measure its local terms. Given an input state $\rho_{\beta}$, one can randomly select a local term $H_i$ according to a uniform distribution and perform a projective measurement with respect to $H_i$ on $\rho_\beta$. Denote the measurement outcome as $Y_i$ and set the energy estimate $X_i := \kappa Y_i$. One can check that $E(X_i) = \Tr(H \rho_{\beta})$.  

Since this measurement only measures a single local term at a time, we have $c_{\cM} = \cO(1)$. However, it increases the variance by a factor of $\cO(n)$ compared to $\bV_{\rho_{\beta}}(H)$, as
\begin{align*}
\Var(X_i) &= \kappa\, \Tr\Big(\sum_i H_i^2 \rho_{\beta}\Big) - \Tr^2(H \rho_{\beta}) = \cO(\kappa^2) = \cO(n^2).
\end{align*}

Another drawback of this randomized local measurement is that it is incoherent (destructive) and substantially disturbs the Gibbs state. A potential remedy is to apply a recovery channel after the destructive measurement: using the quantum Markov property, in the geometrically local case, a quasi-local recovery map can drive the perturbed state back to the Gibbs state~\cite{chen2025quantum}, at a cost that grows polynomially with $\epsilon^{-1}$ and $\beta$. Analyzing the correlation between samples in this approach is beyond the scope of this manuscript and is left as an open question.

\paragraph{GQPE} For GQPE, with the  parameters choices in Algorithm \ref{alg:GQPE}, 
according to Theorem \ref{thm:GQPE} we know the cost of implementing  a single $\cM$ is dominated by implementing the controlled Hamiltonian evolution 
\begin{align}
    W =\sum_{\ell=0}^{N-1} e^{-2\pi i \xi_{\ell} H/\kappa} \otimes \ketbra{\ell}{\ell}= e^{\pi i Nh H/\kappa}\sum_{\ell=0}^{N-1} e^{-2\pi i \ell h H/\kappa} \otimes \ketbra{\ell}{\ell},\label{eq:WWW}
\end{align}
where the maximal simulation time of $H$ is $t_{\max} \le 2\pi Nh/\kappa \le \cO(\log(n\epsilon^{-1}))$.
By~\cite{haah2021quantum}, for a simulation time $0 \le t \le t_{\max}$, the time-evolution operator $e^{-itH}$ can be implemented up to precision $\epsilon$ using $\cO(n\poly\log(n \epsilon^{-1}))$ gates and $\cO(\poly\log(n \epsilon^{-1}))$ depth. Then, using a standard technique for controlled Hamiltonian simulation~\cite[Lemma 8]{childs2017quantum}, the controlled Hamiltonian evolution $W$ in Eq.~(\ref{eq:WWW}) can be constructed using $\cO(n\poly\log(n \epsilon^{-1}))$ elementary gates and circuit depth, with an additional $\cO(\log(n))$ ancilla qubits.
Note that the circuit depth of the controlled unitary can be further reduced to $\cO(\poly\log(n \epsilon^{-1}))$ circuit depth by introducing $\cO(n\poly\log(n\epsilon^{-1}))$ copies of classical control bits, similar to techniques used in~\cite{jiang2020optimal}.

\subsection{Mitigating Measurement Disturbance for non-commuting observable via Operator Fourier Transform} \label{sec:WOFT_noncom}

From the previous sections, particularly Theorems~\ref{thm:time_average} and Theorem~\ref{thm:M_com}, we see that for observables commuting with $H$, \textit{effectively} independent samples can be obtained on a timescale shorter than the mixing time. One intuitive reason for this Gibbs sampling reduction  is that the GQPE measurement leaves the Gibbs state invariant, so no additional burn-in period is required after the measurement.

In this section, we consider measuring a local observable $O$ that does not commute with $H$. A direct projective measurement of such an observable would significantly disturb the Gibbs state and introduce bias. Guided by the intuition discussed above, here we investigate an operator Fourier transform technique~\cite{chen2023efficient} to construct a measurement that approximately preserves the Gibbs state. 

We remark that Theorem \ref{thm:time_average} does not directly apply to the measurement constructed in this section, since it does not satisfy \textit{exact} detailed balance, although it approximately preserves the Gibbs state.  We leave addressing this gap to future work.

\paragraph{Weighted Operator Fourier Transform (WOFT)~\cite{chen2023efficient}}
First we explain the weighted operator Fourier transform and the key observations.
Recall that $H$ is a local Hamiltonian and $\rho_{\beta}=\exp(-\beta H)/Z_{\beta}$ is the Gibbs state at inverse temperature $\beta$. 

  Consider the spectrum decomposition of $H$, that is,
$
H:=\sum_{k} E_k \Pi_k. 
$ where $E_k$ is the eigenvalue and $\Pi_k$ is the corresponding spectral projector.
To ease notations, define the set of spectrum as $Spec(H):=\{E_k\}_k$ and define the set of energy difference, namely Bohr frequency, as 
\begin{align}
  \bB:=\{E_i-E_j\,|\, E_i,E_j\in Spec(H)\}.
\end{align}

The Fourier components of $O$ w.r.t. frequency $\nu$ is defined as 
\begin{align}
    O_{\nu}:=\sum_{i,j:\, E_i-E_j=\nu} \Pi_i O \Pi_j.
\end{align}

\begin{definition}[Weighted Operator Fourier Transform (WOFT)~\cite{chen2023efficient}]
	Given an observable $O$ with $\|O\| \leq 1$, a reference Hamiltonian $H$, a function  $f(t)=\frac{\tau}{\sqrt{\pi}}\,e^{-\tau^2 t^2}$ with parameter $\tau$, the weighted operator Fourier transform is defined as 
 \begin{align}
    \widehat{O}(\tau) &:=\int_{-\infty}^{+\infty} e^{iHt} O e^{-iHt}  f(t) dt =\sum_{\nu\in \bB}  \exp\left(-\frac{\nu^2}{4\tau^2}\right) \,  O_{\nu}. \label{eq:OFT} 
 \end{align}
\end{definition}

The key observation is as follows:

\begin{observation} \label{obs:OFT}
For any observable $O$, define $\widehat{O}(\tau)$ as in Eq.~(\ref{eq:OFT}). Then the following holds:
\begin{itemize} 
    \item \textbf{Observable value:}  The WOFT does not change the observable value. That is, for any $\tau$, 
\begin{align*}
	\Tr\left(\rho_{\beta} \widehat{O}(\tau) \right)  =\Tr\left(\rho_{\beta}O_0 \right)= \Tr\left(\rho_{\beta}O \right).
\end{align*}
    \item \textbf{Decay of Commutator norm:}  
    WOFT filters out the components that do not commute with $H$. From Eq.~(\ref{eq:OFT}), as $\tau \rightarrow 0$, $\widehat{O}(\tau)$ converges to $O_0$ that commutes with $H$. Consequently,
\[
\left\| \left[ \widehat{O}(\tau), \rho_{\beta} \right] \right\|_1 \rightarrow 0 \quad \text{as } \tau \rightarrow 0.
\]   
   \item \textbf{Norm bound:} Note that WOFT does not increase the spectrum norm, that is, $$\|\widehat{O}(\tau)\| \leq \int_{-\infty}^{+\infty} \|e^{iHt} O e^{-iHt}\| \,  f(t) dt=1.$$ 
\end{itemize}
\end{observation}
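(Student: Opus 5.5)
The plan is to reduce all three claims of Observation~\ref{obs:OFT} to the spectral form of $\widehat{O}(\tau)$ in Eq.~(\ref{eq:OFT}), so the first step is to verify that identity.

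\textbf{Deriving the spectral form.} Insert the resolution $I=\sum_k \Pi_k$ on both sides of $O$. This gives $e^{iHt}Oe^{-iHt}=\sum_{i,j} e^{i(E_i-E_j)t}\,\Pi_i O\Pi_j$. We then integrate each term against $f(t)=\frac{\tau}{\sqrt{\pi}}e^{-\tau^2t^2}$. The Gaussian characteristic function gives
\[
\int_{-\infty}^{+\infty} e^{i\nu t}\frac{\tau}{\sqrt{\pi}}e^{-\tau^2t^2}\,dt=e^{-\nu^2/(4\tau^2)}.
\]
Grouping the pairs $(i,j)$ by their Bohr frequency $\nu=E_i-E_j\in\bB$ yields $\sum_\nu e^{-\nu^2/4\tau^2}O_\nu$.

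\textbf{Observable value.} Since $\rho_\beta=\sum_k e^{-\beta E_k}\Pi_k/Z_\beta$, we have
\[
\Tr(\rho_\beta \Pi_iO\Pi_j)=\delta_{ij}\,e^{-\beta E_i}\Tr(\Pi_iO)/Z_\beta,
\]
using cyclicity and $\Pi_j\Pi_k=\delta_{jk}\Pi_k$. Hence $\Tr(\rho_\beta O_\nu)=0$ for every $\nu\neq 0$. Only the $\nu=0$ term survives, and its weight is $e^0=1$. Therefore $\Tr(\rho_\beta\widehat{O}(\tau))=\Tr(\rho_\beta O_0)$. Applying the same computation to $O=\sum_\nu O_\nu$ gives $\Tr(\rho_\beta O)=\Tr(\rho_\beta O_0)$.

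\textbf{Commutator decay.} $O_0$ commutes with every $\Pi_k$, hence with $H$ and with $\rho_\beta$. So
\[
[\widehat{O}(\tau),\rho_\beta]=\sum_{\nu\neq 0}e^{-\nu^2/4\tau^2}[O_\nu,\rho_\beta].
\]
The set $\bB$ is finite in finite dimension, so $\nu_{\min}:=\min\{|\nu|:\nu\in\bB,\nu\ne0\}>0$. Each $\|[O_\nu,\rho_\beta]\|_1$ is bounded independently of $\tau$, for instance by $2\cdot 2^n\|O_\nu\|$ with $\|O_\nu\|$ finite. This gives
\[
\|[\widehat{O}(\tau),\rho_\beta]\|_1\le |\bB|\,C\,e^{-\nu_{\min}^2/4\tau^2}\to 0 \quad\text{as } \tau\to 0.
\]
This step is the only place with any subtlety. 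The rate can be very poor when the spectral gaps in $\bB$ are tiny, but the statement only asks for the limit, so finiteness of $\bB$ suffices.

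\textbf{Norm bound.} Apply the triangle inequality for Bochner integrals to the time-domain definition. Conjugation by a unitary preserves the operator norm, and $f\ge 0$ with $\int f=1$. Hence $\|\widehat{O}(\tau)\|\le\|O\|\le 1$.
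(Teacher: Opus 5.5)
Your proposal is correct and follows essentially the same route as the paper: the spectral form in Eq.~(\ref{eq:OFT}) removes the off-diagonal blocks under $\Tr(\rho_\beta\,\cdot)$, $\widehat{O}(\tau)\to O_0$ with $[O_0,\rho_\beta]=0$, and the triangle inequality gives the norm bound. Your explicit rate $e^{-\nu_{\min}^2/4\tau^2}$ is a cruder, uniform-gap analogue of the paper's bound $2\mathcal{J}_O(\tau)$ in Appendix~\ref{app:commutator_decay}, and stating $\|\widehat{O}(\tau)\|\le\|O\|\le 1$ rather than $=1$ is the correct reading of the last item.
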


\paragraph{Mitigate measurement disturbance by WOFT}

In this section, we consider a local observable $O$ satisfying $\|O\| \le 1$.
Observation~\ref{obs:OFT} shows that, to estimate $\Tr(O \rho_{\beta})$, one may instead measure the Fourier-transformed operator $\widehat{O}(\tau)$, since it has the same expectation value. 
Moreover, measuring $\widehat{O}(\tau)$ for small $\tau$ is expected to introduce less disturbance to the Gibbs state, since $\|[\widehat{O}(\tau), \rho_{\beta}]\|_1$ becomes small when $\tau$ is small. The measurement for the Fourier-transformed operator $\widehat{O}(\tau)$ can be implemented by GQPE combined with standard Hamiltonian simulation techniques like linear combination of unitaries. More specifically,

\begin{theorem}[Measurement for non-commuting observables]\label{thm:M_noncom}
    Consider any  local observable $O$ such that $\|O\| \leq 1$. Let $\epsilon$ be a precision parameter. Set $\tau$  small enough such that  $\left\| \left[ \widehat{O}(\tau),\rho_{\beta} \right]\right\|_1\leq \epsilon^2$. Then
    the quantum channel $\cM$ implemented by Algorithm \ref{alg:GQPE} w.r.t. discretized version of observable $\widehat{O}(\tau)$ satisfies that
    \begin{itemize}
        \item \textbf{Expectation and Variance:}
 The measurement statistics of  $\cM$ satisfy
    \begin{align*}
    	&|\mathbb{E}_{\rho_{\beta}}[\cM] - \Tr[\rho_{\beta} O]| \le 2\epsilon,\\
    	&|\bV_{\rho_{\beta}}[\cM] - \bV_{\rho_{\beta}}[O]| = \cO(1) ,\\
        &| \bCov_{\rho_{\beta}}(\cM)| \leq  \bV_{\rho_{\beta}}[\cM],
    \end{align*}
    where $\bV_{\rho_{\beta}}[O] := \Tr[O^2 \rho_{\beta}]-\left(\Tr[O\rho_{\beta}]\right)^2.$    
    \item\textbf{Approximate fixed point} The measurement $\cM$ approximately fixed the Gibbs state,
    \begin{align*}
        \left\| \cM[\rho_{\beta}] - \rho_{\beta}\right\|_1 \leq \epsilon\, \polylog (\epsilon^{-1})   + 5\epsilon. 
    \end{align*}
    \item \textbf{Implementation cost:}  
     $m$ ancilla qubits, $t$-time controlled Hamiltonian evolution of $H$,   and   $\polylog(\tau^{-1}\epsilon^{-1}\|H\| )$ additional elementary gates, for 
     \begin{align}
         m= \cO(\log(\tau^{-1}\epsilon^{-1}\|H\| )),\quad  
        t= \cO\left(\tau^{-1} \polylog\left(\epsilon^{-1} \tau^{-1}\right)\right) \label{eq:M_noncom_t}
     \end{align}
    When the commutator norm $\left\|[\widehat{O}(\tau), \rho_{\beta}]\right\|_1$ decays exponentially fast as $\mathcal{O}(\exp(-\mathcal{O}(1/\tau^2)))$, it suffices to choose $\tau^{-1} = \mathcal{O}(\log \epsilon^{-1})$. 
      
If further $O$ and $H$ are geometrically local, the Lieb--Robinson bounds imply that the implementation cost can be reduced to:  $m$ ancilla qubits, and  a $t$-time controlled Hamiltonian evolution of a quasi-local operator instead of the full $t$-time controlled Hamiltonian evolution of $H$. Since the number of controlled bits (i.e. $m$) is logarithmic, such a quasi-local evolution for time $t$ can be implemented using $\polylog(n)+\polylog(\epsilon^{-1})$ elementary gates.
    \end{itemize}
\end{theorem}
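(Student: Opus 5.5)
The plan is to obtain Theorem~\ref{thm:M_noncom} by applying Theorem~\ref{thm:GQPE} to the observable $\widehat{O}(\tau)$, or rather to a discretized version $\widetilde{O}$ of it. Three facts from Observation~\ref{obs:OFT} will be used throughout: the WOFT preserves the Gibbs expectation, $\|\widehat{O}(\tau)\|\le 1$, and the commutator with $\rho_\beta$ can be made small by shrinking $\tau$.

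\textbf{Step 1: discretize the WOFT.} Truncate the Gaussian integral in Eq.~(\ref{eq:OFT}) to $|t|\le T=\cO(\tau^{-1}\sqrt{\log \epsilon^{-1}})$, and replace it by a Riemann sum with step size $\cO(\|H\|^{-1}/\polylog)$. The resulting operator is
\[
\widetilde{O}=\sum_k w_k\, e^{iHt_k} O e^{-iHt_k}.
\]
As in Appendix~\ref{app:GQPE}, Poisson summation will make the quadrature error superpolynomially small. This gives $\|\widetilde O-\widehat O(\tau)\|\le \epsilon^2$ and $\|\widetilde O\|\le 1+\epsilon^2$. We rescale if necessary so that $\|\widetilde O\|\le 1$.

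\textbf{Step 2: transfer the GQPE guarantees to $\widetilde O$.} The expectation bound follows from Theorem~\ref{thm:GQPE}, $|\mathbb{E}_{\rho_\beta}[\cM]-\Tr(\rho_\beta\widetilde O)|\le\epsilon$, combined with $|\Tr(\rho_\beta(\widetilde O-\widehat O))|\le\epsilon^2$ and Observation~\ref{obs:OFT}. Together these give the $2\epsilon$ bound.

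For the variance, Theorem~\ref{thm:GQPE} gives $\bV[\cM]=\bV[\widetilde O]+\cO(1)$. We then compare $\bV[\widetilde O]$ with $\bV[O]$. Both $\Tr(\rho_\beta\widetilde O^2)$ and $\Tr(\rho_\beta O^2)$ are at most $1$, and the means are bounded by $1$, so both variances lie in $[0,1]$. Their difference is therefore $\cO(1)$ trivially, since $\|O\|\le1$. The covariance bound is inherited directly from Theorem~\ref{thm:GQPE}.

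For the approximate fixed point, use
\[
\|[\widetilde O,\rho_\beta]\|_1\le \|[\widehat O,\rho_\beta]\|_1+2\|\widetilde O-\widehat O\|\le 3\epsilon^2,
\]
and plug this into Eq.~(\ref{eq:com}). This gives $\epsilon^{-1}\polylog(\epsilon^{-1})\cdot 3\epsilon^2+5\epsilon$, which is the stated bound.

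\textbf{Step 3: implementation cost.} GQPE requires controlled evolution $e^{-is\widetilde O}$ for time $s=\cO(\log\epsilon^{-1})$, which holds because $\|\widetilde O\|\le1$. I would implement this in two stages:
\begin{itemize}
\item Build a block encoding of $\widetilde O$ by LCU over the indices $k$. The prepare register takes $\cO(\log(T\|H\|\,\polylog))=\cO(\log(\tau^{-1}\epsilon^{-1}\|H\|))$ qubits. The select operation is the conjugation $e^{iHt_k}Oe^{-iHt_k}$, which uses controlled evolution of $H$ for time at most $T$.
\item Apply qubitization or QSP to this block encoding, with $\cO(\log\epsilon^{-1})$ queries, each costing time $T$. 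The total evolution time is then $\cO(\tau^{-1}\polylog(\epsilon^{-1}\tau^{-1}))$.
\end{itemize}
When the commutator decays like $e^{-\Omega(1/\tau^2)}$, requiring it to be at most $\epsilon^2$ gives $\tau^{-2}=\cO(\log\epsilon^{-1})$, so certainly $\tau^{-1}=\cO(\log\epsilon^{-1})$.

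For geometrically local $O$ and $H$, Lieb--Robinson bounds let us replace $e^{iHt}Oe^{-iHt}$ by $e^{iH_Rt}Oe^{-iH_Rt}$. Here $H_R$ is restricted to a ball of radius $R=\cO(t+\log\epsilon^{-1})$ and has $\polylog$ size. The cost then becomes $\polylog(n)+\polylog(\epsilon^{-1})$.

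\textbf{Main obstacle.} I expect Step 3 to be the hardest: combining the LCU normalization (the sum of $|w_k|$ must be $\cO(1)$, which holds for the Gaussian weights), the QSP error, and the discretization error so that all three stay within $\epsilon^2$. The loss factor in Eq.~(\ref{eq:com}) must not blow up this budget, so the errors have to be allocated carefully. Steps 1 and 2 reduce directly to earlier results.
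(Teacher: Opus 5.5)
Your proposal is correct and follows essentially the same route as the paper's proof in Appendix~\ref{app:LCU}: discretize $\widehat{O}(\tau)$, invoke Theorem~\ref{thm:GQPE} for the statistics and the commutativity bound, implement $e^{-is\widetilde{O}}$ via an LCU block encoding over quadrature points followed by QSP, and use Lieb--Robinson truncation in the geometrically local case. Your discretization accuracy of $\epsilon^2$ (rather than the paper's $\epsilon$), together with the rescaling to $\|\widetilde{O}\|\le 1$, is in fact what the approximate-fixed-point bullet requires, because the discretization error enters $\|[\widetilde{O},\rho_\beta]\|_1$ additively as $2\|\widetilde{O}-\widehat{O}(\tau)\|$ and is then amplified by the $\epsilon^{-1}$ factor in Eq.~(\ref{eq:com}).

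Two points to tidy up. First, an LCU select oracle must be unitary, so for a non-unitary local $O$ either expand $O$ into $\cO(1)$ Pauli strings, as the paper does, or insert a block encoding of $O$ between the controlled evolutions; either choice keeps the normalization $\cO(1)$. Second, the QSP error enters the channel additively and is not amplified by that $\epsilon^{-1}$ factor, so it need not be pushed down to $\epsilon^2$, and your ``main obstacle'' is milder than you feared.
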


The  proof of Theorem \ref{thm:M_noncom} is put into Appendix \ref{app:LCU} and is mainly a corollary of Theorem \ref{thm:GQPE} combined with standard technique of Hamiltonian simulation for the discretized version of $\widehat{O}(\tau)$.

According the Eq.~(\ref{eq:M_noncom_t}), the implementation cost is mainly determined by $\tau^{-1}$, thus determined by the commutator norm decay rate. We note that one scenario in which the commutator norm decays exponentially is that $O$ induces a large energy disturbance, especially in the low-energy eigenspace. For example, if $H$ is the 2D Toric code and $O$ a single-qubit Pauli $X$,  each eigenstate $\ket{\psi}$ of $H$ is mapped to another state $O\ket{\psi}$ which has energy difference $2$, this will implies
$\left\| [\widehat{O}(\tau), \rho_{\beta}] \right\|_1 \leq 2 \exp\left(-\frac{1}{\tau^2}\right).$ 
Another example arises when measuring the zero-temperature Gibbs state (i.e. ground state) of a gapped system. In particular, when $\beta = +\infty$ and the Hamiltonian has a spectral gap $c$ between the ground-state energy and the first excited energy, then for any observable $O$ with $\|O\|\leq 1$, one have that $\left\| [\widehat{O}(\tau), \rho_{\beta}] \right\|_1 \leq2\exp(-\frac{c^2}{4\tau^2})$.
Further details on these two examples are given in Appendix~\ref{app:commutator_decay}.

We remark that although Theorem~\ref{thm:M_noncom} shows that the implemented measurement $\cM$ approximately fixes the Gibbs state---suggesting that an additional burn-in period may not be necessary---we cannot directly apply \cref{lem:aut_gap} and \cref{thm:time_average}, since $\cM$ does not satisfy \textit{exact} detailed balance. We leave addressing this gap to future work.

\section*{Acknowledgement} 
This work is partially supported by the Simons Quantum Postdoctoral Fellowship (J.J., J.L.), by a Simons Investigator Award in Mathematics through Grant No. 825053 (J.L., L.L.), and by the Challenge Institute for Quantum Computation (CIQC) funded by National Science Foundation (NSF) through grant number OMA-2016245 (L.L.). We thank Joao Basso and Sandy Irani for helpful discussions.

\appendix

\section{Proof of Theorem \ref{thm:M_noncom}}
\label{app:LCU}

In this section, we give the proof for Theorem \ref{thm:M_noncom}.

\paragraph{Discretization of $\widehat{O}(\tau)$}
First, we specify the discretization of $\widehat{O}(\tau)$. Set  integers
\begin{align} T=\cO(\tau^{-1}\log(\epsilon^{-1}\tau^{-1})), \quad L=\cO(\epsilon^{-1}\|H\| \tau^{-1}).
\end{align}
 We truncate the infinite integral in $\widehat{O}(\tau)$ to $[-T,T]$ and discretize the truncated integral with step size $L^{-1}$. More specifically, define the quadrature points as $t_j=-T + jL^{-1}$ for $j=0$ to $2TL-1$. The discretized version of $\widehat{O}(\tau)$ is
\begin{align}
    \widehat{O}_{\mathrm{dis}}(\tau)
    := \sum_{j=0}^{2TL-1} \frac{1}{L}\,
    e^{iHt_j} O e^{-iHt_j} f(t_j).
\end{align}
Using standard analyses of integral discretization, one can verify that
\begin{align}
   \left\| \widehat{O}(\tau)-\widehat{O}_{\mathrm{dis}}(\tau)\right\| \le \epsilon. \label{eq:OOt}
\end{align}

\paragraph{Proof of Theorem \ref{thm:M_noncom}}
The measurement $\mathcal{M}$ is implemented by Algorithm~\ref{alg:GQPE} with respect to the observable $\widehat{O}_{dis}(\tau)$. First, note that the expectation and variance properties in Theorem~\ref{thm:M_noncom} follow as a corollary of Theorem~\ref{thm:GQPE}. In particular, since $\|\widehat{O}_{dis}(\tau) - \widehat{O}(\tau)\| \leq \epsilon$ and $\|\widehat{O}(\tau)\| \leq 1$ by Observation~\ref{obs:OFT}, Theorem~\ref{thm:GQPE} guarantees that the expectation and variance bounds hold. Furthermore, the approximate fixed-point property in Theorem~\ref{thm:M_noncom} directly follows from the commutativity property established in Theorem~\ref{thm:GQPE}.

From Theorem \ref{thm:GQPE},  the implementation cost of $\cM$ is dominated by controlled Hamiltonian simulation  for  $\widehat{O}_{dis}(\tau)$ for time
$t=\cO(\log \epsilon^{-1})$, which is dominated by  Hamiltonian simulation  for  $\widehat{O}_{dis}(\tau)$. 
In the following, we analyze the simulating   $e^{i t\widehat{O}_{dis}(\tau)}$ by standard techniques for 
$$t=\cO(\log \epsilon^{-1}).$$

\paragraph{Geometrically local case} We begin with  the special case when both $H,O$ are geometrically local on a $D$-dimensional lattice.  
  In addition assume $\tau^{-1} = \cO(\log \epsilon^{-1})$ thus  $T=\polylog(\epsilon^{-1})$.

  Let $\Omega$ denote a set of qubits on the lattice whose support is slightly larger than that of $O$. Specifically, the distance between $supp(O)$ and any qubit outside $\Omega$ is $\mathcal{O}(\log \epsilon^{-1} + \log \lVert H \rVert)$.
  Since $|t_j| \le T$ for all $j$, by the Lieb--Robinson bound (Lemma~5 in \cite{haah2021quantum}), for any $j$ one can approximate 
$e^{i H t_j} O e^{-i H t_j}$ to precision $\epsilon/(2 T L)$ by an operator supported on $\Omega$. Consequently, $\widehat{O}_{dis}(\tau)$  can be approximated to precision $\epsilon$ by an operator supported on $\Omega$, whose locality is $\cO(\log^D(\epsilon^{-1}) + \log^D(\|H\|))$, where $D$ is the lattice dimension. Thus, one can simulate $e^{i t \widehat{O}_{dis}(\tau)}$ for $t = \mathcal{O}(\log \epsilon^{-1})$ with gate complexity $\mathcal{O}\left(\polylog(\epsilon^{-1}) \times \polylog(\lVert H \rVert)\right)$ using Hamiltonian simulation methods that achieve logarithmic dependence on the precision~\cite{gilyen2019quantum,childs2012hamiltonian,low2019hamiltonian}.

\paragraph{General case} For general case where $H,O$ might not be geometrically local and  $\tau^{-1}$ may be large. Recall that 
\begin{align}
    T=\cO(\tau^{-1}\log(\epsilon^{-1}\tau^{-1})), \quad L=\cO(\epsilon^{-1}\|H\| \tau^{-1}).
\end{align}
A standard approach to simulating $e^{i t \widehat{O}_{dis}(\tau)}$ is to use the linear combination of unitaries (LCU) technique. General references on LCU can be found in \cite{gilyen2019quantum,childs2017lecture}.

Consider decomposing $O$ into a sum of Paulis, that is,
 $$O=\sum_{k=1}^{\kappa_O} u_k U_k,$$ where each $U_k$ is a local Pauli operator (unitary and Hermitian). Since we assume that $O$ is local and $\|O\|=\cO(1)$, we have $\kappa_O=\cO(1)$ and $|u_k|=\cO(1)$.  Absorbing the sign of $u_k$ into the Pauli, we may write equivalently $O=\sum_{k=1}^{\kappa_O} a_k P_k$ with $a_k:=|u_k|\ge 0$ and $P_k:=\mathrm{sgn}(u_k)U_k$.
 Then 
\begin{align}
    \widehat{O}_{\mathrm{dis}}(\tau)
    &:=\sum_{k=1}^{\kappa_O}   \sum_{j=0}^{2TL-1}a_k f(t_j)  \frac{1}{L} 
    e^{iHt_j}    P_k e^{-iHt_j}.\\
    t_j & = -T + jL^{-1}.
\end{align}

Set $s_1=\cO(1)$ to be an integer such that $2^{s_1}\geq \kappa_O$. 
For simplicity,  we choose $T,L$ such that $2TL=2^{s_2}$ for some integer 
$$s_2=\cO(\log(\tau^{-1}\epsilon^{-1}\|H\|))$$
 
Define $R$ to be a $(s_1+s_2)$-qubit unitary that can prepare the following state,
\begin{align}
	R\ket{0^{s_1}}_1\ket{0^{s_2}}_2 :=  \left(\frac{1}{\sqrt{C_O}}\sum_{k=1}^{\kappa_O} \sqrt{a_k} \ket{k}_1 \right)\otimes\left(\frac{1}{\sqrt{C_f}} \sum_{j=0}^{2TL-1} \frac{1}{\sqrt{L}} \sqrt{f(t_j)} \ket{j}_2\right).
\end{align}
where the subscripts $\ket{\cdot}_1$ and $\ket{\cdot}_2$ refer to different registers.
Note that the normalization factor $C_O:=\sum_{k=1}^{\kappa_O} a_k=\cO(1)$ since $\kappa_O=\cO(1)$ and $a_k=\cO(1)$.
 Besides, by the choice of parameters we know that 
$C_f$ satisfies
$$|C_f -1 |:= \left|\sum_{j=0}^{2TL-1} \frac{1}{L}f(t_j)-1\right|\leq \epsilon.$$ 
Thus the total normalization factor satisfies
\begin{align}
\alpha:= C_O C_f =\cO(1).	
\end{align}

Define an $(s_1+s_2+n)$-qubit  unitary (which is called the select oracle) 
\begin{align}
	U:= \sum_{k=1}^{\kappa_O}\sum_{j=0}^{2TL-1} \ket{k}_1\bra{k}_1\otimes \ket{j}_2\bra{j}_2\otimes e^{iHt_j} P_k e^{-iHt_j} .  
\end{align}
Then the $(s_1+s_2+n)$-qubit unitary $R^{\dagger} \,U R$ is a $(\alpha,s_1+s_2,0)$-block encoding of $\widehat{O}_{dis}(\tau)$, in the sense that
\begin{align}
	\left\|\widehat{O}_{dis}(\tau) - \alpha \left(\left\langle 0^{s_1+s_2}|_{12}\otimes I \left|\, R^{\dagger}\,UR\,\right| 0^{s_1+s_2}\right\rangle_{12} \otimes I \right)\right\|_2 =0.
\end{align}

Recall that $\alpha =\cO(1)$.  
By \cite[Definition~43 and Corollary~60]{gilyen2019quantum}, one can implement the Hamiltonian evolution $e^{i t\widehat{O}_{dis}(\tau)}$ for $t=\cO(\log \epsilon^{-1})$ to precision $\epsilon$ using 
\begin{itemize}
	\item  $(s_1+s_2)$ ancilla qubits, where $s_1=\cO(1)$ and  $s_2=\cO(\log(\tau^{-1}\epsilon^{-1}\|H\|))$,
	\item $\cO(t + \log \epsilon^{-1})$ applications of the block-encoding $R^{\dagger} U R$ (and its inverse). Recall in our case $t=\cO(\log \epsilon^{-1})$.
\end{itemize}

 where $R$ can be implemented with 
 $2^{s_1}\times \poly(s_2)$ elementary gates using Gaussian state preparation algorithms such as \cite{kitaev2008wavefunction}, and since $|t_j|\leq T$, we have that $U$ corresponds to performing controlled Hamiltonian evolution of $H$ for time
 $$\cO(\tau^{-1} \log(\epsilon^{-1}\tau^{-1})).$$

\section{Quantum detailed balance and its property}\label{app:DB}

 In this section, we prove some basic properties of quantum detailed balance.
 We use $\bM$ to denote the linear space formed by all $\bC^{2^n\times 2^n}$ matrices.  Let $\rho_{\beta}$ be the Gibbs state. For any $s\in[0,1]$,
define the linear map $\cF_s:\bM\rightarrow \bM$ to be
\begin{align}
\cF_{s}(A) = \rho_{\beta}^{1-s} A\rho_{\beta}^{s},\,\,  \forall A\in \bM.
\end{align}
Note that $\rho_{\beta}$ is full-rank thus $\cF_s$ is invertible.

For two matrices 
$A,B\in \bM$, the \textit{Hilbert-Schmidt inner product} (HS) is defined as
\begin{align}
\langle A,B\rangle:= \Tr(A^\dagger B).
\end{align}
The weighted inner product is defined as
\begin{align}
    \langle A,B\rangle_s := \langle A, \cF_s(B)\rangle =\Tr(A^\dagger  \rho_{\beta}^{1-s} B \rho_{\beta}^{s}).
\end{align}
One can verify that $\langle ,\rangle_s$  defines a valid inner product. In particular,  $\langle ,\rangle_{\frac{1}{2}}$ is known as the KMS  inner product.
One can check that
\begin{lemma}
Since $\rho_{\beta}$ is a Hermitian operator, we have that
    $\cF_s$ is  Hermitian  w.r.t. HS. That is $\langle A, \cF_s(B)\rangle =\langle \cF_s(A),B\rangle.$
\end{lemma}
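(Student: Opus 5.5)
The claim follows by expanding the weighted inner product and using the cyclicity of the trace together with the Hermiticity of $\rho_\beta$. Since $\rho_\beta$ is Hermitian and positive definite (it is full rank), every power $\rho_\beta^{s}$ and $\rho_\beta^{1-s}$ with $s\in[0,1]$ is well defined through the spectral decomposition. Each such power is again Hermitian, so $(\rho_\beta^{s})^\dagger=\rho_\beta^{s}$ and $(\rho_\beta^{1-s})^\dagger=\rho_\beta^{1-s}$. This is the only property of $\rho_\beta$ the argument uses.

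The plan is to compute both sides directly. For the left-hand side, the definition of the Hilbert--Schmidt inner product gives
\begin{align}
\langle A,\cF_s(B)\rangle = \Tr\left(A^\dagger \rho_\beta^{1-s} B \rho_\beta^{s}\right).
\end{align}
For the right-hand side, I first compute the adjoint of $\cF_s(A)$. Using $(XYZ)^\dagger=Z^\dagger Y^\dagger X^\dagger$ and the Hermiticity of the powers,
\begin{align}
\cF_s(A)^\dagger=(\rho_\beta^{1-s}A\rho_\beta^{s})^\dagger=\rho_\beta^{s}A^\dagger\rho_\beta^{1-s}.
\end{align}
Substituting this into the inner product yields
\begin{align}
\langle \cF_s(A),B\rangle=\Tr\left(\rho_\beta^{s}A^\dagger\rho_\beta^{1-s}B\right).
\end{align}

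The final step is to move the leading factor $\rho_\beta^{s}$ to the end using cyclicity of the trace, which turns the right-hand side into $\Tr(A^\dagger\rho_\beta^{1-s}B\rho_\beta^{s})$. This is exactly the left-hand side, which proves the lemma.

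There is no substantive obstacle here. The one point needing care is that $\rho_\beta^{s}$ is Hermitian, and this relies on $\rho_\beta$ being positive, so that the real powers are defined by the functional calculus with real eigenvalues $\sigma_i^{s}$.
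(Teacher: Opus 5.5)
Your proposal is correct and is exactly the direct verification the paper leaves to the reader with ``one can check'': take the adjoint of $\rho_\beta^{1-s}A\rho_\beta^{s}$ using Hermiticity of the real powers of $\rho_\beta$, then apply cyclicity of the trace. Your remark that positive definiteness (not just Hermiticity) is what makes $\rho_\beta^{s}$ well defined and Hermitian is a fair refinement of the paper's wording.
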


Recall that given a linear map $\cT$, the $\cT^\dagger$ is its dual map w.r.t. HS.

\begin{definition}[Quantum detailed balance]
We say that a 
     linear map $\cT:\bM\rightarrow \bM$ satisfying the $s$-quantum detailed balance w.r.t. $\rho_{\beta}$,  if $\cT^\dagger$ is Hermitian w.r.t. the weighted inner product  $\langle , \rangle_s,$ that is 
     \begin{align}
     	\langle A, \cT^\dagger(B)\rangle_s = \langle \cT^\dagger(A), B\rangle_s. \label{eq:DB}
     \end{align}
     which is equivalent to \begin{align}
         \cT \circ \cF_{s}  = \cF_{s} \circ \cT^\dagger. \label{eq:fixedp}
    \end{align}
\end{definition}
Note that $\cT \circ \cF_{s}  = \cF_{s} \circ \cT^\dagger$ implies that      
    \begin{align}
          \cF_{s}^{-1/2} \circ \cT \circ  \cF_{s}^{1/2} &=    \cF_{s}^{1/2} \circ \cT^\dagger \circ  \cF_{s}^{-1/2},\label{eq:QDB2}\\
         \cF_{s}^{-1} \circ \cT &=  \cT^\dagger \circ \cF_{s}^{-1},   \label{eq:QDB}
     \end{align}
where Eq.~(\ref{eq:QDB}) implies that $\cT$ is Hermitian w.r.t. another weighted inner product $\langle ,\rangle_{*s}$,
\begin{align}
   & \langle A, \cT(B)\rangle_{*s} = \langle \cT(A), B\rangle_{*s} \label{eq:DB_inv}\\
\text{for} &\langle A , B\rangle_{*s}:= \Tr(A^\dagger \rho_{\beta}^{s-1} B \rho_{\beta}^{-s}).
\end{align}

\begin{lemma}
[Properties of quantum detailed balance] \label{lem:QDB} Consider a linear map  $\cT:\bM\rightarrow \bM$ satisfying the $s$-quantum detailed balance w.r.t. $\rho_{\beta}$. Then
\begin{enumerate}
    \item[(1)] If in addition $\cT^\dagger$ is unital, i.e. $\cT^\dagger(I)=I$, then $\rho_{\beta}$ is a fixed state of $\cT$, i.e. $\cT(\rho_{\beta})=\rho_{\beta}$. Note that when $\cT$ is a quantum channel, $\cT^\dagger$ is unital. 
    \item[(2)]  $\cF_{s}^{-1/2} \circ \cT \circ  \cF_{s}^{1/2}$ is Hermitian w.r.t. HS. 
    \item[(3)] The maps $\cF_{s}^{-1/2} \circ \cT \circ  \cF_{s}^{1/2}$,\, $\cF_{s}^{1/2} \circ \cT^\dagger \circ  \cF_{s}^{-1/2}$,\, $\cT$ and $\cT^\dagger$  are  diagonalizable. Besides, all the four maps have the same real spectrum.
    \item[(4)] If $\cT$ is a quantum channel, then its spectrum lies in $[-1,1]$. 
    \item[(5)]  If $\cT=e^{t \cL}$ for a Lindbladian $\cL$ and $t>0$, and moreover $e^{t\cL}$ satisfies the $s$-quantum detailed balance condition w.r.t. $\rho_{\beta}$, then its spectrum lies in $[0,1]$.
    \item[(6)] When $\cT$ is a quantum channel and $\rho_{\beta}$ is its unique fixed state,  its spectral gap can be characterized by
    \begin{align}
        gap(\cT) = gap(\cT^\dagger) = 1- \max_{X\in \bM, \langle X,I\rangle_s=0} \frac{\langle X,\cT^\dagger(X)\rangle_s}{\langle X, X\rangle_s}
    \end{align}
\end{enumerate}
\end{lemma}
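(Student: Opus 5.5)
I will prove the six items in order, using the relation $\cT\circ\cF_s=\cF_s\circ\cT^\dagger$ from Eq.~(\ref{eq:fixedp}) and its conjugated forms Eqs.~(\ref{eq:QDB2})--(\ref{eq:QDB}) throughout.

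For (1), I apply Eq.~(\ref{eq:fixedp}) to $I$. This gives $\cT(\cF_s(I))=\cF_s(\cT^\dagger(I))=\cF_s(I)$. Since $\cF_s(I)=\rho_\beta^{1-s}\rho_\beta^{s}=\rho_\beta$, we get $\cT(\rho_\beta)=\rho_\beta$.

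For (2), note that $\cF_s$ is positive definite with respect to HS, because $\langle A,\cF_s(A)\rangle=\|\rho_\beta^{(1-s)/2}A\rho_\beta^{s/2}\|_2^2>0$ and $\rho_\beta$ is full rank. Hence $\cF_s^{\pm1/2}$ are well defined and HS-Hermitian; concretely $\cF_s^{1/2}(A)=\rho_\beta^{(1-s)/2}A\rho_\beta^{s/2}$. Let $\cS:=\cF_s^{-1/2}\cT\cF_s^{1/2}$. Its HS-adjoint is $\cF_s^{1/2}\cT^\dagger\cF_s^{-1/2}$, which equals $\cS$ by Eq.~(\ref{eq:QDB2}). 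So $\cS$ is HS-Hermitian.

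For (3), $\cS$ is HS-Hermitian, so it is unitarily diagonalizable with real spectrum. $\cT$ is similar to $\cS$, and $\cT^\dagger$ equals $\cS^\dagger$ (which is $\cS$) conjugated by $\cF_s^{1/2}$. Similarity preserves diagonalizability and spectrum, so all four maps share the same real spectrum.

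For (4), I need every eigenvalue of a channel to have modulus at most $1$. Take an eigenvector $\cT^\dagger(Y)=\lambda Y$. Since $\cT^\dagger$ is unital and completely positive, Russo--Dye gives $\|\cT^\dagger\|_{\infty\to\infty}=1$. Then $|\lambda|\,\|Y\|\le\|Y\|$, so $|\lambda|\le1$. Combined with reality from (3), the spectrum lies in $[-1,1]$.

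For (5), write $\cT=e^{t\cL}=(e^{t\cL/2})^2$. The subtle point is that $e^{t\cL/2}$ itself must satisfy detailed balance, so I use the spectral structure of $\cL$ instead. Eigenvalues of $e^{t\cL}$ are $e^{t\mu}$ for eigenvalues $\mu$ of $\cL$. Detailed balance of $e^{t\cL}$ at a fixed $t$ does not immediately force $\mu$ to be real. My route is to assume, as in the intended application, that detailed balance holds for all $t>0$ (or at least for $t/2$). Differentiating at $t=0$ then gives $\cL\circ\cF_s=\cF_s\circ\cL^\dagger$. Hence $\cF_s^{-1/2}\cL\cF_s^{1/2}$ is HS-Hermitian, $\mu$ is real, and $e^{t\mu}>0$. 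Together with (4), the spectrum lies in $(0,1]$.

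\textbf{Main obstacle.} The obstacle is exactly the hypothesis of (5). If only a single $t$ is assumed, I fall back on the square-root argument: $\cS=\cF_s^{-1/2}e^{t\cL}\cF_s^{1/2}=(\cF_s^{-1/2}e^{t\cL/2}\cF_s^{1/2})^2$. If $\cS$ is a Hermitian matrix that is the square of an operator commuting with it, I would need to argue that the root is also Hermitian. This is delicate, so I will state the all-$t$ assumption explicitly.

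For (6), I work in the Hilbert space $(\bM,\langle\cdot,\cdot\rangle_s)$, where $\cT^\dagger$ is self-adjoint. By (3), $\cT^\dagger$ has an orthonormal eigenbasis with real eigenvalues in $[-1,1]$. By (1) and uniqueness, the eigenvalue $1$ is simple, with eigenvector $I$: the fixed space of $\cT^\dagger$ has the same dimension as that of $\cT$. The orthogonal complement of $I$ is $\{X:\langle X,I\rangle_s=0\}$. By Courant--Fischer, the maximum of the Rayleigh quotient on this complement is the second largest eigenvalue $\lambda_2$. With the paper's convention $gap:=1-\lambda_2$ (the spectrum lies in $[0,1]$ when used), this gives the formula. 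The equality $gap(\cT)=gap(\cT^\dagger)$ follows from (3).
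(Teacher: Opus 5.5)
Your proposal is correct and follows the paper's proof essentially item by item. Items (1)--(3) and (6) use the same arguments. Your Russo--Dye bound on $\cT^\dagger$ in (4) is the dual form of the paper's trace-norm contraction of $\cT$. In (5) you differentiate at $t=0$ to make $\cL$ itself detailed balanced, whereas the paper argues directly that $e^{t\mu}$ is real and lies in $[-1,1]$ for every $t\ge 0$.

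Your caveat about (5) is well founded, since the paper's proof likewise uses detailed balance of $e^{t\cL}$ for all $t$. The single-$t$ reading is in fact false: $\cT(\rho)=Z\rho Z=e^{\cL}(\rho)$ with $\cL=-i[\tfrac{\pi}{2}(I-Z),\,\cdot\,]$ satisfies detailed balance w.r.t.\ $I/2$, yet has eigenvalues $\pm 1$.
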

\begin{proof}
    (1) comes from applying both sides of Eq.~(\ref{eq:fixedp}) to $I$. (2) comes from the fact that $\cF_s$ is Hermitian w.r.t. HS and Eq.~(\ref{eq:QDB2}).
    
    For (3), from (2) we know that   $\cF_{s}^{-1/2} \circ \cT \circ  \cF_{s}^{1/2}$ is Hermitian w.r.t. HS, thus it is diagonalizable w.r.t.  basis $\{Y_1,...,\}\subseteq \bM$, and has a real spectrum, 
    \begin{align}
        &\cF_{s}^{-1/2} \circ \cT \circ  \cF_{s}^{1/2} [Y_i] = \lambda_i Y_i, \text{ where } \lambda_i \in \bR,
    \end{align}
    which implies that $\cF_{s}^{1/2}Y_i$ is the eigenvector of $\cT$ w.r.t. eigenvalue $\lambda_i$. Thus $\cT$ is also diagonalizable and has the same spectrum.  Then it suffices to recall Eq.~(\ref{eq:QDB2}) which says that  $$\cF_{s}^{-1/2} \circ \cT \circ  \cF_{s}^{1/2} =    \cF_{s}^{1/2} \circ \cT^\dagger \circ  \cF_{s}^{-1/2}.$$

    For (4), let $Y\in \bM$ be an eigenvector of $\cT$ with eigenvalue $\lambda$. Since quantum channel does not increase the trace-norm, $ |\lambda|\, \|Y\|_1 =\|\cT(Y)\|_1 \leq \|Y\|_1$, which implies $\lambda\in [-1,1]$. 

For (5), let $\mu$ be an eigenvalue of $\cL$. By the semigroup property, $e^{t\mu}$ is an eigenvalue of $e^{t\cL}$ for all $t\ge 0$. By the additional assumption, each $e^{t\cL}$ satisfies $s$-detailed balance, hence by (3) its spectrum is real, and by (4) it lies in $[-1,1]$. Therefore $e^{t\mu}\in[-1,1]$ for all $t\ge 0$, which forces $\mu\le 0$. Hence the spectrum of $\cT=e^{t\cL}$ is contained in $[0,1]$. (Here we use the fact that $e^{t\cL}$ is a quantum channel for all $t\ge 0$; see, e.g., \cite[Theorem 7.1]{wolf2012quantum}.)

 For (6), by (3) we know that $\cT$ and $\cT^\dagger$ have the same spectrum thus (i) They have the same spectral gap, (ii) $\cT^\dagger$ should also have unique fixed state, which is $I$. Then (6) comes from  the fact that $\cT^\dagger$ is Hermitian w.r.t. $\langle,\rangle_s$, its maximum eigenvalue is $1$ w.r.t. eigenstate $I$, and the definition of spectral gap.
\end{proof}

\begin{lemma}
\label{lem:strict_contract}
	Suppose $\cT$ is a quantum channel that satisfies $s$-detailed balance w.r.t. $\rho_{\beta}$. Then for any operator $A$ we have that 
    \begin{align}
	   \langle A,A \rangle_{*s}  \geq 	\langle \cT(A),\cT(A) \rangle_{*s}.  \label{eq:994}
	\end{align} If further $\cT$ has $\rho_{\beta}$ as its unique fixed state, the spectrum of $\cT$ lies in $[0,1]$,
    and the operator $A$ is non-trivial and  traceless, i.e. $A\neq 0$ and $\Tr(A)=0$, then we have that $\cT$ is strictly contractive, i.e. the $\geq$ in Eq.~(\ref{eq:994}) becomes $>$.
\end{lemma}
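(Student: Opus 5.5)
The plan is to mirror the proof of \cref{lem:contract}, now working with $\cT$ itself and the inner product $\langle\cdot,\cdot\rangle_{*s}$. By Eq.~(\ref{eq:DB_inv}), $s$-detailed balance is equivalent to $\cT$ being Hermitian with respect to $\langle\cdot,\cdot\rangle_{*s}$. Since this is a genuine inner product (here $\rho_\beta$ is full rank), there is a $\langle\cdot,\cdot\rangle_{*s}$-orthonormal eigenbasis $\{Z_i\}_i$ of $\bM$ with $\cT(Z_i)=\mu_i Z_i$. By \cref{lem:QDB}(3)--(4) the eigenvalues are real with $\mu_i\in[-1,1]$. Expanding $A=\sum_i b_i Z_i$ gives
\[
\langle \cT(A),\cT(A)\rangle_{*s}=\sum_i \mu_i^2|b_i|^2\le \sum_i |b_i|^2=\langle A,A\rangle_{*s},
\]
which is Eq.~(\ref{eq:994}).

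For strict contractivity, assume that $\rho_\beta$ is the unique fixed state and that the spectrum lies in $[0,1]$. The fixed-point space of $\cT$ is then one-dimensional and spanned by $\rho_\beta$; I take this as the intended meaning of uniqueness, which by \cref{lem:QDB}(6) matches eigenvalue $1$ having multiplicity one. I would choose $Z_1\propto\rho_\beta$ with $\mu_1=1$. Every other eigenvalue satisfies $0\le\mu_i<1$: it is not $1$ by uniqueness and not $-1$ because the spectrum is in $[0,1]$. Equality $\sum_i\mu_i^2|b_i|^2=\sum_i|b_i|^2$ would force $b_i=0$ for every $i\ge 2$, so that $A=b_1 Z_1\propto\rho_\beta$.

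It remains to show that the traceless condition rules out this case, and this is the only step that needs care. First,
\[
\langle \rho_\beta, A\rangle_{*s}=\Tr\!\left(\rho_\beta\,\rho_\beta^{s-1}A\rho_\beta^{-s}\right)=\Tr(A)=0,
\]
so $b_1=\langle Z_1,A\rangle_{*s}=0$. Together with $b_i=0$ for $i\ge2$, this gives $A=0$, contradicting $A\neq 0$. Hence the inequality is strict.

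The main obstacle is justifying that the eigenvalue $1$ has a one-dimensional eigenspace. For diagonalizable $\cT$ this follows because any eigenvector with eigenvalue $1$ is a fixed point. Uniqueness of the fixed state is usually stated for density matrices, so I would note that fixed operators of a channel decompose into Hermitian parts, and those into positive parts that are also fixed (a standard fact). Alternatively, I would simply read uniqueness as the statement that the fixed-point space is $\mathrm{span}\{\rho_\beta\}$, consistent with the paper's use of \cref{lem:QDB}(6).
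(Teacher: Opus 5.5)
Your proof is correct and follows essentially the same route as the paper: expand $A$ in a $\langle\cdot,\cdot\rangle_{*s}$-orthonormal eigenbasis of $\cT$, use $|\lambda_i|\le 1$ for the contraction, and for strictness use that $\rho_\beta$ spans the eigenvalue-$1$ eigenspace while tracelessness forces the coefficient $\langle \rho_\beta, A\rangle_{*s}=\Tr(A)=0$. Your extra remark that the eigenvalue-$1$ eigenspace is one-dimensional, via the positive/negative-part decomposition of Hermitian fixed points, fills in a step the paper simply asserts.
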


\begin{proof}
From Eq.~(\ref{eq:DB_inv}) we know that $\cT$ is Hermitian w.r.t. $\langle,\rangle_{*s}$, thus there exists a set of eigenbasis $\{Z_i\}_i\subseteq \bM$ which is orthonormal w.r.t. $\langle,\rangle_{*s}$, and satisfies $\cT(Z_i)=\lambda_i Z_i$. From Lemma \ref{lem:QDB} (4) we know that  $\lambda_i \in [-1,1]$. Consider decomposing $A$ onto the basis  $\{Z_i\}_i$ as $A = \sum_i \alpha_i Z_i$. 
  We have 
 \begin{align}
	   \langle A,A \rangle_{*s} =\sum_i |\alpha_i|^2   \geq \sum_i \lambda_i^2 |\alpha_i|^2 =	\langle \cT(A),\cT(A) \rangle_{*s}.  
	\end{align}
If further $\cT$ has $\rho_{\beta}$ as its unique fixed state and has a spectrum in $[0,1]$, we have that $1=\lambda_1> \lambda_2...\geq 0$ and $Z_1=\rho_{\beta}$.
For the case of traceless $A$, it suffices to notice that  $\Tr(A)=0$ implies 
$$\alpha_1=\langle A, Z_1\rangle_{*s} =\langle A, \rho_{\beta}\rangle_{*s} =0,$$ Thus  $\langle A,A \rangle_{*s}  > 	\langle \cT(A),\cT(A) \rangle_{*s},$ since $\forall i\geq 2, |\lambda_i|<1$ and $A\neq 0$.
\end{proof}

Finally we prove Lemma \ref{lem:unique}.

\begin{proof}[of Lemma \ref{lem:unique}]
One can verify that $\cM \cN \cM$  satisfies $s$-detailed 
balance by directly applying the definition of detailed balance.
 We prove the fixed point is unique by contradiction.
For any operator $A$, denote $\|A\|_{*s}:=\sqrt{\langle A,A\rangle_{*s}}$. Note that  $\cM\cN\cM$ satisfies $s$-detailed balance thus fixes $\rho_{\beta}$.
With contradiction assume that $\cM\cN\cM$ has two different fixed states $\rho_1$ and $\rho_2$, thus 
\begin{align}
  0<   \|\rho_1- \rho_2\|_{*s} &=  \| \cM\cN\cM\left(\rho_1- \rho_2\right)\|_{*s} \label{eq:MNM_unique_1}.
\end{align}
If $\cM(\rho_1-\rho_2)=0$, then the above equation implies $ 0<   \|\rho_1- \rho_2\|_{*s} =0$  which leads to a contradiction. Otherwise assume  $\cM(\rho_1-\rho_2)\neq 0$.
Since $\cM$ is a quantum channel thus trace-preserving, we have $\cM(\rho_1-\rho_2)$ is traceless. 
Then use Lemma \ref{lem:strict_contract} for  $\cM$, $\cN$, $\cM$ sequentially we have that
\begin{align}
    \| \cM\cN\cM\left(\rho_1- \rho_2\right)\|_{*s} 
    \leq  \| \cN\cM\left(\rho_1- \rho_2\right)\|_{*s}
    < \|\cM\left(\rho_1- \rho_2\right)\|_{*s}
    \leq  \|\rho_1- \rho_2\|_{*s} \label{eq:MNM_unique_2}.
\end{align}
It suffices to notice that Eqs.~(\ref{eq:MNM_unique_1})(\ref{eq:MNM_unique_2}) lead to a contradiction $\|\rho_1- \rho_2\|_{*s} <  \|\rho_1- \rho_2\|_{*s}$.
Thus we conclude that $\cM\cN\cM$ has a unique fixed state $\rho_{\beta}$. 

Since $\cM\cN\cM$ satisfies $s$-detailed balance and is a quantum channel, from Lemma \ref{lem:QDB} (4) we know that its spectrum lies in $[-1,1]$. To show that its spectrum can be further restricted to be in $[0,1]$ it suffices to prove all its eigenvalue are non-negative.

By Fact \ref{fact:QDB} we know that $\cM\cN\cM$ has the same spectrum as $\cM^\dagger \cN^\dagger \cM^\dagger$.
Let $X$ be an eigenvector of $\cM^\dagger\cN^\dagger\cM^\dagger$ with eigenvalue $\lambda$.
By assumption the spectrum of $\cN$ lies in $[0,1]$, from Lemma \ref{lem:QDB} (3) we know the spectrum of $\cN^\dagger$ also lies in $[0,1]$. Thus 
\begin{align}
\lambda=    \frac{\langle X,\cM^\dagger \cN^\dagger \cM^\dagger\left(X\right) \rangle_s}{\langle X,X\rangle_s} = \frac{\langle \cM^\dagger(X), \cN^\dagger \cM^\dagger\left(X\right) \rangle_s}{\langle X,X\rangle_s} \geq 0,
\end{align}

\end{proof}

\begin{lemma}\label{lem:mix_rel}
	Let $\cN$ be a quantum channel satisfying KMS detailed balance with respect to $\rho_{\beta}$, has $\rho_{\beta}$ as its unique fixed state and has spectrum contained in $[0,1]$. Then we have that
	\begin{align}
		 \left(\frac{1}{gap(\cN)}-1\right) \log\left(\frac{1}{\epsilon}\right) \leq t_{mix}(\epsilon) \leq \frac{1}{gap(\cN)} \log\left(\frac{1}{\epsilon \sigma_{min}}\right).
	\end{align}
\end{lemma}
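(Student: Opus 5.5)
The plan is to prove the two inequalities separately, both in the KMS-weighted Hilbert space where, by Lemma~\ref{lem:QDB}, $\cN$ is self-adjoint. Throughout I write $\Delta:=gap(\cN)$ and $\lambda_2 = 1-\Delta$ for the second-largest eigenvalue. By Lemma~\ref{lem:QDB}(6), $\lambda_2$ is the largest eigenvalue of $\cN$ on the subspace of traceless operators.

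\emph{Upper bound.} I will work with the norm $\|A\|_{*}:=\sqrt{\langle A,A\rangle_{*1/2}}$, under which $\cN$ is self-adjoint by Eq.~(\ref{eq:DB_inv}). Its eigenbasis $\{Z_i\}$ has $Z_1=\rho_\beta$, and every other $Z_i$ is orthogonal to $\rho_\beta$, i.e.\ traceless.

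For any state $\rho$, the difference $\rho-\rho_\beta$ is traceless, so it lies in the span of $Z_i$ with $i\ge 2$. Since the spectrum is in $[0,1]$, this gives
\begin{align*}
\|\cN^t(\rho-\rho_\beta)\|_{*}\le (1-\Delta)^t\,\|\rho-\rho_\beta\|_{*}.
\end{align*}
I then need two norm comparisons.

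First, $\|X\|_1\le \|X\|_{*}$ for Hermitian $X$. This is a Cauchy--Schwarz step: write $X=\rho_\beta^{1/4}(\rho_\beta^{-1/4}X\rho_\beta^{-1/4})\rho_\beta^{1/4}$, and bound $|\Tr(UX)|$ using Hölder with $\Tr\rho_\beta=1$.

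Second, $\|\rho-\rho_\beta\|_{*}^2\le \Tr(\rho_\beta^{-1/2}\rho\rho_\beta^{-1/2}\rho)-1\le \sigma_{min}^{-1}$. Here I expand the square and use $\Tr(\rho_\beta^{-1/2}\rho\rho_\beta^{-1/2}\rho)\le\|\rho_\beta^{-1}\|\,\Tr\rho^2\le\sigma_{min}^{-1}$.

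Combining, $\|\cN^t(\rho)-\rho_\beta\|_1\le e^{-t\Delta}\sigma_{min}^{-1/2}$. This is at most $\epsilon$ once $t\ge \Delta^{-1}\log(1/(\epsilon\sqrt{\sigma_{min}}))$, which is stronger than the claimed bound.

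\emph{Lower bound.} Take a Hermitian eigenvector $Y$ of $\cN$ with eigenvalue $\lambda_2$. Since the spectrum is real and $\cN$ preserves Hermiticity, I can take $Y$ Hermitian by choosing $Y+Y^\dagger$ or $i(Y-Y^\dagger)$. $Y$ is traceless because it is orthogonal to $\rho_\beta$, so it can be written as $Y=c(\rho_1-\rho_2)$ for states $\rho_1,\rho_2$, normalized so that $\|\rho_1-\rho_2\|_1=2$ after scaling. Then
\begin{align*}
\lambda_2^t\cdot 2=\|\cN^t(\rho_1)-\cN^t(\rho_2)\|_1\le 2\max_i\|\cN^t(\rho_i)-\rho_\beta\|_1.
\end{align*}
At $t=t_{mix}(\epsilon)$ this gives $(1-\Delta)^{t}\le\epsilon$, hence $t\log\frac{1}{1-\Delta}\ge\log\frac1\epsilon$. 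Finally I use $\log\frac{1}{1-\Delta}\le \frac{\Delta}{1-\Delta}$ to get $t\ge(\Delta^{-1}-1)\log(1/\epsilon)$. The case $\Delta=1$ is trivial.

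The main obstacle I expect is the norm comparison in the upper bound, in particular the inequality $\|X\|_1\le\|X\|_*$ when $\rho_\beta$ and $X$ do not commute. If the direct Hölder argument is awkward, I would fall back to the $s$-independent route: bound $\|X\|_1\le\sqrt{\Tr(X\rho_\beta^{-1}X)}$ via Cauchy--Schwarz. The cost is working with $s=0$ or $s=1$ norms, but this still yields the stated $\log(1/(\epsilon\sigma_{min}))$ bound.
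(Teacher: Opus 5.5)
Your proposal is correct and matches the paper's proof in substance. For the upper bound, the paper cites the $\chi^2_{1/2}$-contraction theorem of Temme et al., which is exactly your KMS-norm contraction combined with $\|X\|_1\le\|X\|_{*}$ and $\|\rho-\rho_\beta\|_*^2=\chi^2(\rho,\rho_\beta)\le\sigma_{min}^{-1}$. For the lower bound, the paper uses the dual form of your witness argument: it tests $\cN^t(\rho)-\rho_\beta$ against a Hermitian eigenvector $Z_2$ of $\cN^\dagger$, using a state with $\Tr(\rho Z_2)=\|Z_2\|$, instead of Jordan-decomposing an eigenvector of $\cN$.

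One caution: your fallback of working in the $s=0$ or $s=1$ norm would not go through. Only KMS detailed balance is assumed, so $\cN$ need not be self-adjoint, or contract at rate $1-\Delta$, in those norms. Your H\"older argument for $\|X\|_1\le\|X\|_*$ is valid, so the fallback is unnecessary.
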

\begin{proof} 
To ease notation, in the following we denote $\Delta:=gap(\cN)$.
 
	\underline{(i) Proof for the lower bound.} Denote the eigenvalues of  $\cN^\dagger$ as $\gamma_1=1>\gamma_2...\geq 0$, and the corresponding eigenstate  $Z_1=I,Z_2,...$. 	Note that since   $\cN^\dagger$ is Hermitian-preserving, the eigenstates $\{Z_j\}_j$ can be chosen to be Hermitian matrices. Indeed, suppose $Z_j$ is not Hermitian.  Denote the Kraus operator of the quantum channel $\cN$ is $\{K_u\}_u$,   one can see that $Z_j^\dagger$ is also an eigenstate of $\cN^\dagger$ with eigenvalue $\gamma_j$:
	\begin{align}
	\gamma_j Z_j	= \cN^\dagger(Z_j) =  \sum_u K_u^\dagger Z_j K_u \quad  \Rightarrow  \quad \gamma_j Z_j^\dagger =  \sum_u  K_u^\dagger Z_j^\dagger K_u = \cN^\dagger(Z_j^\dagger).
	\end{align}
   where we use $\gamma_j$ is real.  
	 Therefore  the eigenstates $Z_j, Z_j^\dagger$ can be replaced by the  Hermitian matrices $Z_j+ Z_j^\dagger, i(Z_j - Z_j^\dagger)$, which are also eigenstates of $\cN^\dagger$ with eigenvalue $\gamma_j$.
	
	 Let $\rho$ be any initial  state,  consider the eigenstate $Z_2$ with eignevalue $\gamma_2$, for any $t$,  we have that
	\begin{align}
		\gamma_2^t \,tr\left(\rho Z_2\right) &= tr\left(\rho \left(\cN^\dagger\right)^t\left(Z_2\right)\right)
		 = tr\left(\cN^t \left(\rho\right) Z_2\right)\\ &= tr\left(\left(\cN^t \left(\rho\right)-\rho_{\beta}\right) Z_2\right) \label{eq:rhominus}\\
		 &\leq \left\|\cN^t(\rho)-\rho_{\beta}\right\|_1 \left\|Z_2\right\| \label{eq:lowerbound}
	\end{align}
	where Eq.~(\ref{eq:rhominus}) comes from the fact that  $Z_1$ and $Z_2$ are orthogonal w.r.t. the weighted inner product, 
	$$
	tr\left(Z_1^\dagger \rho_{\beta}^{\frac{1}{2}} Z_2 \rho_{\beta}^{\frac{1}{2}}\right) =0 \quad \Rightarrow \quad tr(\rho_\beta Z_2)=0.
	$$
Since $Z_2$ is a Hermitian matrix, one can choose the quantum state $\rho$ such that $tr(\rho Z_2) = \|Z_2\|$. Thus Eq.~(\ref{eq:lowerbound}) implies that for such $\rho$, 
\begin{align}
	 \gamma_2^t = \left(1-\Delta\right)^t \leq \|\cN^t(\rho)-\rho_{\beta}\|_1. 
\end{align}
Thus 
\begin{align}
	\log (\epsilon^{-1}) \leq t_{mix}(\epsilon) \log\left(\frac{1}{1-\Delta}\right) \leq t_{mix}(\epsilon) \left(\frac{\Delta}{1-\Delta}\right).
\end{align}
which completes the proof.

\underline{(ii) Proof for the upper bound.} Using the $\chi^2_\alpha$-contraction for $\alpha=\frac{1}{2}$ (See Theorem 9, \cite{temme2010chi}), we have that for any initial state $\rho$ and any time $t$,
\begin{align}
	\| \cN^t (\rho) -\rho_{\beta}\|_1 \leq \left(1-\Delta\right)^t \sqrt{\chi^2(\rho,\rho_{\beta})}  \leq e^{-\Delta t} \sqrt{\chi^2(\rho,\rho_{\beta})}, 
\end{align}
where 
\begin{align}
	\sup_\rho \,\,\chi^2(\rho,\rho_{\beta}) = \sup_\rho \,\,\, tr(\rho \,\rho_{\beta}^{-\frac{1}{2}}\,\rho\, \rho_{\beta}^{-\frac{1}{2}}) -1 =  \sigma_{min}^{-1} -1.
\end{align}
Thus
$$
t_{mix}(\epsilon) \leq \Delta^{-1}\log\left(\frac{1}{\epsilon \sqrt{\sigma_{min}}}\right)  \leq \Delta^{-1}\log\left(\frac{1}{\epsilon \,\sigma_{min}}\right). 
$$
\end{proof}

\section{Examples for exponential decay of commutator norm}\label{app:commutator_decay}

Recall that we use $H$ and $\beta$ to denote a local Hamiltonian and an inverse temperature, and write $\rho_{\beta} = \exp(-\beta H)/Z_{\beta}$ for the corresponding Gibbs state. In this section we give two examples where the commutator norm  $\left\| \left[ \widehat{O}(\tau), \rho_{\beta} \right] \right\|_1$ decays  exponentially:

\begin{itemize}
    \item[(i1)] If $H$ is the 2D Toric code and $O$ is single-Pauli $X$, then $\forall \beta$, $\left\| \left[ \widehat{O}(\tau), \rho_{\beta} \right] \right\|_1\leq 2 e^{-1/\tau^2}$.
    \item[(i2)] Consider $H$ is a Hamiltonian where there is a constant energy gap $c$ between the ground energy and the first excited energy. Consider zero-temperature $\beta=+\infty$, where the Gibbs state $\rho_{\beta}$ becomes  the uniform mixture over the ground states. Then for any observable $O$ with $\|O\|\le 1$, we have $\left\| \left[ \widehat{O}(\tau), \rho_{\beta} \right] \right\|_1\leq 2 e^{-c^2/(4\tau^2)}$. 
\end{itemize}

To see that (i1) and (i2) hold, one can introduce a quantity that characterizes the extent to which $O$ disturbs the eigenspaces. In particular,  define
\begin{align}
    \mathcal{J}_O(\tau)
    := \sum_{k} p_k \exp\!\left(- \frac{\nu_k^{\,2}}{4\tau^{2}}\right),
\end{align}
where $p_k = \frac{\dim(\Pi_k)\, e^{-\beta E_k}}{Z_{\beta}}$ and the sum runs over all distinct eigenspaces of $H$, note that $\sum_k p_k = 1$.  
The quantity $\nu_k$ denotes the minimal nonzero energy change induced by $O$ when starting from energy $E_k$. That is, 
$\nu_k := 
    \min \Bigl\{\, |E_i - E_k| \;\Big|\; 
    E_i \in \mathrm{Spec}(H),\; E_i \neq E_k,\;
   \Pi_i O \Pi_k  \neq 0
    \Bigr\}.$ 
One can check that 
\begin{itemize}
    \item If $O$ induces constant energy changes for any eigenstate, i.e. $\nu_k=\cO(1)$ for all $k$, e.g. consider $H$ as the 2D Toric code and $O$ as a single-qubit Pauli $X$, then we have $\nu_k=2$ for all $k$, thus $\mathcal{J}_O(\tau)=\exp(-1/\tau^2)$.
    \item If there is a constant energy gap $c$ between the ground energy and the first excited energy and the temperature is extremely low as $\beta=+\infty$, then for any $O$, $\mathcal{J}_O(\tau)= \exp(-c^2/(4\tau^2))$.
\end{itemize}

To see that (i1)(i2) hold, it suffices to note that the commutator norm is upper bounded by this quantity,
 \begin{align}
\left\|\, 	\left[\widehat{O}(\tau),\rho_{\beta} \right] \,\right\|_1\leq  2 \,\mathcal{J}_{O}(\tau).
 \end{align} 	
 
 \begin{proof} Recall that the definition of $O_\nu$ as in Eq.~(\ref{eq:OFT}).
 First note that for any $k$, we have 
 $$O_0 \Pi_k = \Pi_k O \Pi_k = \Pi_k O_0,$$ thus 
 \begin{align}
  \left[\widehat{O}(\tau),\rho_{\beta} \right] = \left[\widehat{O}(\tau)- O_0,\rho_{\beta} \right].  \label{eq:OmiusO0}
   \end{align}

 	Let $\ket{\psi_k}$ be an eigenstate of $H$ with energy $E_k$.  By Eq.~(\ref{eq:OFT}), one can check that 
 	\begin{align}
 		\ket{\Phi_k(\tau)}:= \left(\widehat{O}(\tau)- O_0\right) \ket{\psi_k} = \sum_{j: \, j\neq k; \,\Pi_j O\Pi_k \neq 0} \exp\left(-\frac{(E_j-E_k)^2}{4\tau^2}\right) \Pi_j O \ket{\psi_k}.
 	\end{align}
Recall that $\|O\|\leq 1$. Thus we have
 \begin{align}
 	\left\| \ket{\Phi_k(\tau)}\right\|^2 &= \langle \psi_k | O \left(\sum_{j: \, j\neq k; \,\Pi_j O\Pi_k \neq 0}   \exp\left(-\frac{(E_j-E_k)^2}{2\tau^2}\right) \Pi_j \right)O \ket{\psi_k} \\
 	&\leq \left\|\sum_{j:j\neq k, \langle \psi_j|O|\psi_k\rangle \neq 0}   \exp\left(-\frac{(E_j-E_k)^2}{2\tau^2}\right) \Pi_j\right\|\\
 	&=  \exp\left(-\frac{\nu_k^2}{2\tau^2}\right).
 \end{align} 
 
 Thus 
  \begin{align}
 	\left\|\, \left[\widehat{O}(\tau)- O_0, \ket{\psi_k}\bra{\psi_k} \right] \,\right\|_1 \leq  2 \left\| \ket{\Phi_k(\tau)}\bra{\psi_k} \right\|_1 \leq 2\left| \ket{\Phi_k(\tau)}\right\| \leq  2\exp\left(-\frac{\nu_k^2}{4\tau^2}\right).
 \end{align}

 To prove the Lemma it suffices to express $\rho_{\beta}$ in the eigenstate of $H$ and use triangle inequality for the trace norm.
 \end{proof}

\section{Gaussian-filtered quantum phase estimation (GQPE)}\label{app:GQPE}

In this section, we present a detailed discretization scheme for the idealized Gaussian-filtered quantum phase estimation described in Section~\ref{sec:GQPE}. In particular, Sections~\ref{sec:ideal_GQPE} and \ref{sec:implement_qubit} describe the discretization and its implementation on a qubit-based quantum computer, with the parameters summarized in Algorithm~\ref{alg:GQPE}. From Section~\ref{sec:RE_PP} onward, we provide a rigorous resource analysis that establishes Theorem~\ref{thm:GQPE}.

\subsection{Implementation for Idealized Gaussian Filtered quantum phase estimation} \label{sec:ideal_GQPE}

To build intuition, here we provide an (imprecise) explanation of the implementation of the idealized Gaussian-filtered quantum phase estimation,  assuming the ancilla system takes continuous values, i.e., the quantum channel $\cM_*$ introduced in Section~\ref{sec:GQPE}. 
In the following section, we will discretize this process for implementation on qubits and carefully analyze the performance of the discretized version.

 Define the Gaussian filter $g(\omega)$ and its Fourier transform:
\begin{equation}
    g_{\gamma}(\omega) = \frac{1}{\sqrt{\gamma\sqrt{2\pi}}}e^{-\omega^2/4\gamma^2},\quad \widehat{g}_{\gamma}(\xi) = \int^{\infty}_{-\infty} e^{-2\pi i \xi\omega}g_{\gamma}(\omega)\d \omega = \left(2\sqrt{2\pi}\gamma\right)^{1/2} e^{-4\pi^2\gamma^2\xi^2}.
\end{equation}

 The measurement $\cM_*$ can be implemented in the following way. Consider two registers and  an input state $\ket{\Psi}$ on register 1,
\begin{enumerate}
	\item Prepare a resource state $\ket{\widehat{G}_*}$ on register 2 where
		$\ket{\widehat{G}_*}:=\int_{\xi=-\infty}^{+\infty} \widehat{g}_{\gamma}(\xi) \ket{\xi}\d \xi.$
	\item Define 
	 controlled Hamiltonian evolution w.r.t. $\bO$  as
		$W_* := \int_{\xi=-\infty}^{+\infty}
		e^{-2\pi i \xi \bO} \otimes \ketbra{\xi}{\xi}.$
    First perform controlled Hamiltonian evolution $W_*$ on $\ket{\Psi}\otimes\ket{\widehat{G}_*}$, then perform the inverse Fourier transform on register $2$, which results in the state	
    \begin{align}
       \ket{\tilde{\Psi}_*} &:= \int^\infty_{\omega=-\infty}\left(\int^\infty_{\xi= -\infty} e^{2\pi i \xi \omega}\widehat{g}_{\gamma}(\xi) e^{-2\pi i \xi \bO} \d \xi \right) \ket{\Psi}\otimes \ket{\omega}, \label{eq:ideal_main}\\
       &= \frac{1}{\sqrt{\gamma\sqrt{2\pi}}}\int^{\infty}_{\omega=-\infty} \exp\left[-\frac{(\omega-\bO)^2}{4\gamma^2}\right] \ket{\Psi}\otimes \ket{\omega} ;
	  \end{align}
	\item  Measuring the  readout register (i.e.,  $\ket{\omega}$) in computational basis.
\end{enumerate}

\subsection{Circuit Implementation on qubits} \label{sec:implement_qubit}

 In this section we implement a measurement $\cM$ on qubit systems which corresponds to a discretization of the idealized measurement $\cM_*$.

Recall  Eq.~(\ref{eq:ideal_main}) that describes the ideal target state,
     $$  \ket{\tilde{\Psi}_*} := \int^\infty_{\omega=-\infty}\left(\int^\infty_{\xi= -\infty} e^{2\pi i \xi \omega}\widehat{g}_{\gamma}(\xi) e^{-2\pi i \xi \bO} \d \xi \right) \ket{\Psi}\otimes \ket{\omega}$$

To perform numerical integration, we discretize this double integral on the following quadrature points ($0\le j, \ell \le N-1$):
\[\omega_j = \left(j - \frac{N}{2}\right)h, \quad \xi_\ell = \left(\ell - \frac{N}{2}\right)h,\]
where $N$ is a discretization number  and $h > 0$ is a step size. Their values will be specified later.
The resulting numerical summation reads:
\begin{equation}\label{eqn:SNh}
    S^{[N]}_h = \sum^{N-1}_{j =0}\sqrt{h}\left(\sum^{N-1}_{\ell=0} h e^{2\pi i \xi_\ell \omega_j} \widehat{g}_{\gamma}(\xi_\ell) e^{-2\pi i \xi_\ell \bO} \right) \ket{\Psi} \otimes \ket{j},
\end{equation}
Here we use $\ket{j}$ to denote $\ket{\omega_j}$. 
Note that the outer integral with respect to $\omega$ has a measure element $\sqrt{h}$ to guarantee $L^2$-normalization of quantum states, 
and the inner integral (for the inverse Fourier transform) has a measure element $h$.

 For convenience,
in what follows, we assume there is a positive integer $m$ and $h = 2^{-m}$. We also choose the discretization number such that
\begin{equation}
    N = \frac{1}{h^2} = 2^{2m},
\end{equation}
so that the energy readout register $\{j\}^{N-1}_{j=0}$ can be represented using $2m$ qubits.  It is worth noting that $S^{[N]}_h$ is not a quantum state as it does not have a unit $2$-norm:
\begin{equation}\label{eqn:CNh-defn}
    C^{[N]}_h \coloneqq \left\|S^{[N]}_h\right\| = \sqrt{h^3N \sum^{N-1}_{\ell=1}|\widehat{g}_{\gamma}(\xi_\ell)|^2} 
    = \sqrt{h \sum^{N-1}_{\ell=1}|\widehat{g}_{\gamma}(\xi_\ell)|^2}.
\end{equation}
While the normalization factor $C^{[N]}_h$ is not unit, it converges exponentially fast to $1$ as we increase $N$ due to the rapid decay of $|\widehat{g}_{\gamma}|^2(\xi)$, as we will show later in the proof analysis.
Define the following two quantum operations: 
\begin{itemize}
    \item A (shifted) quantum Fourier transform:
    \begin{align}
    	\cF_s \ket{\ell} \mapsto \frac{1}{\sqrt{N}}\sum^{N-1}_{j=0}\exp\left(\frac{2\pi i (\ell-N/2)(j-N/2)}{N}\right)\ket{j}. \label{eq:Fs}
    \end{align}
       Since we choose $N = 1/h^2$, this sQFT implements:
    \[\cF_s \ket{\ell} \mapsto h\sum^{N-1}_{j=0}e^{2\pi i \xi_\ell \omega_j}\ket{j}.\]
    Note that
        $\frac{2\pi i (\ell-N/2)(j-N/2)}{N}= \frac{2\pi i \ell j}{N} - \pi i (\ell+j) + \frac{\pi i N}{2}.$
     Thus $\cF_s$ can be implemented by applying appropriate diagonal phase gates, then performing the standard quantum Fourier transform, up to a global phase $e^{\pi i N/2}$.
      \item A controlled unitary (i.e., select oracle):
    \begin{align}
    W = \sum^{N-1}_{\ell=0}e^{-2\pi i \xi_\ell \bO} \otimes \ketbra{\ell}{\ell} = \sum^{N-1}_{\ell=0}e^{-2\pi i (\ell-N/2) h \bO} \otimes \ketbra{\ell}{\ell}. \label{eq:W}
      \end{align}
\end{itemize}

We are now ready to implement the measurement $\mathcal{M}$ on qubits as required for Theorem~\ref{thm:GQPE}. The corresponding quantum circuit is summarized in Algorithm~\ref{alg:GQPE}.

\begin{algorithm}[ht]
 \caption{Implementation of Gaussian Filtered Quantum Phase estimation (GQPE) $\cM$}\label{alg:GQPE}
  \hspace*{\algorithmicindent} \textbf{Inputs:} an observable $O$ with $\|O\|\leq \kappa$, a quantum state $\ket{\Psi}$, a precision parameter $\epsilon$,  \\
  \hspace*{\algorithmicindent} \textbf{\quad\quad\quad}
    $2m$ ancilla qubits, and a parameter $\gamma>0$ (specified in Eqs.~(\ref{eqn:param_thm3})(\ref{eq:m})).\\
   \hspace*{\algorithmicindent} \textbf{Outputs:} 
 a measurement outcome $W \in \mathbb{R}$.
 \\
    \hspace*{\algorithmicindent} \textbf{Registers:} 
  Register $1$ is of $n$ qubits initialized as $\ket{\Psi}$; Register $2$ is of $2m$ ancillary qubits.
\begin{algorithmic}[1]
    \State Normalize $O$ 
    by setting $\tilde{O}=O/\kappa$.\label{line:normal}
    \State   Denote $N=2^{2m}$ and $h=1/\sqrt{N}=1/2^{m}$.
	\State 
    Prepare a Gaussian resource state $\ket{\widehat{G}}$ on register $2$, defined as (where $\xi_\ell$ and $C^{[N]}_h$ are the same as in~\cref{eqn:CNh-defn})
	\begin{align}
    \ket{\widehat{G}} = \frac{1}{C^{[N]}_h}\sum^{N-1}_{\ell=1} \sqrt{h}\,\,\widehat{g}_{\gamma}(\xi_\ell) \ket{\ell}.
	\end{align}
	\State Apply the controlled unitary $W$ and $I\otimes \cF_s$ to  $\ket{\Psi}\otimes \ket{\widehat{G}}$ and get the target state $\ket{\widehat{\Psi}}$, where $\cF_s$ defined in Eq.(\ref{eq:Fs}) and $W$ defined in Eq.~(\ref{eq:W}) w.r.t. $\widetilde{O}$:  
    \label{line:4}
\begin{align}
    \ket{\Psi}\otimes \ket{\widehat{G}} &\xrightarrow[]{W} \frac{\sqrt{h}}{C^{[N]}_h} \sum^{N-1}_{\ell=1} \widehat{g}_{\gamma}(\xi_\ell)  e^{-2\pi i \xi_\ell \tilde{O}}\ket{\Psi}\otimes \ket{\ell} \nonumber\\
    &\xrightarrow[]{I\otimes \cF_s} \frac{h^{3/2}}{C^{[N]}_h} \sum^{N-1}_{\ell=1}\sum^{N-1}_{j=0} e^{2\pi i \xi_\ell \omega_j} \widehat{g}_{\gamma}(\xi_\ell) e^{-2\pi i \xi_\ell \tilde{O}}\ket{\Psi} \otimes \ket{j} = \ket{\widehat{\Psi}}.
\end{align}
	\State  Measure register $2$ in the computational basis and obtain an outcome $j \in \{0,\dots,N-1\}$. (Conditioned on this outcome, the post-measurement state on register $1$ is proportional to $\bO_j\ket{\Psi}$, for $O_j$ defined in Eq.(\ref{eqn:defn-Oj}).)  Let $\omega = \left(j - \frac{N}{2}\right)h$, and we define
    \begin{equation}\label{eqn:raw_readout}
        Z \coloneqq \begin{cases}
            \omega & -2 \le \omega \le 2,\\
            0 & \mathrm{otherwise}.
        \end{cases}
    \end{equation}
    \State Output the measurement outcome: $W = \kappa Z$.
\end{algorithmic}
\end{algorithm}

\subsection{Resource Estimation: Proof of Theorem~\ref{thm:GQPE}}\label{sec:RE_PP}

\subsubsection{Outline of proof}

\cref{thm:GQPE} guarantees that the measurement $\mathcal{M}$ recovers the observable value of $O$. Furthermore, the measurement procedure leaves the Gibbs state $\rho_\beta$ undisturbed if $[\rho_\beta, O] = 0$; and only slightly perturbs $\rho_\beta$ when $[\rho_\beta, O]$ is small. We demonstrate those by characterizing the expectation, variance, covariance, and commutativity properties of the measurement channel $\mathcal{M}$. Specifically, the following subsections establish a series of technical results regarding the measurement statistics:
\begin{itemize}
\item Expectation and variance of $Z$:~\cref{cor:exp_var},
\item Covariance of two consecutive $Z$'s:~\cref{prop:gqpe-main-2},
\item Commutative properties of $\mathcal{M}$:~\cref{prop:com}.
\end{itemize}

Note that the above technical lemmas all assume that the spectral norm of the observable operator $O$ is bounded by $1$. In what follows, we show that the error bound for case $\|O\| \le 1$ implies the desired bound for the general case $\|O\| \le \kappa = \poly(n)$. More specifically, for a given precision parameter $\epsilon$, we consider the normalized observable $\tilde{O}$ and   a new precision parameter $\tilde{\epsilon}$, where $$\tilde{O} = O/\kappa, \quad \tilde{\epsilon} = \epsilon / \kappa.$$
Let $Z$ be the raw readout in~\cref{alg:GQPE}, as given in~\cref{eqn:raw_readout}. By~\cref{cor:exp_var}, we can choose
\begin{equation}\label{eqn:param_thm3}
    \gamma^{-1} =  \cO\left(\kappa \log^{1/2}\left( 1/\tilde{\epsilon} \right)\right),\quad h^{-1} = \cO\left( \gamma^{-1}\log^{1/2}\left(\gamma^2/\tilde{\epsilon}\right)\right)
\end{equation}
such that
\begin{equation}
    |\mathbb{E}[Z] - \bE_\rho[\tilde{O}]| \le \tilde{\epsilon}, \quad |\bV[Z] - \bV_\rho[\tilde{O}]| \le 3/\kappa^2,
\end{equation}
which immediately implies that the final output $W = \kappa Z$ satisfies:
\begin{equation}
    |\mathbb{E}[W] - \bE_\rho[O]| \le \epsilon, \quad |\bV[W] - \bV_\rho[O]| \le 3.
\end{equation}
The covariance property and the commutative property directly follow from this error analysis, as discussed in~\cref{prop:gqpe-main-2} and~\cref{prop:com}.

\paragraph{Implementation cost.}
With the parameter choice as in~\cref{eqn:param_thm3}, we have the number of ancilla qubits $2m$ and the maximal Hamiltonian simulation time $t_{\max}$ (of the original observable $O$) as follows:
\begin{equation}\label{eq:m}
    m = \log_2(1/h) = \cO\left(\log (\kappa) + \log\log (\epsilon^{-1})\right),\quad t_{\max} = \frac{Nh}{2\kappa} = \cO\left(\log(\kappa) + \log (\epsilon^{-1})\right)
\end{equation}
The Gaussian resource state $\ket{\widehat{G}}$ can be prepared using the methods in \cite{grover2002creating,kitaev2008wavefunction}, with gate complexity $\poly(m) = \polylog(\kappa) + \polylog(\epsilon^{-1})$.

\subsubsection{Expectation and variance}\label{app:main_tech}

In the following analysis, we define a new operator for each $j \in \{0,\dots, N-1\}$:
\begin{equation}\label{eqn:defn-Oj}
 \bO_j:= \bO_{ w_j} := \frac{h^{3/2}}{C^{[N]}_h} \sum^{N-1}_{\ell=1} e^{2\pi i \xi_\ell (\omega_j - \tilde{O})} \,\widehat{g}_\gamma(\xi_\ell) 
 =  \frac{h^{3/2}}{C^{[N]}_h} \sum^{N/2-1}_{k=1-N/2} e^{2\pi i kh (\omega_j - \tilde{O})} \,\widehat{g}_\gamma(kh)  
\end{equation}
One can check that Algorithm \ref{alg:GQPE} implement a quantum channel $\cM$ where
\begin{equation}
    \mathcal{M}[\rho] = \sum^{N-1}_{j=0} \bO_{ w_j} \rho \bO^\dagger_{ w_j}.\label{eq:MOj}
\end{equation}

\begin{proposition}\label{prop:gqpe-main-1} Let $O$ be an observable with $\|O\|\leq 1$. There exist parameter choices
\begin{equation}\label{eqn:param_choice_exp_var}
    \gamma^{-1}=\cO\left(\log^{1/2}(1/\epsilon)\right),\quad h^{-1}=\cO\left(\gamma^{-1}\log^{1/2}(\gamma^2/\epsilon)\right)
\end{equation}
such that
\begin{align}
&\left\|\sum_{|\omega_j| \le 2} \omega_j O^\dagger_j O_j -  O\right\| \le \epsilon, \quad \left\|\sum_{|\omega_j| \le 2} \omega^2_j O^\dagger_j O_j -  \left(O^2+\gamma^2 I\right)\right\| \le 2\gamma^2. \label{eqn:gqpe-main-1-eq-2}
\end{align}
\end{proposition}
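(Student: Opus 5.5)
Since every $O_j$ is a function of $O$ alone, I will diagonalize $O$ and reduce \eqref{eqn:gqpe-main-1-eq-2} to scalar estimates that hold uniformly over eigenvalues $x\in[-1,1]$. Writing $O=\sum_x x\,P_x$, each $O_j$ is normal and diagonal in this basis, so $O_j^\dagger O_j=\sum_x |q_j(x)|^2 P_x$. Here
\[
q_j(x)=\frac{h^{3/2}}{C^{[N]}_h}\sum_{k=1-N/2}^{N/2-1} e^{2\pi i kh(\omega_j-x)}\,\widehat g_\gamma(kh).
\]
It then suffices to show
\[
\Big|\sum_{|\omega_j|\le 2}\omega_j|q_j(x)|^2-x\Big|\le\epsilon,
\qquad
\Big|\sum_{|\omega_j|\le 2}\omega_j^2|q_j(x)|^2-(x^2+\gamma^2)\Big|\le 2\gamma^2 ,
\]
uniformly in $x\in[-1,1]$. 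The operator norms are then the maxima over the spectrum.

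The key step is to identify $h^{-1/2}q_j(x)$ as a Riemann sum for $\int e^{2\pi i\xi(\omega_j-x)}\widehat g_\gamma(\xi)\,d\xi=g_\gamma(\omega_j-x)$. I will control the discrepancy with the Poisson summation formula. Three error sources appear.

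\begin{itemize}
\item \textbf{Truncation of the sum over $k$.} The tail $|k|\ge N/2$ corresponds to $|\xi|\ge Nh/2=1/(2h)$. It is bounded by $e^{-4\pi^2\gamma^2/(4h^2)}$ times a polynomial prefactor.
\item \textbf{Aliasing from Poisson summation.} The full lattice sum $h\sum_{k\in\mathbb Z}$ equals $\sum_{p\in\mathbb Z} g_\gamma(\omega_j-x+p/h)$. The $p\neq0$ images are of size $e^{-(1/h-3)^2/(4\gamma^2)}$ when $|\omega_j|,|x|\le 2$.
\item \textbf{Normalization.} $C^{[N]}_h$ differs from $\|\widehat g_\gamma\|_2=1$ by a similarly Poisson-controlled amount.
\end{itemize}

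Hence $|q_j(x)|^2=h\,g_\gamma(\omega_j-x)^2\,(1+\delta)+\eta$ with $\delta,\eta$ exponentially small in $\gamma^2/h^2$. Choosing $h^{-1}=\cO(\gamma^{-1}\log^{1/2}(\gamma^2/\epsilon))$ makes all of these errors at most $\epsilon/\mathrm{poly}$.

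Next I handle the sum over $j$. The quantity $h\,g_\gamma(\omega-x)^2=\frac{h}{\gamma\sqrt{2\pi}}e^{-(\omega-x)^2/2\gamma^2}$ is a Riemann sum on the grid $\omega_j$ for a Gaussian density of mean $x$ and variance $\gamma^2$. I will apply Poisson summation once more, now in $\omega$ with spacing $h$. The dual frequencies $p/h$ are suppressed by $e^{-2\pi^2\gamma^2p^2/h^2}$, which gives $\sum_j \omega_j^r h g_\gamma^2=\int\omega^r(\cdot)\,d\omega+$ (exponentially small) for $r=0,1,2$.

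The restriction $|\omega_j|\le 2$ discards a Gaussian tail beyond distance $1$ from $x$. Its contribution to the first and second moments is at most $\cO(e^{-1/(2\gamma^2)})$. This is $\le\epsilon$ once $\gamma^{-1}=\cO(\log^{1/2}(1/\epsilon))$, which fixes the choice of $\gamma$.

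The exact integrals give $x$ and $x^2+\gamma^2$. Collecting all errors yields the first-moment bound $\epsilon$. For the second moment the errors are far below $\gamma^2$, and the stated slack $2\gamma^2$ absorbs them easily; I assume $\epsilon\le\gamma^2$, which holds for this parameter choice.

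I expect the main obstacle to be the uniform bookkeeping of the product $|q_j|^2$. The errors in $q_j$ enter quadratically and then get multiplied by $\omega_j$ or $\omega_j^2$ and summed over up to $N=h^{-2}$ indices. I plan to bound the cross term through $\sum_j|q_j-\tilde q_j|^2$, using Parseval for the DFT, rather than pointwise times $N$. This loses at most polynomial factors in $h^{-1}$, which the logarithmic slack in $h$ covers.
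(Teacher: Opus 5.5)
Your proposal is correct and follows essentially the same route as the paper. The paper also splits the error by the triangle inequality into a quadratic term, a cross term, and a Gaussian-moment Riemann-sum term. It controls the first two with the pointwise Poisson-summation estimate $\|O_\omega-\sqrt h\,g_\gamma(\omega-O)\|\le\delta(\gamma,h)$, which covers truncation, aliasing and normalization. It handles the last term with Poisson summation plus the Gaussian tail beyond $|\omega_j|>2$.

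The Parseval refinement you plan for the cross term is unnecessary. Only the $\cO(1/h)$ grid points with $|\omega_j|\le 2$ enter the sum, not $N=h^{-2}$ of them, so the paper simply multiplies the exponentially small pointwise error by that count.
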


The proof of~\cref{prop:gqpe-main-1} is postponed to the end of this subsection. Below, we state an immediate corollary that characterizes the measurement statistics $Z$.

\begin{corollary}[Expectation and Variance]\label{cor:exp_var} Let $O$ be an observable with $\|O\|\leq 1$. Define 
\[\bE_\rho[O] \coloneqq \Tr[\rho O],\quad \bV_\rho[ O] \coloneqq \bE_\rho[ O^2]-\bE_\rho[ O]^2\]
For any quantum state $\rho$, there exists parameter choices of $\gamma$ and $h$ as in~\cref{eqn:param_choice_exp_var} such that the measurement statistics $Z$ in~\cref{alg:GQPE} satisfy
\begin{equation}
    |\mathbb{E}[Z] - \Tr[\rho\,  O]| \le \epsilon,\quad |\bV[Z] - \bV_\rho[ O]| \le 3\gamma^2.
\end{equation}
\end{corollary}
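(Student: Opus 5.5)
Corollary~\ref{cor:exp_var} follows almost immediately from Proposition~\ref{prop:gqpe-main-1}, once the statistics of the raw readout $Z$ are expressed through the Kraus operators $O_j$ of Eq.~(\ref{eqn:defn-Oj}). By Eq.~(\ref{eq:MOj}), outcome $j$ occurs with probability $\Tr[O_j\rho O_j^\dagger]=\Tr[\rho\, O_j^\dagger O_j]$. By the definition Eq.~(\ref{eqn:raw_readout}), $Z=\omega_j$ when $|\omega_j|\le 2$ and $Z=0$ otherwise. Hence
\begin{align*}
\mathbb{E}[Z]=\Tr\Big[\rho \sum_{|\omega_j|\le 2}\omega_j O_j^\dagger O_j\Big],\qquad \mathbb{E}[Z^2]=\Tr\Big[\rho \sum_{|\omega_j|\le 2}\omega_j^2 O_j^\dagger O_j\Big].
\end{align*}
The outcomes with $|\omega_j|>2$ contribute zero to both moments, so the truncation is already accounted for in these operator sums. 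Throughout we use the normalization $\|O\|\le 1$, which is the setting of the Proposition.

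For the mean, I would choose $\gamma,h$ as in Eq.~(\ref{eqn:param_choice_exp_var}) and apply the first bound of Proposition~\ref{prop:gqpe-main-1}. Since $|\Tr[\rho A]|\le \|A\|$ for any state $\rho$, this gives
\begin{align*}
|\mathbb{E}[Z]-\Tr[\rho O]|\le \epsilon .
\end{align*}

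For the variance, the second bound of the Proposition gives $|\mathbb{E}[Z^2]-\Tr[\rho O^2]-\gamma^2|\le 2\gamma^2$, and therefore $|\mathbb{E}[Z^2]-\Tr[\rho O^2]|\le 3\gamma^2$. It remains to compare the squared means:
\begin{align*}
|\mathbb{E}[Z]^2-\Tr[\rho O]^2| = |\mathbb{E}[Z]-\Tr[\rho O]|\cdot|\mathbb{E}[Z]+\Tr[\rho O]| \le \epsilon(2+\epsilon),
\end{align*}
using $|\Tr[\rho O]|\le 1$ and $|\mathbb{E}[Z]|\le 1+\epsilon$. Combining the two estimates,
\begin{align*}
|\bV[Z]-\bV_\rho[O]|\le 3\gamma^2+3\epsilon .
\end{align*}

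The main obstacle is the leftover $\cO(\epsilon)$ term, since the target bound is exactly $3\gamma^2$. The first thing I would try is to absorb it into the slack of the second bound: run the Proposition's argument with $\epsilon$ replaced by a smaller multiple (e.g.\ $\epsilon\gamma^2$ or $\epsilon/4$). This only changes constants inside the logarithms in Eq.~(\ref{eqn:param_choice_exp_var}), and Eq.~(\ref{eqn:param_choice_exp_var}) fixes $\gamma^{-1}=\cO(\log^{1/2}(1/\epsilon))$, so $\gamma^2$ dominates $\epsilon$ for small $\epsilon$. If this is done, the right-hand side $2\gamma^2$ in Proposition~\ref{prop:gqpe-main-1} should really read $\gamma^2$ plus a discretization error small enough to cover the $3\epsilon$ cross term. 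Failing that, the fallback is to accept the bound $3\gamma^2+\cO(\epsilon)$, which still yields the $\cO(1)$ variance statement needed for Theorem~\ref{thm:GQPE}.
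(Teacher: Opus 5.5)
Your argument is the same as the paper's proof: you write $\mathbb{E}[Z]$ and $\mathbb{E}[Z^2]$ as traces of $\rho$ against $\sum_{|\omega_j|\le 2}\omega_j O_j^\dagger O_j$ and $\sum_{|\omega_j|\le 2}\omega_j^2 O_j^\dagger O_j$, apply \cref{prop:gqpe-main-1} with the triangle inequality, and absorb the squared-mean cross term via $\epsilon\ll\gamma^2$, which is exactly how the paper disposes of it (its one-line accounting is no tighter than your $3\gamma^2+3\epsilon$). Your proposed absorption is legitimate: in the proof of \cref{prop:gqpe-main-1} the bounds on $J_1$ and $J_2$ are twice those on $I_1$ and $I_2$, so $J_1+J_2\le\epsilon$ with the same parameters, leaving $|\bV[Z]-\bV_\rho[O]|\le 2\gamma^2+4\epsilon\le 3\gamma^2$ because $\gamma^{-1}=\cO(\log^{1/2}(1/\epsilon))$ forces $\gamma^2\gg\epsilon$.
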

\begin{proof}
Given a quantum state $\rho$, the expectation and variance of the random variable $Z$ are:
\begin{align*}
    \bE[Z] &= \sum_{-2 \le \omega_j \le 2}  \omega_j \Tr[O^\dagger_j O_j \rho] = \Tr\left[\left(\sum_{|\omega_j|\le2}  \omega_j O^\dagger_j O_j\right)\rho\right],\\
    \bV[Z] &= \bE[Z^2] - \bE[Z]^2,\quad \bE[Z^2] = \Tr\left[\left(\sum_{|\omega_j|\le2}  \omega^2_j O^\dagger_j O_j\right)\rho\right].
\end{align*}
Then, the proof follows from \cref{prop:gqpe-main-1} and the triangle inequality. Under the assumption that $\epsilon \ll \gamma^2$, the error in the squared expectation value can be bounded by an additional term of $\gamma^2$.
\end{proof}

\begin{lemma}\label{lem:Owg} Let $O$ be an observable with $\|O\|\leq 1$. For any $\gamma>0$, and $-2\le \omega \le 2$, define
\begin{equation}
    O_{ \omega} \coloneqq \frac{h^{3/2}}{C^{[N]}_h} \sum^{N-1}_{\ell=1} e^{2\pi i \xi_\ell (\omega - O)} \widehat{g}_{\gamma}(\xi_\ell),
\end{equation}
where $h$, $C^{[N]}_h$, and $\xi_\ell$ are the same as in~\cref{eqn:defn-Oj}.
Clearly, we have $O_{\omega_j} = O_j$ for $|\omega_j|\le 2$. Then, there is an absolute constant $c>0$ such that for any $\gamma \le \frac{1}{2\sqrt{2}\pi}$, we can choose $h \le c/\gamma$ and
\begin{equation}\label{eqn:res-lem-main-1}
    \left\|O_{\omega} - \sqrt{h}g_{\gamma}(\omega-O)\right\| \le \frac{12h^{1/2}}{\gamma^{1/2}}e^{-\frac{1}{16\gamma^2h^2}} + \frac{h^{3/2}}{2\gamma^{3/2}}e^{-\frac{2\pi^2\gamma^2}{h^2}}.
\end{equation}
\end{lemma}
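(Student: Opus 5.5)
We prove \cref{eqn:res-lem-main-1} in the eigenbasis of $O$. Since $O_\omega$ and $g_\gamma(\omega-O)$ are functions of $O$ only, the operator-norm bound reduces to a scalar bound uniform over eigenvalues $\lambda\in[-1,1]$, with $x:=\omega-\lambda\in[-3,3]$. We will compare
\[
F(x):=h\sum_{k=1-N/2}^{N/2-1}e^{2\pi i khx}\,\widehat g_\gamma(kh)
\]
with $g_\gamma(x)$, since $O_\omega$ has eigenvalue $\sqrt h\,F(x)/C^{[N]}_h$. The error then splits into three pieces, handled in the order below.

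\textbf{Full lattice sum via Poisson summation.} Set $F_\infty(x):=h\sum_{k\in\mathbb Z}e^{2\pi ikhx}\widehat g_\gamma(kh)$. Applying the Poisson summation formula to $\xi\mapsto e^{2\pi i\xi x}\widehat g_\gamma(\xi)$ on the lattice $h\mathbb Z$ gives
\[
F_\infty(x)=\sum_{m\in\mathbb Z} g_\gamma(x+m/h).
\]
The $m=0$ term is the target. For $|x|\le 3$ and $h\le c/\gamma$ with $c$ small, we have $|x+m/h|\ge |m|/(2h)$ for $m\neq0$. The aliases are therefore bounded by
\[
\gamma^{-1/2}\sum_{m\ne0}e^{-m^2/(16\gamma^2h^2)}\lesssim \gamma^{-1/2}e^{-1/(16\gamma^2h^2)},
\]
which after the $\sqrt h$ prefactor produces the first term of \cref{eqn:res-lem-main-1}. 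The constant $12$ absorbs the geometric tail and the $(2\pi)^{-1/4}$ factors.

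\textbf{Truncation of the Fourier sum.} The lattice terms dropped in passing from $F_\infty$ to $F$ have $|k|\ge N/2-1$, so $|kh|\gtrsim 1/(2h)$. Since $|\widehat g_\gamma(\xi)|\lesssim\gamma^{1/2}e^{-4\pi^2\gamma^2\xi^2}$, we bound this tail by a Gaussian tail sum or integral, estimated by its first term. The natural size is of order $e^{-\pi^2\gamma^2/h^2}$ or smaller. The statement instead contains $e^{-2\pi^2\gamma^2/h^2}$ together with $h^{3/2}\gamma^{-3/2}$, which suggests the bound arises from the normalization $C_h^{[N]}$ via $|\widehat g|^2$. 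We therefore expect the exponent bookkeeping to pin down which piece produces which term. The term $\ell=0$ omitted from the definition (the sum starts at $\ell=1$, i.e.\ $k=1-N/2$) is likewise of size $h\,\widehat g_\gamma(N h/2)$ and is absorbed here.

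\textbf{Normalization.} We write $C^{[N]}_h=\bigl(h\sum|\widehat g_\gamma(kh)|^2\bigr)^{1/2}$. Applying Poisson summation to $|\widehat g_\gamma|^2$, whose inverse transform is $g_\gamma*g_\gamma$ with $\int|\widehat g_\gamma|^2=\|g_\gamma\|_2^2=1$, gives
\[
\bigl|(C^{[N]}_h)^2-1\bigr|\lesssim e^{-\Omega(1/(\gamma^2h^2))}+\text{tail}\sim e^{-2\pi^2\gamma^2/h^2}\cdot \frac{h}{\gamma}.
\]
Dividing by $C$ then costs a multiplicative $1+O(\cdot)$ error on a quantity of size $\le\sqrt h\,\|g_\gamma\|_\infty\lesssim\sqrt{h/\gamma}$.

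The three estimates are combined with the triangle inequality. The main obstacle is matching the precise constants and exponents claimed, namely $12$, $1/16$ and $2\pi^2$. The condition $\gamma\le 1/(2\sqrt2\pi)$ is presumably what makes the alias and tail sums dominated by their first terms, for instance $e^{-4\pi^2\gamma^2\cdot}$ versus the linear-in-$k$ growth. I would first fix $c$ so that $3\le 1/(2h)$ and then verify each geometric series explicitly.
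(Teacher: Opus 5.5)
Your three-step plan is the paper's proof. The paper also bounds the error by $\sqrt h\,\bigl[\,|I^{[N]}_h-g_\gamma(\omega-E)|/C^{[N]}_h+|1/C^{[N]}_h-1|\,g_\gamma(\omega-E)\,\bigr]$. It controls the quadrature error by Poisson summation (aliases $g_\gamma(\omega-E-n/h)$) plus the tail $h\sum_{|k|\ge N/2}|\widehat g_\gamma(kh)|$. It controls $|(C^{[N]}_h)^2-1|$ by Poisson summation for $|\widehat g_\gamma|^2$, whose transform is $e^{-y^2/(8\gamma^2)}$.

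However, the step you defer to ``exponent bookkeeping'' cannot be completed, because your suspicion is correct. The truncation tail, after the $\sqrt h$ prefactor, is at least $2(2\sqrt{2\pi})^{1/2}h^{3/2}\gamma^{1/2}e^{-\pi^2\gamma^2/h^2}$ at $x=0$, and neither term of the claimed bound dominates it:
\begin{itemize}
\item Its exponent is smaller than $1/(16\gamma^2h^2)$, since $\gamma\le 1/(2\sqrt2\pi)$ gives $\pi^2\gamma^2<1/(16\gamma^2)$.
\item It exceeds $\frac{h^{3/2}}{2\gamma^{3/2}}e^{-2\pi^2\gamma^2/h^2}$ by a factor of order $\gamma^2e^{\pi^2\gamma^2/h^2}$.
\end{itemize}
This is not slack in the estimate. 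For $O=0$ and $\omega=0$, your Poisson identity gives exactly
\[
O_0-\sqrt h\,g_\gamma(0)=\sqrt h\Bigl[\frac{g_\gamma(0)+\sum_{m\neq0}g_\gamma(m/h)-h\sum_{|k|\ge N/2}\widehat g_\gamma(kh)}{C^{[N]}_h}-g_\gamma(0)\Bigr].
\]
Take $\gamma=0.1$ and $h=1/32$, so $N=1024$. The tail is of size about $e^{-101}$. The aliases are about $e^{-25600}$ and the normalization correction is about $e^{-202}$, so the tail dominates both. The right-hand side of the statement is only about $e^{-202}$. So the inequality as written is false, and it fails exactly in the regime $\gamma/h\gg1$ that the later parameter choices use.

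The paper's own proof derives the term $\frac{h\sqrt{2\sqrt{2\pi}}}{\pi^2\gamma^{3/2}}e^{-\pi^2\gamma^2/h^2}$ and then silently drops it in the final inequality. What your argument (and the paper's) actually proves, and what you should state, is
\[
\bigl\|O_\omega-\sqrt h\,g_\gamma(\omega-O)\bigr\|\le\frac{12h^{1/2}}{\gamma^{1/2}}e^{-\frac{1}{16\gamma^2h^2}}+\frac{h^{3/2}}{\gamma^{3/2}}e^{-\frac{\pi^2\gamma^2}{h^2}}.
\]
The paper's intermediate constants, $2\sqrt{2\sqrt{2\pi}}/\pi^2+2\sqrt2/((2\pi)^{1/4}\pi^{3/2})<1$, fit under this. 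The corrected bound is still $O(\epsilon)$ under $h^{-1}=\cO(\gamma^{-1}\log^{1/2}(\gamma^2/\epsilon))$ after adjusting constants, so the downstream results survive.

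Separately, the hypothesis $h\le c/\gamma$ gives neither of the two things you need. Your alias separation $|x+m/h|\ge|m|/(2h)$ requires $h\le 1/6$. The condition $|C^{[N]}_h-1|\le 1/2$ requires $h\lesssim\gamma$. Read the hypothesis as $h\le c\gamma$ with $c$ small; this implies both, since $\gamma\le 1/(2\sqrt2\pi)$.
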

\begin{proof}
To ease notation, in the proof we abbreviate $g_\gamma, \widehat{g}_{\gamma}$ as $g$ and $\widehat{g}$.
    Since $O$ is a Hermitian operator, it suffices to prove~\cref{eqn:res-lem-main-1} in the eigenbasis of $O$. In the following argument, we demonstrate that the lemma is true in any one-dimensional eigensubspace of $O$. 
    We first define
    \[I^{[N]}_h \coloneqq h \sum^{N/2-1}_{k= -N/2+1} e^{2\pi i kh (\omega - E)} \widehat{g}(kh),\]
    and by the triangle inequality, we have 
    \begin{equation}\label{eqn:res-lem-main-1-1d}
        \left|\frac{h^{3/2}}{C^{[N]}_h} \sum^{N-1}_{\ell=1} e^{2\pi i \xi_\ell (\omega - E)} \widehat{g}(\xi_\ell) - \sqrt{h}g(\omega-E)\right| \le \sqrt{h}\left[\frac{1}{\left|C^{[N]}_h\right|}\left|I^{[N]}_h - g(\omega-E)\right| + \left|\frac{1}{C^{[N]}_h}-1\right||g(\omega-E)|\right].
    \end{equation}
    
    \begin{enumerate}
        \item First, we estimate the quadrature error $|I^{[N]}_h - g(\omega-E)|$. 
        For simplicity, we denote
    \[\varphi(x) = e^{2\pi i x (\omega-E)}\widehat{g}(x).\]
    Note that $\varphi(\xi)$ is an analytic function on $\mathbb{R}$ and its Fourier transform is
    \begin{align}
        &\widehat{\varphi}(y) = \int^\infty_{-\infty}e^{-2\pi i x y}\varphi(x)~\d x = g(\omega-E- y),\\
       \text{in particular } &\widehat{\varphi}(0) = \int^\infty_{-\infty}\varphi(x)~\d x = g(\omega-E).
    \end{align}

    Then, the numerical sum can be simply written as $I^{[N]}_h = h \sum^{N/2-1}_{k = -N/2} \varphi(kh)$.
    By Lemma~\ref{lem:riemann-sum-error},
    \begin{align}\label{eqn:Oj-part-1}
        |I^{[N]}_h - g(\omega - E)| \le \sum^\infty_{n = -\infty,n\neq 0} \|\widehat{\varphi}(n/h)\| + h\sum_{|k| \ge N/2} \|\varphi(kh)\|
    \end{align}
    Since $|\omega-E| \le 3$, by choosing $h \le 1/6$,
    \[\left\|\widehat{\varphi}\left(\frac{n}{h}\right)\right\| = \left\|g\left(\omega - E - \frac{n}{h}\right)\right\| \le \left\|g\left(\frac{|n|-1/2}{h}\right)\right\| \le \frac{1}{\sqrt{\gamma \sqrt{2\pi}}}\exp\left(-\frac{(|n|-1/2)^2}{4h^2\gamma^2}\right).\]
    Plugging this upper bound into~\cref{eqn:Oj-part-1} and invoking Lemma~\ref{lem:gaussian-tail}, we have
    \begin{align*}
        |I^{[N]}_h - g(\omega-E)| &\le  \frac{2}{\sqrt{\gamma \sqrt{2\pi}}} \sum^\infty_{n = 1} \exp\left(-\frac{(|n|-1/2)^2}{4h^2\gamma^2}\right) + 2h\left(2\sqrt{2\pi} \gamma\right)^{1/2} \sum_{k\ge N/2} e^{-4\pi^2\gamma^2h^2k^2} \\ 
        &\le \frac{6}{\sqrt{\gamma\sqrt{2\pi}}}e^{-\frac{1}{16\gamma^2h^2}} + 2h\left(2\sqrt{2\pi} \gamma\right)^{1/2} \left(1+\frac{1}{4\pi^2\gamma^2}\right)e^{-\frac{\pi^2\gamma^2}{h^2}}\\
        &\le \frac{6}{\sqrt{\gamma\sqrt{2\pi}}}e^{-\frac{1}{16\gamma^2h^2}} + \frac{h\sqrt{2\sqrt{2\pi}}}{\pi^2\gamma^{3/2}}
        e^{-\frac{\pi^2\gamma^2}{h^2}}.
    \end{align*}
    Note that we use $4\pi^2\gamma^2\le 1$ in the last step, and the first part in the second inequality follows from
    \begin{align*}
        &\sum^\infty_{n = 1} \exp\left(-\frac{(|n|-1/2)^2}{4h^2\gamma^2}\right) \le e^{-\frac{1}{16h^2\gamma^2}} + \sum^\infty_{n = 1} e^{-\frac{n^2}{4h^2\gamma^2}} \le e^{-\frac{1}{16h^2\gamma^2}} + \left(1+2h^2\gamma^2\right)e^{-\frac{1}{4h^2\gamma^2}} \le 3 e^{-\frac{1}{16h^2\gamma^2}}.
    \end{align*}
    In the third step, we use Lemma~\ref{lem:gaussian-tail}; the last step follows from the fact that $2h^2\gamma^2 \le 1$.

    \item Next, we consider the normalization error $|1/C^{[N]}-1|$. 
    We denote a new integrand function $\phi(x) = |\widehat{g}(x)|^2 = 2\sqrt{2\pi}\gamma e^{-8\pi^2\gamma^2x^2}$. This is a Gaussian density function, and its Fourier transform is $\widehat{\phi}(y) = e^{-y^2/(8\gamma^2)}$. 
    Again, by~\cref{lem:riemann-sum-error}, we have
    \begin{align*}
        &\left|\left(C^{[N]}_h\right)^2 - 1\right| \le \sum^{\infty}_{n=-\infty,n\neq 0}  e^{-n^2/(8h^2\gamma^2)} + 2h \sum_{k \ge N/2} 2\sqrt{2\pi}\gamma e^{-8\pi^2\gamma^2k^2h^2}\\
        &\le 2\sum_{n\ge 1} e^{-n/(8h^2\gamma^2)} + 4\sqrt{2\pi}h\gamma \sum_{k \ge N/2} e^{-8\pi^2\gamma^2h^2k^2}\\
        &\le \frac{2}{e^{\frac{1}{8h^2\gamma^2}}-1} + 4\sqrt{2\pi}h\gamma \left(1+\frac{1}{8\pi^2\gamma^2}\right)e^{-2\pi^2\gamma^2/h^2} \le 3e^{-\frac{1}{8\gamma^2 h^2}} + \frac{h\sqrt{2}}{\pi^{3/2}\gamma}e^{-\frac{2\pi^2\gamma^2}{h^2}}.
    \end{align*}
    Note that $Nh^2=1$. We use Lemma~\ref{lem:gaussian-tail} in the second inequality, and we assume $\gamma \le \frac{1}{2\sqrt{2}\pi}$ in the last step.

    \item Note that $g(\omega-E) \le \frac{1}{\sqrt{\gamma\sqrt{2\pi}}}$. By the previous part, we can always choose $h\leq c/\gamma$ for an absolute constant $c>0$ such that $|C^{[N]}_h-1| \le 1/2$. Then, we have
    \begin{align*}
        &\left|\frac{h^{3/2}}{C^{[N]}_h} \sum^{N-1}_{\ell=0} e^{2\pi i \xi_\ell (\omega - E)} \widehat{g}(\xi_\ell) - \sqrt{h}g(\omega-E)\right| \le\\
        &\sqrt{h}\left[2\left(\frac{6}{\sqrt{\gamma\sqrt{2\pi}}}e^{-\frac{1}{16\gamma^2h^2}} + \frac{h\sqrt{2\sqrt{2\pi}}}{\pi^2\gamma^{3/2}}
        e^{-\frac{\pi^2\gamma^2}{h^2}}\right) + \frac{2}{\sqrt{\gamma\sqrt{2\pi}}} \left(3e^{-\frac{1}{8\gamma^2 h^2}} + \frac{h\sqrt{2}}{\pi^{3/2}\gamma}e^{-\frac{2\pi^2\gamma^2}{h^2}}\right)\right]\\
        &\le \frac{12h^{1/2}}{\gamma^{1/2}}e^{-\frac{1}{16\gamma^2h^2}} + \frac{h^{3/2}}{2\gamma^{3/2}}e^{-\frac{2\pi^2\gamma^2}{h^2}}.
    \end{align*}
    \end{enumerate}

    This proves the lemma for any $-2 \le \omega \le 2$.
\end{proof}

\begin{lemma}\label{lem:gaussian-truncation-error}
    Let $O$ be a quantum observable such that $\|O\| \le 1$. Then, given an $\epsilon > 0$, there exists parameter choices of $\gamma$ and $h$ as in~\cref{eqn:param_choice_exp_var} such that
    \begin{align}
        \left\| h\sum_{|\omega_j| \le 2}  \omega_j g_{\gamma}^2(\omega_j - O) -   O\right\| \le \epsilon,\quad \left\|h\sum_{|\omega_j| \le 2}  \omega^2_j g_{\gamma}^2(\omega_j - O) - \left(O^2+ \gamma^2 I\right)\right\| \le \gamma^2.
    \end{align}
\end{lemma}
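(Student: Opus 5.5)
Since $O$ is Hermitian, both operators in the statement are functions of $O$, so we can work in a single eigenspace with eigenvalue $E\in[-1,1]$. It then suffices to bound the scalar errors
\[
\Bigl|h\sum_{|\omega_j|\le 2}\omega_j g_\gamma^2(\omega_j-E)-E\Bigr| \quad\text{and}\quad \Bigl|h\sum_{|\omega_j|\le 2}\omega_j^2 g_\gamma^2(\omega_j-E)-(E^2+\gamma^2)\Bigr|
\]
uniformly in $E$. Note that $g_\gamma^2(\omega-E)=\frac{1}{\gamma\sqrt{2\pi}}e^{-(\omega-E)^2/(2\gamma^2)}$ is exactly the density of $\mathcal N(E,\gamma^2)$. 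Hence the continuous integrals are $\int\omega\, g_\gamma^2(\omega-E)\,d\omega=E$ and $\int\omega^2 g_\gamma^2(\omega-E)\,d\omega=E^2+\gamma^2$. The task is therefore to control two error sources: the truncation to $|\omega|\le 2$ and the Riemann-sum discretization on the grid $\omega_j=(j-N/2)h$.

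\textbf{Truncation.} Since $|E|\le 1$, the region $|\omega|>2$ lies at distance at least $1$ from the mean. The tails $\int_{|\omega|>2}|\omega|^k g_\gamma^2(\omega-E)\,d\omega$ for $k=1,2$ are therefore bounded by $\cO(\gamma^{-1})e^{-1/(2\gamma^2)}$, using Lemma~\ref{lem:gaussian-tail} or a direct Gaussian tail estimate. Choosing $\gamma^{-1}=\cO(\log^{1/2}(1/\epsilon))$ makes this at most $\epsilon/2$.

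\textbf{Discretization.} Because $N=h^{-2}$, the grid covers $[-N h/2, N h/2]=[-1/(2h),1/(2h)]\supseteq[-2,2]$ once $h\le 1/4$. So the truncated sum equals $h\sum_{k\in\mathbb Z}\phi(kh)$ minus the terms with $|kh|>2$, where $\phi(\omega)=\omega^k g_\gamma^2(\omega-E)$. Those discarded terms are again Gaussian-small, handled exactly like the integral tail (sum bounded by the integral plus a boundary term). For the full lattice sum, apply the Poisson summation estimate Lemma~\ref{lem:riemann-sum-error}:
\[
\Bigl|h\sum_k\phi(kh)-\int\phi\Bigr|\le\sum_{n\ne0}|\widehat\phi(n/h)|.
\]
Here $\widehat\phi$ is the Fourier transform of a Gaussian times a polynomial of degree at most $2$, explicitly $|\widehat\phi(y)|\le \cO(1+|y|+\gamma^2 y^2+\ldots)\,e^{-2\pi^2\gamma^2y^2}$ (up to the phase from the shift $E$). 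Summing over $n\ne0$ gives $\cO\bigl(\mathrm{poly}(\gamma/h)\,e^{-2\pi^2\gamma^2/h^2}\bigr)$, which is at most $\epsilon/2$ when $h^{-1}=\cO(\gamma^{-1}\log^{1/2}(\gamma^2/\epsilon))$.

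\textbf{Conclusion and main obstacle.} Combining the truncation and discretization estimates gives the first bound with error $\epsilon$. For the second bound the same argument gives error $\le\epsilon$, and $\epsilon\le\gamma^2$ holds under the parameter choice (for small $\epsilon$, $\gamma^2\sim 1/\log(1/\epsilon)\gg\epsilon$), yielding the stated $\gamma^2$ tolerance. The main obstacle is bookkeeping the polynomial prefactors in $\widehat\phi(n/h)$ when $\gamma/h$ is large, and checking that Lemma~\ref{lem:riemann-sum-error} applies to the (unnormalized, polynomially weighted) integrand with the grid offset. I would first write $\widehat\phi$ explicitly as derivatives of $e^{-2\pi^2\gamma^2y^2}e^{-2\pi i yE}$ and bound the sum crudely by its first term times a geometric factor.
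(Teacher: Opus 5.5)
Your proposal is correct and follows essentially the same route as the paper. You reduce to an eigenvalue $E\in[-1,1]$ and split the error into two parts: the discarded lattice terms $h\sum_{|kh|>2}|\phi(kh)|$, which are Gaussian-small because $|E|\le 1$, and the Poisson-summation error $\sum_{n\neq 0}|\widehat{\phi}(n/h)|$ of the full lattice sum, computed from the explicit Fourier transforms of the Gaussian density times $x$ or $x^2$.

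The one difference is in the second bound: you control the second-moment Poisson error exponentially (giving $\le\epsilon\le\gamma^2$), whereas the paper uses the cruder estimate $6\sqrt{2}h/(\pi\gamma)$, which needs $h\lesssim\gamma^3$ to fall below $\gamma^2/2$, so your version fits the stated scaling $h^{-1}=\cO(\gamma^{-1}\log^{1/2}(\gamma^2/\epsilon))$ more cleanly.
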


\begin{proof} 
To ease notation, in the proof, we abbreviate $g_{\gamma}, \widehat{g}_{\gamma}$
as $g$ and $\widehat{g}$.
Since $O$ is Hermitian, it suffices to prove the lemma in the eigenbasis of $O$. By $\|O\| \le 1$, it reduces to proving the following inequalities for any $-1 \le E \le 1$:
\begin{equation}
    \left| h\sum_{|\omega_j| \le 2} \omega_j g^2(\omega_j - E) -  E\right| \le \epsilon,\quad \left| h\sum_{|\omega_j| \le 2} \omega^2_j g^2(\omega_j - E) - \left( E^2+\gamma^2 \right)\right| \le \gamma^2.
\end{equation}
\begin{enumerate}
    \item To prove the first part, we define the integrand function $\varphi(x) \coloneqq x g^2(x-E) = \frac{x}{\gamma \sqrt{2\pi}}e^{-(x-E)^2/2\gamma^2}$, and its Fourier transform reads $\widehat{\varphi}(y) = (E-2\pi i \gamma^2 y)e^{-2\pi i E y}e^{-2\pi^2\gamma^2 y^2}$.
    We define:
    \begin{align}
        I_h = h\sum^{\infty}_{k=-\infty} \varphi(kh),
    \end{align}
    By the triangle inequality:
    \[|h\sum_{|\omega_j| \le 2} \omega_j g^2(\omega_j - E) - E| \le |h\sum_{|\omega_j| \le 2} \omega_j g^2(\omega_j - E) - I_h| + |I_h - E|.\]
    The first term can be bounded using the fact that $|E|\le 1$,
    \begin{align}
        &|h\sum_{|\omega_j| \le 2} \omega_j g^2(\omega_j - E) - I_h| \le h \sum_{|jh| > 2} |\varphi(jh)|\\
        &\le \frac{2h}{\gamma \sqrt{2\pi}} \sum_{j\ge 1} (2+jh) e^{-(1+jh)^2/2\gamma^2}
        \le \frac{2}{\sqrt{2\pi}}\left(\gamma+\sqrt{\frac{\pi}{2}}\right)e^{-\frac{1}{2\gamma^2}}.
    \end{align}
    The second term can be bounded using the Poisson summation formula, 
    \begin{align}
        |I_h - E| &= \left| \sum^\infty_{n=-\infty,n\neq 0} \widehat{\varphi}(n/h)\right| \le \frac{8\pi \gamma^2}{h}\sum^\infty_{n=1} ne^{-2\pi^2\gamma^2n^2/h^2}\\
        &\le \frac{8\pi \gamma^2}{h} \frac{e^{-2\pi^2\gamma^2/h^2}}{(1-e^{-2\pi^2\gamma^2/h^2})^2} \le \frac{8\pi\gamma^2}{h (e^{2\pi^2\gamma^2/h^2}-2)},
    \end{align}
    where the second step assumes that $h \le 2\pi \gamma^2$ (i.e., $2\pi \gamma^2/h \ge 1 \ge |E|$). 
    Note that $h \ge e^{-\pi^2\gamma^2/h^2}$ for any $h \le \min(1,\sqrt{2}\pi\gamma)$. We further require $h \le \sqrt{2}\pi \gamma/\sqrt{\ln(4)}$ (i.e., $e^{2\pi^2\gamma^2/h^2}-2\le \frac{e^{2\pi^2\gamma^2/h^2}}{2}$), and it follows that:
    \[\frac{8\pi\gamma^2}{h (e^{2\pi^2\gamma^2/h^2}-2)} \le \frac{16\pi\gamma^2}{h e^{2\pi^2\gamma^2/h^2}} \le \frac{16\pi\gamma^2}{e^{\pi^2\gamma^2/h^2}},\]
    which is bounded by $\epsilon/2$ if we choose $1/h \ge \frac{1}{\pi \gamma}\log^{1/2}(32\pi\gamma^2/\epsilon)$.
    Combining the above error bounds, it turns out that if we choose
    \[\frac{1}{\gamma} = \sqrt{2}\cdot\max\left(\pi,\log^{1/2}\left(\frac{4}{\epsilon}\right)\right), \quad \frac{1}{h} = \max\left(\pi, \frac{1}{\pi \gamma}\log^{1/2}\left(\frac{32\pi\gamma^2}{\epsilon}\right)\right),\] 
    we have $\left|h\sum_{|\omega_j| \le 2} \omega_j g^2(\omega_j - E) - E\right| \le \frac{\epsilon}{2}+\frac{\epsilon}{2} = \epsilon$.

    \item To prove the second part, we first define the integrand function $\phi(x) \coloneqq x^2 g^2(x-E) = \frac{x^2}{\gamma\sqrt{2\pi}}e^{-(x-E)^2/2\gamma^2}$. The Fourier transform of $\phi(x)$ is
    \[\widehat{\phi}(y) = \left(\gamma^2 + E^2 - 4\pi i E\gamma^2 y - 4\pi^2\gamma^4 y^2\right) e^{-2\pi i E y}e^{-2\pi^2\gamma^2y^2}.\]
    Note that $\int^\infty_{-\infty}\phi(x)~\d x = \widehat{\phi}(0) = E^2+\gamma^2$.
    We define $J_h = h\sum^\infty_{k=-\infty} \phi(kh)$, by the triangle inequality, 
    \[|h\sum_{|\omega_j|\le 2} \omega^2_j g^2(\omega_j-E) - (E^2+\gamma^2)| \le |h\sum_{|\omega_j|\le 2} \omega^2_j g^2(\omega_j-E) - J_h| + |J_h - (E^2+\gamma^2)|.\]
    The first term can be bounded using $|E|\le 1$,
    \begin{align}
        &|h\sum_{|\omega_j| \le 2} \omega^2_j g^2(\omega_j - E) - J_h| \le h \sum_{|jh| > 2} |\phi(jh)|\\
        &\le \frac{2h}{\gamma \sqrt{2\pi}} \sum_{j\ge 1} (2+jh)^2 e^{-(1+jh)^2/2\gamma^2}
        \le \frac{64\gamma^3}{e\sqrt{2\pi}(1+h)}e^{-\frac{(h+1)^2}{4\gamma^2}}.
    \end{align}
    Again, we invoke the Poisson summation formula to bound the second term:
    \begin{align}
        |J_h - (E^2+\gamma^2)| = \sum^\infty_{n=-\infty,n\neq 0} \widehat{\phi}(n/h) \le \frac{4\pi^2 \gamma^2}{h^2}\sum^\infty_{n=1}n^2 e^{-2\pi^2\gamma^2 n^2/h^2} \le \frac{6\sqrt{2}h}{\pi \gamma}.
    \end{align}
    where the second step assumes that $h \le \sqrt{2}\pi \gamma$ and the second to last step uses the inequality $\sum_{n\ge 1}n^2e^{-a^2n^2} \le 6a^{-3}$ (for $a \ge 1$).
    By choosing
    \[\gamma^{-1} = \cO\left(\log^{1/2}\left(4\epsilon^{-1} \right)\right),\quad h^{-1} = \cO\left( \gamma^{-1}\log^{1/2}\left(\gamma^2{\epsilon}^{-1}\right)\right), \] 
    we have
    \[\left|h\sum_{|\omega_j|\le 2} \omega^2_j g^2(\omega_j-E) - (E^2+\gamma^2)\right| \le \frac{64\gamma^3}{e\sqrt{2\pi}(1+h)}e^{-\frac{(h+1)^2}{4\gamma^2}}+ \frac{6\sqrt{2}h}{\pi \gamma} \le \frac{\gamma^2}{2}+\frac{\gamma^2}{2}\le \gamma^2.\]
\end{enumerate}
\end{proof}

Now, we are ready to prove~\cref{prop:gqpe-main-1}.

\begin{proof}[of~\cref{prop:gqpe-main-1}] 
To ease notation, in the proof we abbreviate $g_\gamma, \widehat{g}_{\gamma}$ as $g$ and $\widehat{g}$.
    \begin{enumerate}
        \item By the triangle inequality, we have
    \begin{equation}
        \|\sum_{|\omega_j|\le 2} \omega_j O^\dagger_j O_j - O\| \le I_1 + I_2 + I_3, \label{eq:157}
    \end{equation}
    where 
    \begin{align*}
        I_1 &= \sum_{|\omega_j|\le 2} |\omega_j|\|O_j-\sqrt{h}g(\omega_j-O)\|^2 \le \frac{1152}{\gamma} e^{-\frac{1}{8\gamma^2h^2}} + \frac{2h^2}{\gamma^{3}}e^{-\frac{4\pi^2\gamma^2}{h^2}},\\
        I_2 &= 2\sqrt{h} \sum_{|\omega_j|\le 2} |\omega_j| \|O_j - \sqrt{h}g(\omega_j-O)\| \|g(\omega_j - O)\| \le \frac{1}{(2\pi)^{1/4}}\left(\frac{96}{\gamma}e^{-\frac{1}{16\gamma^2h^2}} + \frac{4h}{\gamma^2}e^{-\frac{2\pi^2\gamma^2}{h^2}}\right),\\
        I_3 &= \left\|h\sum_{|\omega_j|\le 2} \omega_j g^2(\omega_j-O) - O\right\|.
    \end{align*}
    We require $\gamma \le \frac{1}{2\sqrt{2}\pi} < 1$, so 
    \[I_1 + I_2 \le \frac{1213}{\gamma}e^{-\frac{1}{16\gamma^2h^2}} + \frac{5h}{\gamma^3}e^{-\frac{4\pi^2\gamma^2}{h^2}}.\]
    Without loss of generality, we assume $h \le 1$. So it suffices to choose 
    \begin{align}
        &\gamma^{-1} = \cO(\log^{1/2} \epsilon^{-1}),\\
        & h^{-1} = \max\left(4\gamma \log^{1/2}\left(\frac{4852}{\gamma\epsilon}\right), \frac{1}{2\pi\gamma}\log^{1/2}\left(\frac{20}{\gamma^2\epsilon}\right)\right) = \cO(\gamma^{-1}\log^{1/2}(\gamma^{-1},\epsilon^{-1})),
    \end{align}
    to ensure that $I_1+I_2 \le \epsilon/2$. 
    By~Lemma~\ref{lem:gaussian-truncation-error}, we can make $I_3 \le\epsilon/2$ in the same parameter regime. 
    This proves the first part of the theorem. 

    \item Similar to part 1, by the triangle inequality, we have
    \begin{equation}
        \|\sum_{|\omega_j|\le 2} \omega^2_j O^\dagger_j O_j - (O^2+\gamma^2I)\| \le J_1 + J_2 + J_3,\label{eq:158}
    \end{equation}
    where
    \begin{align*}
        J_1 &= \sum_{|\omega_j|\le 2} \omega^2_j\|O_j-\sqrt{h}g(\omega_j-O)\|^2 \le \frac{2304}{\gamma} e^{-\frac{1}{8\gamma^2h^2}} + \frac{4h^2}{\gamma^{3}}e^{-\frac{4\pi^2\gamma^2}{h^2}},\\
        J_2 &= 2\sqrt{h} \sum_{|\omega_j|\le 2} \omega^2_j \|O_j - \sqrt{h}g(\omega_j-O)\| \|g(\omega_j - O)\| \le \frac{1}{(2\pi)^{1/4}}\left(\frac{192}{\gamma}e^{-\frac{1}{16\gamma^2h^2}} + \frac{8h}{\gamma^2}e^{-\frac{2\pi^2\gamma^2}{h^2}}\right),\\
        J_3 &= \left\|h\sum_{|\omega_j|\le 2} \omega^2_j g^2(\omega_j-O) - (O^2+\gamma^2 I)\right\|.
    \end{align*}
    Similarly, we can make $J_1+J_2 \le \gamma^2$ by the specified parameters.
    choosing $h^{-1} = \gamma^{-1}\cdot\polylog(\gamma^{-1})$. 
    By~Lemma~\ref{lem:gaussian-truncation-error}, we can make $J_3 \le \gamma^2$ by the choice of parameters.
    Combining these two parts together, we prove that
    \[\left\|\sum_{|\omega_j|\le 2} \omega^2_j O^\dagger_j O_j - (O^2+\gamma^2I)\right\| \le 2\gamma^2,\]
    which immediately implies~\cref{eqn:gqpe-main-1-eq-2}. This concludes the proof of part 2 of the theorem.
    \end{enumerate}
\end{proof}

\subsubsection{Covariance}

\begin{proposition}[Covariance]\label{prop:gqpe-main-2}Let $O$ be an observable with $\|O\|\leq 1$.
   For any quantum state $\rho$,  let $Z$ be the measurement outcome from $\mathcal{M}$ on $\rho$. Then, measuring the output state again using another $\mathcal{M}$, we obtain an outcome $Z'$. We have that,
    \begin{align}
        |\bCov(Z,Z')| \leq \bV[Z]. 
    \end{align}
\end{proposition}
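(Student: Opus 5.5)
The plan is to use the paper's definition of $\bCov_{\rho}(\cM)$ from \cref{sec:prelim}, namely
\[
\sum_{x,y}(z_x-v)(z_y-v)\Tr(O_yO_x\rho O_x^\dagger O_y^\dagger),
\]
where $v=\bE[Z]$ and $z_j$ is the readout value assigned to outcome $j$ in \cref{eqn:raw_readout}: $z_j=\omega_j$ if $|\omega_j|\le 2$ and $z_j=0$ otherwise. The whole argument rests on the observation that every Kraus operator $O_j$ in \cref{eqn:defn-Oj} is a function of the single Hermitian operator $O$. Once that is established, everything is simultaneously diagonal in the eigenbasis of $O$, and the inequality reduces to Jensen's inequality for each eigenvalue.

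\textbf{Step 1 (structure of the Kraus operators).} By \cref{eqn:defn-Oj}, $O_j=\frac{h^{3/2}}{C^{[N]}_h}\sum_{k=1-N/2}^{N/2-1} e^{2\pi i kh(\omega_j-O)}\widehat g_\gamma(kh)$. I will show this operator is Hermitian. The coefficient $\widehat g_\gamma$ is real and even, and the index range $k\in\{1-N/2,\dots,N/2-1\}$ is symmetric. Pairing the terms $k$ and $-k$ therefore gives $O_j=\frac{h^{3/2}}{C^{[N]}_h}\sum_k \widehat g_\gamma(kh)\cos\!\big(2\pi kh(\omega_j-O)\big)$. So $O_j=f_j(O)$ for a real function $f_j$. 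It follows that the family $\{O_j\}_j$ is Hermitian and pairwise commuting, and that all of its members commute with $O$. Completeness, $\sum_j O_j^\dagger O_j=I$, holds because \cref{eq:MOj} defines a channel. In the eigenbasis $\{\ket{E}\}$ of $O$, this means the weights $p_j(E):=f_j(E)^2$ form a probability distribution over $j$ for every fixed $E$.

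\textbf{Step 2 (rewrite both quantities as expectations of commuting operators).} Using cyclicity of the trace and commutativity, $\Tr(O_yO_x\rho O_xO_y)=\Tr(\rho\,O_x^2O_y^2)$. Summing over $x$ and $y$ gives
\[
\bCov(Z,Z')=\Tr(\rho B^2),\qquad B:=\sum_x (z_x-v)O_x^2 .
\]
Similarly, $\bV[Z]=\Tr(\rho C)$ with $C:=\sum_x(z_x-v)^2O_x^2$. In the eigenbasis of $O$, the operators $B$ and $C$ are diagonal with entries $b(E)=\sum_x p_x(E)(z_x-v)$ and $c(E)=\sum_x p_x(E)(z_x-v)^2$.

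\textbf{Step 3 (Jensen and positivity).} Since $p_\cdot(E)$ is a probability vector, Jensen's (or Cauchy--Schwarz's) inequality gives $0\le b(E)^2\le c(E)$ for each $E$. Hence $0\preceq B^2\preceq C$ as commuting diagonal operators. Because $\rho\succeq0$, this yields $0\le\Tr(\rho B^2)\le\Tr(\rho C)$, which is $|\bCov(Z,Z')|\le\bV[Z]$. In fact the covariance is even shown to be nonnegative. No parameter conditions from \cref{prop:gqpe-main-1} are needed.

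The only step requiring care is Step 1, specifically the Hermiticity of $O_j$. Without it, $O_x\rho O_x^\dagger$ would not reduce to a function of $O$ sandwiched around $\rho$, and the trace would not collapse to $\Tr(\rho B^2)$. This is why I would check explicitly that the summation range in \cref{eqn:defn-Oj} is symmetric (the reindexing $\ell\mapsto k=\ell-N/2$ with $\ell$ from $1$) and that $\widehat g_\gamma$ is even.
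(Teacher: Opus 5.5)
Your proposal is correct and follows essentially the same route as the paper. The paper likewise uses that the $O_j$ are Hermitian and pairwise commuting (\cref{lem:Oj}, proved by the same reindexing $k\mapsto -k$ with $\widehat g_\gamma$ even). It then rewrites the covariance as $\Tr[\rho B^2]$ with $B=\sum_j(\tilde\omega_j-\bE[Z])O_j^2$ and applies the same pointwise Cauchy--Schwarz bound in a common eigenbasis to obtain $0\preceq B^2\preceq\sum_j(\tilde\omega_j-\bE[Z])^2O_j^2$, reading off nonnegativity of the covariance from the squared form.
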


\begin{proof}[of~\cref{prop:gqpe-main-2}]
     By definition, 
    \begin{align*}
        \bE[ZZ'] = \sum_{j,k} \tilde{\omega}_j \tilde{\omega}_k \mathbb{P}[Z=\omega_j,Z'=\omega_k],
    \end{align*}
    where $\tilde{\omega}_j = \omega_j$ if $|\omega_j| \le 2$; otherwise $\tilde{\omega}_j=0$.
    By~\cref{lem:Oj}, all $O_j$ commute with each other and $O_j=O_j^\dagger$. Suppose $\ket{\psi}$ is a common eigenvector of  $O_j^2$, w.r.t. eigenvalue $p_j$. Note that $\sum_j p_j=1$ since $\sum_j O_j^2 =I$. Using Cauchy inequality we have that
\begin{align}
    \left(\sum_j (\tilde{\omega}_j-\bE[Z]) p_j\right)^2\leq  \sum_j (\tilde{\omega}_j-\bE[Z])^2 p_j \quad
\Rightarrow \quad
       \left( \sum_j (\tilde{\omega}_j-\bE[Z]) O_j^2 \right)^2 \leq  \sum_j (\tilde{\omega}_j-\bE[Z])^2 O_j^2. \label{eq:1700}
    \end{align}
  Since all $O_j$ commute with each other and $O_j=O_j^\dagger$, using $\Tr(A\rho B)= \Tr(BA \rho)$, one can check that
    \begin{align}
        \bCov(ZZ') &=    \Tr\left[\left(\sum_j \left(\tilde{\omega}_j -\bE[Z]\right) O^2_j \right) \left(\sum_k \left(\tilde{\omega}_k -\bE[Z]\right) O_k^2\right)\rho\right] \\
        &=    \Tr\left[\left(\sum_j \left(\tilde{\omega}_j -\bE[Z]\right) O^2_j \right)^2 \rho\right] \label{eq:1711} \\
         &\leq     \Tr\left[\left(\sum_j \left(\tilde{\omega}_j -\bE[Z]\right)^2  O_j^2 \right)\rho\right] \\
         &= \bV[Z].
    \end{align}
    where in the inequality we use Eq.~(\ref{eq:1700}).
    Finally, notice that from Eq.~(\ref{eq:1711}) we have that $\bCov(ZZ')\geq 0$. Thus from above equations we have proved $|\bCov(ZZ')|\leq \bV[Z]$.
\end{proof}

\begin{lemma}\label{lem:Oj}
$\{O_j\}_j$ defined as in Eq.~(\ref{eq:MOj}) commute with each other since each $O_j$ is a function of the same Hermitian operator $O$. Besides, $O_j$ is Hermitian.
\end{lemma}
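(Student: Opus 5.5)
The plan is to read both claims off the explicit formula in Eq.~(\ref{eqn:defn-Oj}). There, each Kraus operator $O_j$ is a scalar-coefficient linear combination of the unitaries $e^{-2\pi i \xi_\ell \tilde O}$, with
\[
O_j = \frac{h^{3/2}}{C^{[N]}_h}\sum_{\ell} e^{2\pi i \xi_\ell \omega_j}\,\widehat g_\gamma(\xi_\ell)\, e^{-2\pi i \xi_\ell \tilde O}.
\]
So I will write $O_j = f_j(\tilde O)$ for the scalar function
\[
f_j(x) := \frac{h^{3/2}}{C^{[N]}_h}\sum_{k=1-N/2}^{N/2-1} e^{2\pi i k h(\omega_j - x)}\,\widehat g_\gamma(kh).
\]
Since $\tilde O$ is Hermitian, I diagonalize it as $\tilde O = \sum_E E\,P_E$. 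Then every $O_j = \sum_E f_j(E) P_E$ is diagonal in this common spectral decomposition. Commutativity follows at once, because $O_jO_k = \sum_E f_j(E)f_k(E)P_E = O_kO_j$.

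For Hermiticity it suffices to show that $f_j(E)$ is real for every real $E$. I will pair the terms $k$ and $-k$ in the sum. Three facts make this work: $\widehat g_\gamma(\xi) = (2\sqrt{2\pi}\gamma)^{1/2}e^{-4\pi^2\gamma^2\xi^2}$ is real and even; $C^{[N]}_h$ and $h$ are real; and $\omega_j - E$ is real. For $1\le k\le N/2-1$ the pair therefore contributes
\[
\widehat g_\gamma(kh)\bigl(e^{2\pi i kh(\omega_j-E)} + e^{-2\pi i kh(\omega_j-E)}\bigr) = 2\,\widehat g_\gamma(kh)\cos\bigl(2\pi kh(\omega_j-E)\bigr)\in\bR,
\]
and the $k=0$ term is real.

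The only step that needs care is the index range. The sum runs over $k\in\{1-N/2,\dots,N/2-1\}$, which is symmetric about $0$ because $\ell$ starts at $1$ (the $\ell=0$ term, i.e.\ $k=-N/2$, is omitted). This matches the sum $\sum_{\ell=1}^{N-1}$ in Algorithm~\ref{alg:GQPE} and Eq.~(\ref{eqn:defn-Oj}), so the pairing is exact and $f_j(E)\in\bR$. Hence $O_j^\dagger = \sum_E \overline{f_j(E)}P_E = O_j$. Had the $\ell=0$ term been included, one unpaired term would appear and Hermiticity would hold only approximately; the chosen indexing avoids this.
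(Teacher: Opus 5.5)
Your proposal is correct and follows essentially the paper's argument. The paper computes $O_j^\dagger$ directly and relabels $k\to -k$, using the symmetric index range $\{1-N/2,\dots,N/2-1\}$ (which comes from $\ell$ starting at $1$) together with the reality and evenness of $\widehat g_\gamma$. That is your cosine-pairing argument carried out at the operator level rather than on the eigenvalues, and commutativity is, as in the paper, immediate from each $O_j$ being a function of $\tilde O$.
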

\begin{proof}
$O_j$ is Hermitian since $\widehat{g}$ is real and symmetric. More specifically,
\begin{align*}
 \bO_j^\dagger = \frac{h^{3/2}}{C^{[N]}_h} \sum^{N/2-1}_{k=1-N/2} e^{-2\pi i kh (\omega_j - \tilde{O})} \,\widehat{g}(kh) =\frac{h^{3/2}}{C^{[N]}_h}
\sum^{N/2-1}_{k=1-N/2} e^{2\pi i kh (\omega_j - \tilde{O})} \,\widehat{g}(- kh)
 = O_j,
\end{align*}
  
Here, the second equality follows by relabeling $-k$ as $k$, and the last equality uses the symmetry $\widehat{g}(x)=\widehat{g}(-x)$.
\end{proof}

\subsubsection{Commutativity}

In this section, we prove the commutativity property of GQPE in Theorem \ref{thm:GQPE}, where we assume $\|O\|\leq \kappa=1$.
To ease notation, define 
\begin{align}
	\delta(\gamma,h) = \frac{12h^{1/2}}{\gamma^{1/2}}e^{-\frac{1}{16\gamma^2h^2}} + \frac{h^{3/2}}{2\gamma^{3/2}}e^{-\frac{2\pi^2\gamma^2}{h^2}}.
\end{align}
Recall that the parameters $\gamma$ and $h$ are chosen  such that  terms below scale as $\epsilon$,

$$\left\{e^{-\mathcal{O}(\frac{1}{ \gamma^2})},  e^{- \mathcal{O}(\frac{1}{\gamma^2 h^2})} \text{ and } e^{-\mathcal{O}(\frac{\gamma^2}{h^2})}\right\} \times \poly\left(\gamma,h, h^{-1}, \gamma^{-1}\right).$$

\begin{proposition}[Commutativity property]\label{prop:com} Suppose $\|O\|\leq 1.$  For the parameter choices of $\gamma$ and $h$ given in~\cref{eqn:param_choice_exp_var},  the quantum channel $\cM$ implemented by Algorithm \ref{alg:GQPE}
approximately fixes $\rho$ when $O$ and $\rho$ approximately commute, in the sense that
    \begin{align}
        \left\| \cM[\rho] - \rho\right\|_1 \leq \epsilon^{-1} \polylog(\epsilon^{-1})\, \left\| \left[ O,\rho \right] \right\|_1 + 5\epsilon.
    \end{align}
\end{proposition}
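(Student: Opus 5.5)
We write $\cM[\rho]=\sum_j O_j\rho O_j$, using that the $O_j$ are Hermitian functions of $O$ (\cref{lem:Oj}). Since $\cM$ is trace preserving, $\sum_j O_j^2 = I$, so $\rho=\sum_j O_j^2\rho$. This gives
\begin{align*}
\cM[\rho]-\rho=\sum_j O_j\rho O_j - \sum_j O_j O_j\rho = -\sum_j O_j[O_j,\rho].
\end{align*}
The plan is to reduce $[O_j,\rho]$ to $[O,\rho]$ through a Lipschitz-type bound for the scalar function defining $O_j$, and then sum over $j$.

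\emph{Step 1 (replace $O_j$ by the ideal filter).} By \cref{lem:Owg}, $O_j=\sqrt h\,g_\gamma(\omega_j-O)+R_j$ with $\|R_j\|\le\delta(\gamma,h)$ for $|\omega_j|\le 2$. For $|\omega_j|>2$ we use the same decomposition in the eigenbasis, where both terms are exponentially small in $1/\gamma^2$; alternatively we bound those tail terms directly through the mass $\sum_{|\omega_j|>2}O_j^2$, which is $O(\epsilon)$. Substituting and using the triangle inequality together with $\|\rho\|_1=1$, every term involving $R_j$ is bounded by $O(N\delta+\sqrt N\,\delta\,\sup\|\sqrt h g\|)$. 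Since $N=h^{-2}=\polylog(\epsilon^{-1})$ and $\delta$ is exponentially small in the chosen parameters, these contributions are at most a few multiples of $\epsilon$. They account for the additive $5\epsilon$.

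\emph{Step 2 (commutator of the ideal filter).} We need to bound $\bigl\|\sum_j h\, g(\omega_j-O)\,[g(\omega_j-O),\rho]\bigr\|_1$. We write $g(\omega_j-O)=\int \widehat g(\xi)e^{2\pi i\xi(\omega_j-O)}d\xi$, or the discrete sum from \eqref{eqn:defn-Oj}, which is more convenient. Then
\[
[e^{-2\pi i\xi O},\rho]=\int_0^1 \frac{d}{ds}\bigl(e^{-2\pi i s\xi O}\rho e^{2\pi i s\xi O}\bigr)\,ds\; e^{-2\pi i \xi O},
\]
which gives $\|[e^{-2\pi i\xi O},\rho]\|_1\le 2\pi|\xi|\,\|[O,\rho]\|_1$. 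It follows that
\[
\|[g(\omega_j-O),\rho]\|_1\le 2\pi\|[O,\rho]\|_1\int|\xi|\,|\widehat g(\xi)|\,d\xi = O(\gamma^{-1/2})\,\|[O,\rho]\|_1,
\]
since $\widehat g_\gamma$ has width $\sim 1/\gamma$ and prefactor $\sqrt\gamma$. Multiplying by $h\|g\|\le h\gamma^{-1/2}$ and summing over the $N$ values of $j$ gives $Nh\gamma^{-1}\|[O,\rho]\|_1=h^{-1}\gamma^{-1}\|[O,\rho]\|_1$. With $\gamma^{-1}=O(\log^{1/2}\epsilon^{-1})$ and $h^{-1}=\gamma^{-1}\polylog$, this prefactor is $\polylog(\epsilon^{-1})$, which is well within the claimed $\epsilon^{-1}\polylog(\epsilon^{-1})$.

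\emph{Main obstacle.} The crude summation over all $N$ outcomes is the step to watch. It is fine here because $N$ is only polylogarithmic, but the $|\omega_j|>2$ outcomes and the normalization $C_h^{[N]}$ must be handled consistently so that no factor of $\delta$ is multiplied by a quantity larger than $\poly(h^{-1},\gamma^{-1})$.

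If a sharper route is needed, we apply Cauchy--Schwarz in the form $\sum_j\|O_jX_j\|_1\le(\sum_j \Tr O_j^2\cdot)\ldots$. If the Lipschitz estimate turns out to lose more than expected, for instance through a poor bound on $\int|\xi\widehat g|$ for the discretized sum, the available slack is the $\epsilon^{-1}$ factor in the statement. That slack can absorb a trade-off of the form $\|[O,\rho]\|_1\cdot(\text{Lipschitz const})$ versus $\epsilon$ obtained by truncating $\xi$ to $|\xi|\lesssim \epsilon^{-1}$.
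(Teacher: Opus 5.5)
Your argument is correct, but it follows a genuinely different route from the paper's, and a stronger one.

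The paper's proof goes in three steps:
\begin{itemize}
\item it uses the leakage lemma to pass from $\sum_j O_j^2=I$ to $\sum_{|\omega_j|\le 2}O_j^2\approx I$;
\item it replaces $O_j$ by $\sqrt h\,g_\gamma(\omega_j-O)$ via \cref{lem:Owg};
\item it applies the Taylor-series estimate $\|[e^{A},\rho]\|_1\le e^{\|A\|}\|[A,\rho]\|_1$ of \cref{lem:exp_commutator} with $A_j=-(\omega_j-O)^2/(4\gamma^2)$.
\end{itemize}
Because $\|A_j\|$ can be as large as $9/(4\gamma^2)$, the last step loses a factor $e^{\Theta(\gamma^{-2})}$. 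This is a power of $\epsilon^{-1}$, and it is the only source of the $\epsilon^{-1}$ in the statement. With the paper's own choice $\gamma^{-2}\ge 2\log(4/\epsilon)$, that power is in fact larger than one.

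You instead use the exact identity $\cM[\rho]-\rho=-\sum_j O_j[O_j,\rho]$ together with the Duhamel bound $\|[e^{-2\pi i\xi O},\rho]\|_1\le 2\pi|\xi|\,\|[O,\rho]\|_1$. You therefore pay only the first moment of $\widehat g_\gamma$, which is polynomial in $\gamma^{-1}$. This yields $\|\cM[\rho]-\rho\|_1\le \polylog(\epsilon^{-1})\,\|[O,\rho]\|_1+O(\epsilon)$, which implies the proposition. In \cref{thm:M_noncom} it would also allow the weaker requirement $\|[\widehat O(\tau),\rho_\beta]\|_1\lesssim \epsilon/\polylog(\epsilon^{-1})$ in place of $\epsilon^2$.

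The same Duhamel idea would also repair the paper's route. Since $A_j\le 0$, the identity $[e^{A_j},\rho]=\int_0^1 e^{sA_j}[A_j,\rho]e^{(1-s)A_j}\,ds$ gives $\|[e^{A_j},\rho]\|_1\le\|[A_j,\rho]\|_1$ with no exponential loss.

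There is one arithmetic slip. In fact $\int|\xi|\,|\widehat g_\gamma(\xi)|\,d\xi=(2\sqrt{2\pi}\gamma)^{1/2}/(4\pi^2\gamma^2)=O(\gamma^{-3/2})$, not $O(\gamma^{-1/2})$, because the width $1/\gamma$ enters squared. Your prefactor is therefore $h^{-1}\gamma^{-2}$ rather than $h^{-1}\gamma^{-1}$. This is still polylogarithmic, so your conclusion is unaffected.

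Finally, as you hint, you can apply Duhamel directly to the discrete sum defining $O_j$. This gives $\|[O_j,\rho]\|_1\le 2\pi\sqrt h\,(C^{[N]}_h)^{-1}\bigl(h\sum_\ell|\xi_\ell|\,\widehat g_\gamma(\xi_\ell)\bigr)\|[O,\rho]\|_1$. Combined with $\|O_j\|\le 1$ and summing over all $N$ outcomes, this lets you drop Step 1 and the leakage bound entirely, and it leaves no additive $\epsilon$ term at all. It also makes your closing ``sharper route'' paragraph unnecessary.
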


First, we bound the leakage probability,

\begin{lemma}[Leakage probability]\label{lem:leakage} 
Let $O$ be an observable with $\|O\|\leq 1$.
Given any quantum state $\rho$ and an $\epsilon>0$, 
for the parameter choices of $\gamma$ and $h$ given in~\cref{eqn:param_choice_exp_var},
the probability of observing a value outside $[-2,2]$ in the readout regiters is bounded by $2\epsilon$. That is, 
 \begin{align}
     \left\| \cM(\rho) - \sum_{j:|\omega_j|\leq 2} O_j \rho O_j^\dagger \right\|_1 \leq 2\epsilon. 
 \end{align}
\end{lemma}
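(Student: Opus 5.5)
We need to bound $\bigl\|\sum_{j:|\omega_j|>2} O_j\rho O_j^\dagger\bigr\|_1$, since $\cM(\rho)$ minus the restricted sum is exactly this leakage term. That term is positive semidefinite, so its trace norm equals its trace, $\Tr\bigl[\bigl(\sum_{|\omega_j|>2} O_j^\dagger O_j\bigr)\rho\bigr]$. It therefore suffices to prove the operator bound $\bigl\|\sum_{|\omega_j|>2}O_j^\dagger O_j\bigr\|\le 2\epsilon$. By \cref{lem:Oj}, all $O_j$ are Hermitian functions of the same operator $O$, so we may work in a single eigenspace of $O$ with eigenvalue $E\in[-1,1]$ and bound the scalar $\sum_{|\omega_j|>2}|O_j(E)|^2$.

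The route is to use completeness, $\sum_{j=0}^{N-1}O_j^\dagger O_j = I$, which holds because $\cM$ is a channel: $\cF_s$ is unitary and $\ket{\widehat G}$ is normalized. Then
\[
\sum_{|\omega_j|>2}|O_j(E)|^2 = 1-\sum_{|\omega_j|\le 2}|O_j(E)|^2 .
\]
Next compare $\sum_{|\omega_j|\le 2}|O_j(E)|^2$ with $h\sum_{|\omega_j|\le 2} g_\gamma^2(\omega_j-E)$. \cref{lem:Owg} gives $|O_j-\sqrt h g_\gamma(\omega_j-E)|\le\delta(\gamma,h)$. Expanding $|a|^2-|b|^2$ as in the terms $I_1,I_2$ in the proof of \cref{prop:gqpe-main-1} (now without the $\omega_j$ weight) bounds the discrepancy by $\frac{4}{h}\delta^2+2\sqrt h\,\delta\sum_{|\omega_j|\le 2}g_\gamma(\omega_j-E)$. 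This is $e^{-\Omega(1/(\gamma^2h^2))}$ plus $e^{-\Omega(\gamma^2/h^2)}$ times polynomial factors, hence at most $\epsilon$ under the parameter choices in \cref{eqn:param_choice_exp_var}.

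It remains to show $1-h\sum_{|\omega_j|\le 2}g_\gamma^2(\omega_j-E)\le\epsilon$. Following the first part of the proof of \cref{lem:gaussian-truncation-error}, with integrand $\phi(x)=g_\gamma^2(x-E)$ (a Gaussian density of variance $\gamma^2$), split into two pieces. The first is the Poisson-summation error $|h\sum_k\phi(kh)-1|\le 2\sum_{n\ge1}e^{-2\pi^2\gamma^2n^2/h^2}$. The second is the tail $h\sum_{|kh|>2}\phi(kh)$, which is $O(e^{-1/(2\gamma^2)})$ because $|E|\le1$ means the Gaussian centered at $E$ sits at distance at least $1$ from the cutoff. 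Both are at most $\epsilon/2$ for $\gamma^{-1}=\Theta(\log^{1/2}\epsilon^{-1})$ and $h^{-1}=\Theta(\gamma^{-1}\log^{1/2}(\gamma^2/\epsilon))$.

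Combining gives $\sum_{|\omega_j|>2}|O_j(E)|^2\le 2\epsilon$ uniformly in $E$, hence the operator bound and the lemma. The main obstacle is bookkeeping: the discrepancy bound from \cref{lem:Owg} must stay at most $\epsilon$ after summing over the $O(1/h)$ readout points with $|\omega_j|\le 2$. The factor $1/h$ is only polylogarithmic, so it is absorbed by the exponentially small $\delta$; if this proves delicate, adjust the constants in $h^{-1}$.
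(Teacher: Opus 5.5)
Your proposal is correct and follows essentially the same route as the paper. You use completeness $\sum_j O_j^\dagger O_j = I$ to reduce the leakage trace to $\bigl\|\sum_{|\omega_j|\le 2} O_j^\dagger O_j - I\bigr\|$, compare this truncated sum with $h\sum_{|\omega_j|\le 2} g_\gamma^2(\omega_j - O)$ via \cref{lem:Owg}, and show the latter Riemann sum is within $\epsilon$ of $I$ by a Poisson-summation estimate plus a Gaussian tail bound. Your bookkeeping is slightly tighter than the paper's: you count the $O(1/h)$ readout points with $|\omega_j|\le 2$ rather than using the crude factor $N=h^{-2}$, and you handle the Gaussian centered at an arbitrary, possibly off-grid, eigenvalue $E$ directly instead of shifting to a centered sum. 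These refinements do not change the argument.
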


\begin{proof}[of Lemma \ref{lem:leakage}]
Recall that $O_j$ is Hermitian.
From Lemma \ref{lem:Owg} we have for any $w_j\in[-2,2]$,
 \begin{align}
     \|O_j -  \sqrt{h} g_{\gamma}(w_j-O)\|\leq \delta(\gamma,h):=\delta.
 \end{align}

Recall that in Algorithm \ref{alg:GQPE}, $j$ takes values as $0,1,...,N-1$, thus
the size of the set $|\{j: |w_j|\le 2\}|\leq N$.
We have that  
we have that 
\begin{align}
	&\left\| \sum_{j: |w_j|\leq 2} O^2_j -\sum_{j: |w_j|\leq 2} h g_{\gamma}^2(w_j -O) \right\|  \\&\leq \left\| \sum_{j: |w_j|\leq 2} O_j \left(O_j - \sqrt{h} g_{\gamma}\left(w_j -O\right)\right) \right\|  + \left\| \sum_{j: |w_j|\leq 2}\left(O_j - \sqrt{h} g_{\gamma}(w_j -O)\right)  \sqrt{h} g_{\gamma}\left(\omega_j - O\right)\right\| \label{eq:MOj_1}\\
	&\leq N \delta + N \delta \sqrt{h}\frac{1}{\sqrt{\gamma\sqrt{2\pi}}}\\
	&\leq \epsilon.
\end{align}
 Where in Eq.~(\ref{eq:MOj_1}) we use $\|O_j\|\leq 1$ since $O_j$ is a Kraus operator of a quantum channel.
 Recall that $\|O\|\leq 1$. Note that
 for any scalar $E\in [-1,1]$, we have that
\begin{align}
	\sum_{j:|w_j|\leq 2} h g_{\gamma}^2(w_j-E) \geq  \sum_{j: w_j\in [-1,1]} h g_{\gamma}^2(w_j)  =   \sum_{k: k=-1/h}^{1/h}  h g_{\gamma}^2(kh),
\end{align}
By the Fast convergent Riemann sum, i.e., Lemma \ref{lem:riemann-sum-error} and Lemma \ref{lem:gaussian-tail}, we have that
\begin{align}
	\left|\sum_{k: k=-1/h}^{1/h}  h g_{\gamma}^2(kh) - 1\right| \leq 4 (2\sqrt{2\pi}\gamma)^{\frac{1}{2}}  e^{-4\pi^2 \gamma^2 /h^2} + \frac{1}{\sqrt{\gamma\sqrt{2\pi}}} (h+\gamma^2) e^{-\frac{1}{2\gamma^2}} \leq \epsilon
\end{align}

Thus
	\begin{align}
        \left\| \cM(\rho) - \sum_{j: |\omega_j|\leq 2} O_j \rho O_j^\dagger \right\|_1  = \left\| \sum_{j: |w_j|>2} O_j \rho O_j^\dagger \right\|_1
        = \Tr \left( \sum_{j: |w_j|>2} O_j \rho O_j^\dagger \right)
		\leq \left\| \sum_{j: |w_j| \leq 2 } O_j^2 -I \right\|
		\leq 2\epsilon
	\end{align}
\end{proof}

\begin{lemma}\label{lem:exp_commutator}
	For any square matrix $A,B$ we have that 
	\begin{align}
		\left\|\left[e^{A},B\right] \right\|_1 \leq  e^{ \|A\|}\,\, \left\|\left[A,B  \right]\right\|_1
	\end{align}
\end{lemma}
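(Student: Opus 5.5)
We prove $\|[e^{A},B]\|_1 \le e^{\|A\|}\,\|[A,B]\|_1$ with a Duhamel (interpolation) formula for the commutator of an exponential, followed by Hölder's inequality for Schatten norms. The key identity is
\begin{align}
  \left[e^{A},B\right] = \int_0^1 e^{sA}\,[A,B]\,e^{(1-s)A}\,\d s.
\end{align}
To derive it, set $F(s) := e^{sA} B e^{(1-s)A}$ for $s\in[0,1]$. Then $F(1)=e^{A}B$ and $F(0)=Be^{A}$, so $F(1)-F(0) = [e^{A},B]$. Differentiating with the product rule gives
\begin{align}
  F'(s) = e^{sA}\,A\,B\,e^{(1-s)A} - e^{sA}\,B\,A\,e^{(1-s)A} = e^{sA}[A,B]e^{(1-s)A},
\end{align}
where we used that $A$ commutes with $e^{sA}$ and with $e^{(1-s)A}$. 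Integrating $F'$ over $[0,1]$ yields the identity.

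Next we take the trace norm and move it inside the integral using the triangle inequality for Bochner/Riemann integrals. For each $s$, the Hölder inequality $\|XYZ\|_1 \le \|X\|\,\|Y\|_1\,\|Z\|$ gives
\begin{align}
  \left\|e^{sA}[A,B]e^{(1-s)A}\right\|_1 \le \|e^{sA}\|\,\|[A,B]\|_1\,\|e^{(1-s)A}\|.
\end{align}
The power series gives $\|e^{sA}\|\le e^{s\|A\|}$ and $\|e^{(1-s)A}\| \le e^{(1-s)\|A\|}$, and the product of these is exactly $e^{\|A\|}$. Integrating over $s\in[0,1]$ therefore gives the claimed bound. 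None of this requires $A$ to be Hermitian or normal.

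There is no real obstacle here. The only step needing care is the derivative of $F$, specifically keeping the operators in the correct order. If one prefers to avoid calculus, there is an alternative route: expand $e^{A}=\sum_k A^k/k!$ and use the telescoping identity $[A^k,B]=\sum_{j=0}^{k-1}A^j[A,B]A^{k-1-j}$. Hölder then gives $\|[A^k,B]\|_1\le k\|A\|^{k-1}\|[A,B]\|_1$, and summing over $k$ yields the same bound, since $\sum_k k\|A\|^{k-1}/k! = e^{\|A\|}$.
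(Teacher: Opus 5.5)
Your proof is correct, but your main argument differs from the paper's. The paper uses exactly the route you sketch as an alternative at the end: it Taylor-expands $e^{A}$, applies the telescoping identity $[A^k,B]=\sum_{j=0}^{k-1}A^j[A,B]A^{k-1-j}$, bounds each term by $\|A\|^{k-1}\|[A,B]\|_1$ via $\|MN\|_1\le\|M\|_1\|N\|$, and sums $\sum_{k\ge1}k\|A\|^{k-1}/k!=e^{\|A\|}$. That series argument is purely algebraic and needs no calculus on matrix-valued functions.

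Your Duhamel formula actually proves the sharper estimate $\|[e^{A},B]\|_1\le\bigl(\int_0^1\|e^{sA}\|\,\|e^{(1-s)A}\|\,\mathrm{d}s\bigr)\|[A,B]\|_1$. Only at the last step do you relax this using $\|e^{sA}\|\le e^{s\|A\|}$. The extra information matters in the paper's only use of the lemma, the commutativity proof. There $A_j=-(\omega_j-O)^2/(4\gamma^2)$ is Hermitian and negative semidefinite, so $\|e^{sA_j}\|\le 1$, and your formula gives $\|[e^{A_j},\rho]\|_1\le\|[A_j,\rho]\|_1$ with no exponential prefactor. The series route instead pays $e^{\|A_j\|}=e^{\mathcal{O}(\gamma^{-2})}$. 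With $\gamma^{-1}=\mathcal{O}(\log^{1/2}\epsilon^{-1})$, that is precisely the source of the $\epsilon^{-1}$ factor in the paper's commutativity bound. Your route would therefore reduce that factor to $\polylog(\epsilon^{-1})$.
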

\begin{proof}
	One can check that the following  commutator identity holds.
	\begin{align}
		\left[ A^k,B \right] =\sum_{j=0}^{k-1} A^j [A,B] A^{k-1-j},
	\end{align}
	Consider 
the Taylor expansion of $e^A$ and apply inequality $\|MN\|_1\leq \|M\|_1\|N\|$, we have that 
$$
\left\|\left[e^{A},B\right] \right\|_1 \leq \sum_{k=1}^\infty \frac{1}{k!} \sum_{j=0}^{k-1} \|A\|^j \left\|\left[A,B  \right]\right\|_1  \|A\|^{k-1-j} =  e^{ \|A\|}\,\, \left\|\left[A,B  \right]\right\|_1.
$$
\end{proof}

Now, we are ready to prove~\cref{prop:com}.

\begin{proof}[of the commutativity property] 
		Recall that $O_j$ is Hermitian.  Recall that in Algorithm \ref{alg:GQPE}, $j$ takes values as $0,1,...,N-1$, thus
the size of the set $|\{j: |w_j|\le 2\}|\leq N$.
By Lemma \ref{lem:leakage} we have that  	
  $\|\sum_{j:|w_j|\leq2} O_j^2  -I\|\leq 2\epsilon$, thus
\begin{align}
	\left\|\cM(\rho)- \rho\right\|_1 &=\left\| \sum_{j: |w_j|\leq 2} [O_j,  \rho] O_j\right\|_1 + 4 \epsilon \\
	&\leq N  \max_{j: |w_j|\leq 2} \left\|  [O_j,  \rho] \right\|_1 + 4\epsilon 
\end{align}
where in the last inequality we use the fact that the number of $j$ is bounded by $N$, 
and the norm $\|O_j\|$ is bounded by $1$ since $O_j$ is a Kraus operator of a quantum channel.

By Lemma \ref{lem:Owg}, we have that
  for any $j$ such that $w_j\in [-2,2]$,
	\begin{align}
	&\left\|O_j - \sqrt{h}g_{\gamma}(\omega_j-O)\right\|\leq \delta,
	\end{align}

For $|w_j|\le 2$, we have that $\|w_j  - O\| \leq 3$ since $\|O\|\leq 1$. 
Let $A_j\coloneqq -\frac{(\omega_j-O)^2}{4\gamma^2}$. Then $\|A_j\|\le \frac{9}{4\gamma^2}$. Moreover,
\begin{align}
\|[A_j,\rho]\|_1 &= \frac{1}{4\gamma^2}\,\|[(\omega_j-O)^2,\rho]\|_1 \\
&= \frac{1}{4\gamma^2}\,\| -2\omega_j[O,\rho] + [O^2,\rho]\|_1 \\
&\le \frac{1}{4\gamma^2}\Big(2|\omega_j|\,\|[O,\rho]\|_1 + \|O[O,\rho]+[O,\rho]O\|_1\Big) \\
&\le \frac{1}{4\gamma^2}\Big(2|\omega_j|\,\|[O,\rho]\|_1 + 2\|O\|\,\|[O,\rho]\|_1\Big)
\le \frac{3}{2\gamma^2}\,\|[O,\rho]\|_1.
\end{align}
Thus by Lemma \ref{lem:exp_commutator} we have that

\begin{align}
\left\|\cM(\rho)-\rho\right\|_1	& \leq 4\epsilon+  N \sqrt{h}  \frac{1}{\sqrt{\gamma \sqrt{2\pi}} }\, \max_{j: |w_j|\leq 2} \left\|\,  \left[\exp\left(-\frac{(\omega_j -O)^2}{4\gamma^2}\right),  \rho \right] \,\right\|_1  + N 2 \delta \|\rho\|_1 \\
	&\leq 4\epsilon+ N \sqrt{h}  \frac{1}{4\gamma^2\sqrt{\gamma \sqrt{2\pi}} } \exp(\frac{3}{4\gamma^2}) \left\| \left[ O,\rho \right] \right\|_1  + N 2\delta\\
	&\leq \epsilon^{-1} poly\log (\epsilon^{-1})  \left\| \left[ O,\rho \right] \right\|_1 + 5\epsilon.
\end{align}
where we use the fact that $\gamma^{-1}$ is chosen to be $\cO(\log^{1/2} (\epsilon^{-1}))$ thus $\exp(\frac{3}{4 \gamma^2})\leq \epsilon^{-1}.$

\end{proof}

\subsubsection{Technical lemmas for numerical analysis}

\begin{lemma}[Fast convergent Riemann sum]\label{lem:riemann-sum-error}
    Let $\varphi(x)$ be an integrable function (i.e., $\varphi \in L^1(\mathbb{R})$) with Fourier transform $\widehat{\varphi}(y) \coloneqq \int^\infty_{-\infty} e^{-2\pi i xy}\varphi(x)\d x$.
    Then, for $N \in \mathbb{Z}^+$ and $h > 0$, we have
    \begin{equation}\label{eqn:riemann-sum-error}
        \left|h\sum^{N/2-1}_{k=-N/2}\varphi(kh) - \int^\infty_{-\infty}\varphi(x)~\d x\right| \le \sum^{\infty}_{n=-\infty,n\neq 0} |\widehat{\varphi}(n/h)| + h \sum_{|k| \ge N/2}|\varphi(kh)|.
    \end{equation}
\end{lemma}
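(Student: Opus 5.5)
This is the classical Poisson summation argument with a truncation split. We may assume the right-hand side of \cref{eqn:riemann-sum-error} is finite, since otherwise there is nothing to prove. Then $\sum_{k}|\varphi(kh)|<\infty$ and $\sum_{n}|\widehat{\varphi}(n/h)|<\infty$.

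First, split the truncated sum off from the full lattice sum:
\begin{align*}
\left|h\sum_{k=-N/2}^{N/2-1}\varphi(kh)-\int\varphi\right| \le \left|h\sum_{k\in\mathbb{Z}}\varphi(kh)-\int\varphi\right| + h\sum_{k\notin[-N/2,N/2-1]}|\varphi(kh)|.
\end{align*}
The discarded indices are $k\ge N/2$ and $k\le -N/2-1$. These all satisfy $|k|\ge N/2$, so the second term is bounded by $h\sum_{|k|\ge N/2}|\varphi(kh)|$, which is the second term in \cref{eqn:riemann-sum-error}.

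Second, bound the full lattice sum using the Poisson summation formula
\begin{align*}
h\sum_{k\in\mathbb{Z}}\varphi(kh)=\sum_{n\in\mathbb{Z}}\widehat{\varphi}(n/h).
\end{align*}
The $n=0$ term is $\widehat{\varphi}(0)=\int\varphi$. The remaining terms are therefore bounded by $\sum_{n\neq0}|\widehat{\varphi}(n/h)|$ via the triangle inequality.

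To justify Poisson summation, consider the $h$-periodization $P(x)=h\sum_k\varphi(x+kh)$. Its Fourier coefficients are $\widehat{\varphi}(n/h)$; this follows by integrating over one period and using $\varphi\in L^1$ together with Fubini. Because these coefficients are absolutely summable, the Fourier series converges uniformly to a continuous function that equals $P$ almost everywhere. It remains to evaluate at $x=0$.

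This last step is the only delicate one. Pointwise equality at $0$ requires some regularity, for instance $\varphi$ continuous with $\sum_k\varphi(x+kh)$ converging uniformly near $0$. In every application in this paper, $\varphi$ is a Gaussian times a polynomial or a phase, so it is Schwartz and Poisson summation holds classically. I would state this regularity assumption explicitly, or assume $\varphi$ is continuous with $\sum_k\sup_{|x|\le h}|\varphi(x+kh)|<\infty$, and then invoke the standard theorem.
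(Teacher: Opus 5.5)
Your argument is correct and is the same as the paper's: apply Poisson summation to the full lattice sum, identify the $n=0$ term with $\int\varphi$, and use the triangle inequality to account for the nonzero frequencies and the discarded indices $|k|\ge N/2$. Your caveat is well taken: $\varphi\in L^1$ alone does not justify pointwise Poisson summation, and the paper simply cites the formula without adding a hypothesis. The gap is harmless here, since every $\varphi$ used in the paper is a Schwartz function.
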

\begin{proof}
    First, by the Poisson summation formula~\cite[Theorem~4.2.2]{pinsky2023introduction}, we have
    \begin{equation}
        h\sum_{k\in \mathbb{Z}} \varphi(kh) = \sum_{n \in \mathbb{Z}} \widehat{\varphi}(n/h).
    \end{equation}
    Note that $\widehat{\varphi}(0) = \int^\infty_{-\infty}\varphi(x)~\d x$. \cref{eqn:riemann-sum-error} follows from the triangle inequality.
\end{proof}

\begin{lemma}[Gaussian sum]\label{lem:gaussian-tail}
    Let $N \in \mathbb{Z}^+$ be an even integer and $A>0$ be a positive scalar. Then, we have
    \begin{equation}
        \sum_{k \ge N/2}e^{-Ak^2} \le \left(1+\frac{1}{AN}\right) e^{-AN^2/4}.
    \end{equation}
\end{lemma}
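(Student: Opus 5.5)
We need to prove the Gaussian sum lemma: for even $N$ and $A>0$,
\[
\sum_{k\ge N/2} e^{-Ak^2}\le \Bigl(1+\tfrac{1}{AN}\Bigr)e^{-AN^2/4}.
\]
The plan is to isolate the first term and bound the remaining tail by an integral.

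First, write $M:=N/2$, which is a positive integer because $N$ is even. Split the sum as
\[
\sum_{k\ge M} e^{-Ak^2} = e^{-AM^2} + \sum_{k\ge M+1} e^{-Ak^2}.
\]
The first term is exactly $e^{-AN^2/4}$, which supplies the ``$1$'' in the prefactor.

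Next, bound the remaining tail by an integral. The function $x\mapsto e^{-Ax^2}$ is decreasing on $[0,\infty)$, so for each $k\ge M+1$ we have $e^{-Ak^2}\le \int_{k-1}^{k} e^{-Ax^2}\,\d x$. Summing these comparisons gives
\[
\sum_{k\ge M+1} e^{-Ak^2}\le \int_M^\infty e^{-Ax^2}\,\d x .
\]

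Then apply the standard Gaussian tail estimate. On $[M,\infty)$ we have $x/M\ge 1$, hence
\[
\int_M^\infty e^{-Ax^2}\,\d x\le \int_M^\infty \frac{x}{M}e^{-Ax^2}\,\d x=\frac{1}{2AM}e^{-AM^2}.
\]
Substituting $M=N/2$, the right-hand side equals $\frac{1}{AN}e^{-AN^2/4}$.

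Adding the first term and the tail bound yields the claim. Every step is routine; the only points needing care are the direction of the monotone integral comparison and the requirement that $M=N/2$ be an integer, which is why $N$ is assumed even.
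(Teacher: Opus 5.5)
Your proposal is correct and follows essentially the same route as the paper: isolate the $k=N/2$ term, bound the remaining tail by $\int_{N/2}^\infty e^{-Ax^2}\,\d x$ via monotonicity, and then insert the factor $x/(N/2)\ge 1$ to evaluate the integral exactly as $\frac{1}{AN}e^{-AN^2/4}$.
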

\begin{proof}
    This summation can be upper-bounded by a Gaussian integral:
    \begin{align*}
        \sum_{k \ge N/2}e^{-Ak^2} &= e^{-AN^2/4} + \sum^\infty_{k = N/2+1} e^{-Ak^2} \le e^{-AN^2/4} + \int^\infty_{N/2} e^{-Ax^2}~\d x\\ 
        &\le e^{-AN^2/4} + \frac{1}{N/2}\int^\infty_{N/2} x e^{-Ax^2}~\d x \le \left(1 + \frac{1}{AN}\right)e^{-AN^2/4}.
    \end{align*}
\end{proof}

\bibliographystyle{ieeetr}
\bibliography{ref.bib}
\end{document}